\long\def\@makecaption#1#2{%
	\vskip\abovecaptionskip\footnotesize
	\sbox\@tempboxa{#1. #2}%
	\ifdim \wd\@tempboxa >\hsize
	#1. #2\par
	\else
	\global \@minipagefalse
	\hb@xt@\hsize{\hfil\box\@tempboxa\hfil}%
	\fi
	\vskip\belowcaptionskip}
\newtheorem{theorem}{Theorem}
\newtheorem{lemma}[theorem]{Lemma}
\newtheorem{corollary}[theorem]{Corollary}
\newtheorem{proposition}[theorem]{Proposition}
{\theoremstyle{definition}
	
	\newtheorem{remark}[theorem]{Remark}
	
}
\newcommand{\todo}[1][\null]{\ensuremath{\clubsuit}}
\newcommand{\noprint}[1]{}
\newcommand{\lsemioplus}{\mathbin{\mbox{$\lefteqn{\hspace{.77ex}\rule{.4pt}{1.2ex}}{\in}$}}}
\def\g{\mathfrak{g}}
\def\h{\mathfrak{h}}
\DeclareMathOperator{\Inn}{Inn}
\DeclareMathOperator{\Nor}{Nor}
\DeclareMathOperator{\Ad}{Ad}
\DeclareMathOperator{\ad}{ad}
\begin{document}

\par\noindent {\LARGE\bf
Subalgebras of Lie algebras. Example of $\mathfrak{sl}_3(\mathbb R)$ revisited
\par}

\vspace{4.5mm}\par\noindent{\large
\large Yevhenii Yu. Chapovskyi$^\dag$,
Serhii D. Koval$^{\dag\ddag}$
and Olha Zhur$^\S$
}

\vspace{5.5mm}\par\noindent{\it\small
$^\dag$\,Institute of Mathematics of NAS of Ukraine, 3 Tereshchenkivska Str., 01024 Kyiv, Ukraine
\par}	

\vspace{2mm}\par\noindent{\it\small
$^\ddag$\,Department of Mathematics and Statistics, Memorial University of Newfoundland,\\
$\phantom{^\ddag}$\,St.\ John's (NL) A1C 5S7, Canada
\par}

\vspace{2mm}\par\noindent{\it\small
$^\S$\,Faculty of Mechanics and Mathematics, National Taras Shevchenko University of Kyiv,\\
$\phantom{^\S}$\,2 Academician Glushkov Ave., 03127 Kyiv, Ukraine
\par}

\vspace{4mm}\par\noindent
E-mails:
e.chapovskyi@imath.kiev.ua,
skoval@mun.ca,
oliazhur@knu.ua


\vspace{8mm}\par\noindent\hspace*{10mm}\parbox{140mm}{\small
Revisiting the results by Winternitz
[{\it Symmetry in physics}, {\it CRM Proc. Lecture Notes} {\bf 34}, American Mathematical Society, Providence, RI, 2004, pp.~215--227], 
we thoroughly refine his classification of Lie subalgebras
of the real order-three special linear Lie algebra 
and thus present the correct version of this classification for the first time. 
A similar classification over the complex numbers is also carried out. 
We follow the general approach by Patera, Winternitz and Zassenhaus 
but in addition enhance it and rigorously prove its theoretical basis 
for the required specific cases of classifying subalgebras of real or complex finite-dimensional Lie algebras.
As a byproduct, we first construct complete lists of inequivalent subalgebras of the rank-two affine Lie algebra 
over both the real and complex fields.
}\par\vspace{4mm}

\vspace{3.5mm}
\par\noindent{\large\it
\hspace*{\stretch{1}}To the memory of Ji\v r\'i  Patera and Pavel Winternitz
\par}

\noprint{
Keywords:
Lie algebras;
Lie subalgebras;
representations of Lie algebras;
special linear Lie algebra;
affine Lie algebra

MSC:   81R05, 22E70, 22E60
81Rxx	Groups and algebras in quantum theory
81R05   Finite-dimensional groups and algebras motivated by physics and their representations
22Exx   Lie groups
22E70   Applications of Lie groups to the sciences; explicit representations
22E60   Lie algebras of Lie groups
}

\section{Introduction}~\label{sec:Introduction}

The problem of classifying Lie subalgebras of real and complex Lie algebras 
arises in many fields of mathematics and its applications.
In particular, for a system of (partial) differential equations, listing inequivalent Lie subalgebras of
its maximal Lie invariance algebra is used for the classification of Lie reductions of the system and in turn
can be applied to constructing its (inequivalent) explicit exact solutions,
see~\cite[Chapter 3]{olve1993A} and \cite{blum2010A,hydo2000A,popo2024a,wint2003a} for details and examples.
This is why classifications of subalgebras of Lie algebras have aroused 
a lively interest among researchers from the field of symmetry analysis of differential equations.
Such classifications are also efficient tools in theoretical physics and in the theory of integrable systems, 
e.g.,~\cite{camp2023a,pate1975a}.
At the same time, they themselves still remain interesting algebraic problems.

When studying continuous subgroups of the fundamental groups of physics
in the series of papers \cite{burd1978a,pate1976b,pate1976c,pate1977b,pate1976a,pate1975a,pate1975b},
Patera, Winternitz, Zassenhaus and others developed the general methods for classifying Lie subalgebras
of finite-dimensional Lie algebras with nontrivial ideals
and of direct products of Lie algebras. 
The latter method was named the {\it Lie--Goursat method} therein.
Although these methods became a reference point for carrying out such classifications,
to apply them it is usually necessary to examine the properties 
of the Lie algebra under consideration, its representations and the explicit form of its automorphism group \cite{pate1977a,wint2004a}.
This makes the subalgebra classification an inherently {\it ad hoc} problem,
which involves cumbersome and complex computations.
Due to the above reasons, 
the subalgebra classification problem was thoroughly and completely solved, 
to the best of our knowledge, only for a small number of low-dimensional Lie algebras
(over the fields of real and complex numbers),
see, in particular,~\cite{pate1977a}.

In \cite{wint2003a,wint2004a}, the classification of subalgebras of real
order-three special linear Lie algebra~$\mathfrak{sl}_3(\mathbb R)$ was carried out
using the general approach from~\cite{pate1975a}
reinforced by the properties of the defining representation of~$\mathfrak{sl}_3(\mathbb R)$.
This approach made it possible to reduce the classification of subalgebras of the simple Lie algebra~$\mathfrak{sl}_3(\mathbb R)$
to that of the real rank-two affine Lie algebra
$\mathfrak{aff}_2(\mathbb R)=\mathfrak{gl}_2(\mathbb R)\ltimes\mathbb R^2$.
The obtained classification is notably distinct in the sense that it uses brilliant and elegant ideas
and, as our analysis reveal, the obtained results are almost accurate.
Nevertheless, the final classification list,
which is presented in~\cite[Table~1]{wint2004a}, contains several misprints, errors,
redundant and missed subalgebras.
However, in view of the diverse variety of applications of this classification,
it is essential to have the correct list of subalgebras of the algebra $\mathfrak{sl}_3(\mathbb R)$.
This is why we comprehensively revisit this complicated problem.
It is also worth pointing out that the presentation of a complete list of inequivalent subalgebras of~$\mathfrak{aff}_2(\mathbb R)$
in~\cite{wint2004a} was omitted, which makes the main result difficult to reproduce or double-check.
In fact, 
the lists of ``twisted'' and ``nontwisted'' subalgebras of~$\mathfrak{aff}_2(\mathbb R)$
were presented in \cite[Section~3.3]{wint2004a} but the classification was not completed.
Moreover, the validity of these lists is still questionable. 
The reason for this doubt is that for classifying subalgebras 
of the algebra~$\mathfrak{aff}_2(\mathbb R)$ as the semidirect product $\mathfrak{gl}_2(\mathbb R)\ltimes\mathbb R^2$,
the method from~\cite{pate1975a} requires a correct list of inequivalent subalgebras of~$\mathfrak{gl}_2(\mathbb R)$,
but the one in~\cite[Eq.~(3.11)]{wint2004a} contains a misprint.
Although these disadvantages are significant,
it is possible to overcome them using~\cite{wint2004a} as the source of excellent ideas and approaches.

It is quite noteworthy that the classification of subalgebras of the Lie algebra~$\mathfrak{sl}_3(\mathbb C)$
was only recently presented by Douglas and Repka~\cite{doug2016a}.
This classification was carried out in a straightforward manner,
yet yielded successful results.
Nonetheless, thoroughly revising this classification presents a significant challenge.
By using the methods from~\cite{wint2004a}, which we enhance in this paper,
we classify subalgebras of the algebra~$\mathfrak{sl}_3(\mathbb C)$ and then compare the obtained results
with those presented in~\cite{doug2016a}.
As our comparative analysis shows, the number of discrepancies in~\cite{doug2016a} is remarkably minimal,
given the complex and cumbersome nature of the research.

The classification of subalgebras of $\mathfrak{sl}_3(\mathbb F)$, where $\mathbb F=\mathbb R$ or $\mathbb C$,
is important for many open problems in algebra, mathematical physics and group analysis of differential equations.
For instance, as shown in~\cite{boyk2021a},
the group classification problem for normal linear systems of second-order ordinary differential equations with $n$ dependent variables
can be easily solved once the classification of subalgebras of~$\mathfrak{sl}_n(\mathbb F)$ is known.
Therefore, having a complete list of the subalgebras of~$\mathfrak{sl}_3(\mathbb F)$ enables us to solve this problem for $n=3$.
The same problem for $n=2$ was solved in~\cite{boyk2021a},
based on the well-known classification of subalgebras of~$\mathfrak{sl}_2(\mathbb F)$.
Moreover, many systems of linear and nonlinear partial differential equations
admit Lie symmetry algebras that are isomorphic to the algebra $\mathfrak{sl}_n(\mathbb F)$.
Such systems are often constituted by equations of the same structure with coupled dependent variables,
e.g., as the systems of reaction--diffusion equations~\cite{niki2001a}.
Any (in)homogeneous (real or complex) Monge--Amp\'ere equation with two independent variables 
possesses a Lie invariance algebra isomorphic to~$\mathfrak{sl}_3(\mathbb F)$~\cite{khab1990a}.
We can also refer to the equation of \c{T}i\c{t}eica surfaces~\cite{udri1999a}
as one more example of a single partial differential equation 
whose Lie invariance algebra is isomorphic to~$\mathfrak{sl}_3(\mathbb R)$.

Another important open problem that relies on listing inequivalent subalgebras of the algebra~$\mathfrak{sl}_3(\mathbb F)$
is the classification of its realizations, i.e., its representations as Lie algebras of vector fields,
with a view towards solving the inverse group classification problem for this algebra.
This inverse problem consists in constructing systems of differential equations
that admit Lie invariance algebras isomorphic to~$\mathfrak{sl}_3(\mathbb F)$,
see, e.g.,~\cite{popo2012a}.
The subalgebras of $\mathfrak{sl}_3(\mathbb F)$ can be also useful for constructing
superintegrable systems and algebraic Hamiltonians \cite{camp2023a,marq2023a}.

Among the open problems in algebra, the classification of nilpotent Lie algebras over a field~$\mathbb F$ occupies a significant place.
The subalgebras of $\mathfrak{sl}_3(\mathbb F)$ also play a specific, but at the same time
important role in solving this problem for Lie algebras of low dimension,
namely, the classification of inequivalent maximal abelian nilpotent subalgebras of $\mathfrak{sl}_3(\mathbb F)$
is used for classifying five-dimensional nilpotent Lie algebras, see~\cite[Section~8.1]{snob2014A}.

The study of specific (graded) contractions of~$\mathfrak{sl}_3(\mathbb C)$ was initiated in~\cite{hriv2006a}.
The usual contractions and degenerations of~$\mathfrak{sl}_3(\mathbb F)$ are also of interest and deserve investigation.
It is clear that given a list of inequivalent subalgebras of the algebra~$\mathfrak{sl}_3(\mathbb F)$, 
one can straightforwardly construct and classify its In\"on\"u--Wigner contractions 
following, for example,~\cite{inon1953a,inon1954a,nest2006a} and references therein.

In fact, there were attempts to apply the list of subalgebras of $\mathfrak{sl}_3(\mathbb R)$ from~\cite{wint2004a}
for studying the local limits of connected subgroups of the group~${\rm SL}_3(\mathbb R)$~\cite{laza2021a}.
This gives us an additional reason to revisit the results of~\cite{wint2004a}.

\looseness=-1
The purpose of this paper is to correct and refine the classification of subalgebras of
the real order-three special linear Lie algebra~$\mathfrak{sl}_3(\mathbb R)$ following~\cite{wint2004a}.
The structure of the paper is as follows.
In Section~\ref{sec:LieSubalgebras}, we revisit classical approaches
for the classification of subalgebras of real and complex Lie algebras,
suggest novel perspectives on them and provide them with a rigorous theoretical framework.
Section~\ref{sec:aff2Classification} is devoted to the classification 
of subalgebras of the affine Lie algebra~$\mathfrak{aff}_2(\mathbb R)$.
To the best of our knowledge, such an exhaustive classification has never been presented in the literature,
cf.\ \cite[Table 1]{poch2017a}, where only the ``appropriate'' subalgebras were classified with respect to a ``weaker'' equivalence
than that generated by the action of the group~${\rm Aff}_2(\mathbb R)$,
and \cite[Section~3.3]{wint2004a}, where a complete list of subalgebras of~$\mathfrak{aff}_2(\mathbb R)$ was omitted.
In Section~\ref{sec:sl3Classification}, we carry out the classification of subalgebras of~$\mathfrak{sl}_3(\mathbb R)$,
which essentially relies on the classification of subalgebras of~$\mathfrak{aff}_2(\mathbb R)$.
After listing the subalgebras of~$\mathfrak{sl}_3(\mathbb R)$ in Theorem~\ref{thm:SubalgebrasOfSL3},
we compare the obtained list with that in~\cite[Table~1]{wint2004a} in  Section~\ref{sec:ClassCompare}, 
Remark~\ref{rem:DiffBetwClassificationsR} and Table~\ref{tab:ListsCompare}.
Our analysis reveals that Winternitz's classification has two incorrect families of subalgebras,
one redundant single subalgebra, one omitted subalgebra and two subalgebras with misprints.
In the same way, in Section~\ref{sec:sl3CClassification}, we carry out
the classification of subalgebras of the Lie algebra~$\mathfrak{sl}_3(\mathbb C)$
and then in Section~\ref{sec:ClassCompare2} we compare the results of the classification
that are given in Theorem~\ref{thm:SubalgebrasOfSL3C} with those presented in~\cite[Table~1]{doug2016a}.
The results of the paper are summarized in Section~\ref{sec:Conclusion}.

Throughout the paper, all Lie algebras and Lie groups are assumed to be finite-dimensional
and the underlying field is $\mathbb R$ or $\mathbb C$.

\section{Lie algebras and classification of their subalgebras}\label{sec:LieSubalgebras}

Let $\g$ be a Lie algebra
and $G$ be a connected Lie group corresponding to~$\g$.
The algebra~$\g$ is identified with the algebra of
tangent vectors at the origin of the group $G$ with the standard commutator. 
The group~$G$ acts on the algebra~$\g$ by the adjoint representation,
\[
\mathop{\rm Ad}\colon G\to {\rm Aut}(\g),
\quad
g\mapsto\mathop{{\rm Ad}_g}.
\]
Here $\mathop{{\rm Ad}_g}$ denotes the derivative at the identity element $e\in G$ of the group conjugation $c_g\colon h\mapsto ghg^{-1}$,
$\mathop{{\rm Ad}_g}:=(\mathrm dc_g)_e\colon \g\to\g$,
where $\mathrm d$ is the differential and $g,h\in G$.
Since the group $G$ is connected, this action coincides (up to the trivial action of the center of~$G$)
with the action
of the inner automorphism group ${\rm Inn}(\g)$ on the Lie algebra~$\g$.

Two subalgebras $\mathfrak s_1$ and $\mathfrak s_2$ of the algebra~$\g$ are called {\it conjugate}
(or {\it $G$-equivalent}) if there exists an element $g\in G$ such that $\mathop{{\rm Ad}_g}\mathfrak s_1=\mathfrak s_2$.
The problem of subalgebra classification consists in determining a complete list of representatives of
the conjugacy classes of subalgebras of the algebra~$\g$ up to the action of the group $G$.
In other words, considering the action of the group~$G$ on the set of subalgebras of the algebra $\g$,
which is induced by the adjoint representation,
construct a list of canonical representatives (which are of the simplest possible form) of the orbits of this action.%
\footnote{
It is natural that this question can be formulated in a more general setting:
given a Lie algebra $\g$ over a~field~$\mathbb F$,
construct a complete and irredundant list of canonical representatives of the orbits of the subalgebras of the algebra $\g$
under the action of its inner automorphism group~${\rm Inn}(\g)$.%
}

In view of Levi--Malcev theorem, $\g$ splits over its radical~$\mathfrak r$,
$\g=\mathfrak f\ltimes\mathfrak r$,
where $\mathfrak f$ is a Levi subalgebra of $\g$,
which is complementary to $\mathfrak r$ in $\g$.
Algebra $\g$ can take one of the following forms:
\begin{enumerate}\itemsep=0ex
\item[(1)]
$\g$ is simple;
\item[(2)]
$\g$ is a direct product of its subalgebras, $\g=\g_1\times\g_2$;
\item[(3)]
$\g$ is a semidirect product of its subalgebras, $\g=\g_1\ltimes\g_2$.
\end{enumerate}

Cases 1 and 2 include the situation in which~$\g$ is a semisimple Lie algebra, i.e., $\mathfrak r=\{0\}$.
Moreover, case 2 takes place when the corresponding action of~$\mathfrak f$ on~$\mathfrak r$
in the semidirect decomposition~$\mathfrak f\ltimes\mathfrak r$ is trivial.
It is also evident that case 3 occurs when both the factors~$\mathfrak f$ and~$\mathfrak r$ in the Levi decomposition are nontrivial,
but it also covers a possibility when~$\g$ is itself a solvable Lie algebra
(in this case, the algebra~$\g$ has a codimension-one ideal).

The approaches for classifying Lie subalgebras of a Lie algebra~$\g$ depend on the structure of~$\g$,
more specifically, they depend on which of the above cases takes place.
This is why we revisit the approaches for the classification of subalgebras of Lie algebras
over the fields of real or complex numbers
following~\cite{wint2004a}, addressing each of the possible cases separately.
In fact, we restrict our consideration of case~3 to the semidirect products of Lie algebras~$\g_1\ltimes\g_2$,
where the ideal~$\g_2$ is an abelian Lie algebra, for the sake of clarity and readability.
This assumption is sufficient for classifying subalgebras of~$\mathfrak{sl}_3(\mathbb R)$,
which is the primary objective of the paper.
The general case is discussed in~\cite{wint2004a}.

\subsection{Subalgebras of simple Lie algebras}\label{subsec:Simple}

The idea for classifying subalgebras of a simple Lie algebra~$\g$ 
consists in finding all its maximal subalgebras~$\mathfrak m$
and then proceeding with the classification of subalgebras for each~$\mathfrak m$
with respect to ${\rm Inn}(\mathfrak m)$-equivalence.
This can be done using one of the methods described in this section
depending on the structure of $\mathfrak m$.
The task of finding all maximal subalgebras of a Lie algebra
is within the domain of representation theory, and we elaborate on the main approach at the end of this subsection.

After listing the ${\rm Inn}(\mathfrak m)$-inequivalent subalgebras of~$\mathfrak m$,
the obtained lists should be combined modulo the $G$-equivalence.
The following proposition shows the general idea of how to do this efficiently.

\begin{proposition}\label{prop:ListMerging}
Let $\mathfrak m\subset\g$ be a Lie subalgebra and $M\subset G$ be the corresponding connected Lie subgroup.
Choose some subset $C\subset G$ such that $MC = G$.
Then the Lie subalgebras $\h_1 \subset \mathfrak m$ and $\h_2\subset\g$
are conjugate if and only if there exists an element $g \in C$ such that
$\Ad_g\h_2\subset\mathfrak m$ and moreover $\Ad_g\h_2$
is equivalent to $\h_1$ up to $\Inn(\mathfrak m)$-equivalence.
\end{proposition}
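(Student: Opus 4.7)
The plan is to prove both directions of the biconditional directly from the definitions. The only substantive ingredient is the standard fact that if $M\subset G$ is the connected Lie subgroup with Lie algebra~$\mathfrak m$, then $\Ad_m$ preserves $\mathfrak m$ for every $m\in M$ and the restriction map $M\to\Inn(\mathfrak m)$, $m\mapsto\Ad_m|_{\mathfrak m}$, is surjective. I would record this at the outset, the justification being that every element of a connected Lie group is a finite product of exponentials, so that $\Ad_{\exp X}|_{\mathfrak m}=\exp(\ad_{\mathfrak m}X)\in\Inn(\mathfrak m)$ for $X\in\mathfrak m$, and conversely every element of $\Inn(\mathfrak m)$ is of this form by definition.

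For the backward direction, assume $g\in C$ satisfies $\Ad_g\h_2\subset\mathfrak m$ together with $\phi(\Ad_g\h_2)=\h_1$ for some $\phi\in\Inn(\mathfrak m)$. Using the surjectivity observed above, I pick $m\in M$ with $\Ad_m|_{\mathfrak m}=\phi$. Then $\Ad_{mg}\h_2=\h_1$, so $\h_1$ and $\h_2$ are $G$-conjugate via the element $mg\in G$.

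For the forward direction, assume $\Ad_{g'}\h_2=\h_1$ for some $g'\in G$. Invoking the hypothesis $MC=G$, I decompose $g'=mg$ with $m\in M$ and $g\in C$, which gives $\Ad_m\Ad_g\h_2=\h_1$, i.e., $\Ad_g\h_2=\Ad_{m^{-1}}\h_1$. Because $\h_1\subset\mathfrak m$ and $\Ad_{m^{-1}}$ preserves $\mathfrak m$, the right-hand side lies in $\mathfrak m$, hence $\Ad_g\h_2\subset\mathfrak m$. The restriction $\Ad_{m^{-1}}|_{\mathfrak m}\in\Inn(\mathfrak m)$ then exhibits the required $\Inn(\mathfrak m)$-equivalence between $\Ad_g\h_2$ and $\h_1$.

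There is no real obstacle: the argument reduces to bookkeeping with adjoint actions, once the $MC=G$ factorization is applied. The only point that requires attention is ensuring that $M$ really is the connected Lie subgroup corresponding to $\mathfrak m$, so that the identification $\Ad_M|_{\mathfrak m}=\Inn(\mathfrak m)$ is valid in both directions; this fits the standing convention of the paper that $G$ is connected, and I would state it explicitly to keep the proof self-contained.
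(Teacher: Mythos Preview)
Your proof is correct. The paper actually states this proposition without proof, treating it as sufficiently elementary; your argument fills in exactly the expected details, namely the factorization $g'=mg$ via $MC=G$ together with the identification of $\Ad_M|_{\mathfrak m}$ with $\Inn(\mathfrak m)$ for the connected subgroup~$M$.
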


Proposition~\ref{prop:ListMerging} simplifies the problem of combining lists as follows.
For each maximal Lie subalgebra $\mathfrak m$ and a Lie subalgebra $\h\subset\g$
suspected to be conjugate to a Lie subalgebra~of~$\mathfrak m$,
find all elements $g\in C$ such that $\Ad_g\h\subset\mathfrak m$.
For all such Lie subalgebras $\Ad_g\h\subset\mathfrak m$ 
find the corresponding representatives in the list of inequivalent subalgebras
of $\mathfrak m$ up to the $\Inn(\mathfrak m)$-equivalence.
The subalgebras obtained in this way are the only representatives of Lie subalgebras of~$\mathfrak m$
that are equivalent to~$\h$ modulo the $\Inn(\g)$-equivalence.

\begin{remark}\label{rem:ListMerging}
To find a suitable subset~$C$ of~$G$, it is natural to consider
the exponent of some complement subspace $V\subset\g$ to the Lie subalgebra~$\mathfrak m$.
From the notion of canonical coordinate systems on a Lie group, it follows
that $C$ can be chosen as at most countable union of subsets of the form
$\exp(V)g$ for some elements $g\in G$.
For compact Lie groups, this union can be chosen to be finite.
The explicit form of such subsets~$C$ should be found
on a case-by-case basis, depending on the structure of the Lie group~$G$.
\end{remark}

To find maximal subalgebras of the Lie algebra $\g$,
consider a faithful irreducible finite-dimensional representation $\rho\colon\g\to\mathfrak{gl}(V)$.
It is convenient to choose~$V$ of the least possible dimension,
e.g., for the classical Lie algebras choose their defining representations.
Fixing the representation $\rho$, any subalgebra of~$\g$ is either {\it irreducibly} or {\it reducibly
embedded} in~$\g$ with respect to~$\rho$. 


A subalgebra $\mathfrak s$ of $\g$ is called irreducibly embedded if $\rho(\mathfrak s)$
has no proper invariant subspace of~$V$.
Then in view of~\cite[Proposition~5, p.~56]{bour1975LiePart1},
the algebra $\mathfrak s$ is a reductive Lie algebra,
i.e., a simple or a semisimple one, or a direct sum of a semisimple and an abelian Lie algebras.
Finding such subalgebras becomes a question of representation theory,
since $\mathfrak s$ has a faithful representation of dimension $\dim V$.

A subalgebra $\mathfrak s$ of $\g$ is called reducibly embedded
if $\rho(\mathfrak s)$ has a proper invariant subspace $V_0\subset V$.
In this case, it is sufficient  to classify such subspaces $V_0$ of $V$
and find the subalgebra~$\mathfrak s$ of~$\g$ such that $\rho(\mathfrak s)$ leaves $V_0$ invariant.
This is again a representation theory problem.
In particular, if $\g$ is $\mathfrak{sl}_n(\mathbb R)$ or $\mathfrak{sl}_n(\mathbb C)$,
then $V_0$ is completely characterized by its dimension.
If $\g$ is an orthogonal or a symplectic Lie algebra,
then $V_0$ is determined by its dimension and signature.

\subsection{Subalgebras of direct products} \label{subsec:Direct}
Let $G_1$, $G_2$ be connected Lie groups and $\g_1$, $\g_2$ be the corresponding Lie algebras,
$G=G_1\times G_2$,
$\g=\g_1\times\g_2$
and $\pi_i\colon\g\to\g_i$, $i=1,2$, are the natural projections.
For a subalgebra $\mathfrak h$ of $\mathfrak g$, we denote by ${\rm Nor}(\mathfrak h,G)$
the normalizer subgroup of $\mathfrak h$ in $G$.


\begin{proposition} \label{sub1}
Consider the map~$\Phi$ that assigns, to each Lie subalgebra $\h\subset\g$,
the tuple of objects $(\h_1,\h_2,I,\alpha)$,
where $\h_i\subset \g_i$, $i=1,2$, are Lie subalgebras,
$I$ is an ideal of the Lie subalgebra~$\h_2$,
and~$\alpha\colon\h_1\to\h_2/I$ is a Lie algebra epimorphism.
Specifically, $\Phi$ is given by the correspondence
\begin{gather*}
\Phi\colon\mathfrak h\longmapsto(\h_1,\h_2,I,\alpha)\quad\mbox{with}\quad
\h_i:=\pi_i(\h),\quad I:=\h\cap\g_2\quad\mbox{and}
\\
\qquad
\alpha \colon \h_1 \ni x_1 \longmapsto
\{x_2 \in \h_2 \mid (x_1, x_2) \in \h\} \in \h_2 / I.
\end{gather*}
There exists the inverse $\Phi^{-1}$, which maps a quadruple $(\h_1, \h_2, I, \alpha)$
to the Lie subalgebra $\h \subset \g$ as follows:
\[
\Phi^{-1}\colon(\h_1, \h_2, I, \alpha)\longmapsto
\h:=\{(x_1, x_2) \mid x_1 \in \h_1,\; x_2 \in \alpha(x_1)\}.
\]
\end{proposition}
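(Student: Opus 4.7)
The plan is to establish that $\Phi$ is well defined, that the proposed $\Phi^{-1}$ indeed produces a Lie subalgebra, and finally that the two maps are mutually inverse; all the work is ultimately routine but it is best organized into three clean verification stages.

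First I would check that $\Phi$ sends each Lie subalgebra $\h\subset\g$ to an admissible quadruple. The subalgebras $\h_1,\h_2$ are automatic since $\pi_1,\pi_2$ are Lie algebra homomorphisms, and $I=\h\cap\g_2$ is trivially a subalgebra (identifying $\g_2$ with $\{0\}\times\g_2\subset\g$). To see that $I$ is an ideal of $\h_2$, I would pick $y\in\h_2$ together with any $x\in\h_1$ satisfying $(x,y)\in\h$, and any $z\in I$, so $(0,z)\in\h$; then
\[
[(x,y),(0,z)]=(0,[y,z])\in\h,
\]
which forces $[y,z]\in I$. For the map $\alpha$, the set $\{x_2\in\h_2\mid(x_1,x_2)\in\h\}$ attached to $x_1\in\h_1$ is nonempty by definition of $\h_1=\pi_1(\h)$, and if $(x_1,y),(x_1,y')\in\h$ then $(0,y-y')\in\h$, so $y-y'\in I$; hence the set is a single $I$-coset in~$\h_2$. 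Linearity of $\alpha$ is clear, while if $\alpha(x_1)=y+I$ and $\alpha(x_1')=y'+I$, then $([x_1,x_1'],[y,y'])\in\h$ yields $\alpha([x_1,x_1'])=[y,y']+I=[\alpha(x_1),\alpha(x_1')]$, so $\alpha$ is a Lie algebra homomorphism. Surjectivity onto $\h_2/I$ follows from $\h_2=\pi_2(\h)$.

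Next I would verify that the formula for $\Phi^{-1}$ really produces a Lie subalgebra. Given $(\h_1,\h_2,I,\alpha)$, set
\[
\h:=\{(x_1,x_2)\in\h_1\times\h_2\mid x_2+I=\alpha(x_1)\}.
\]
Linearity of $\alpha$ makes $\h$ a linear subspace. For the bracket, suppose $(x_1,x_2),(x_1',x_2')\in\h$; since $\alpha$ is a Lie algebra homomorphism,
\[
\alpha([x_1,x_1'])=[\alpha(x_1),\alpha(x_1')]=[x_2,x_2']+I,
\]
so $([x_1,x_1'],[x_2,x_2'])\in\h$, proving closure.

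Finally, I would check that the two constructions undo each other. Starting from $\h$, the subalgebra $\Phi^{-1}(\Phi(\h))$ consists of all $(x_1,x_2)$ with $x_1\in\pi_1(\h)$ and $x_2+I\in\alpha(x_1)$, which is precisely the set of pairs $(x_1,x_2)$ for which some $(x_1,y)\in\h$ with $x_2-y\in I$, i.e., $(x_1,x_2)\in\h$. Conversely, starting from an abstract quadruple $(\h_1,\h_2,I,\alpha)$, the $\h$ produced by $\Phi^{-1}$ clearly projects onto $\h_1$, onto $\h_2$ (use surjectivity of $\alpha$), satisfies $\h\cap\g_2=\{0\}\times I$, and recovers $\alpha$ by its very definition. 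The main obstacle is not conceptual difficulty but keeping the bookkeeping tight: one must use the ideal property of $I$ in $\h_2$ exactly where it is needed (to make $\alpha$ land in the quotient $\h_2/I$ and to make the bracket well defined modulo $I$), and it is important to see that these requirements in the quadruple are not only necessary but also sufficient, which is exactly what the bracket computation in the second step confirms.
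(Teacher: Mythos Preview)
Your proof is correct and follows essentially the same approach as the paper: both verify that $\Phi$ and $\Phi^{-1}$ are well defined and then observe that they are mutual inverses by construction. Your treatment is in fact slightly more thorough---you explicitly check that $\alpha$ is well defined as a map into cosets and spell out both compositions $\Phi^{-1}\Phi$ and $\Phi\Phi^{-1}$, whereas the paper leaves these as immediate consequences of the definitions; the only stylistic difference is that the paper derives the ideal property of $I$ in $\h_2$ by noting $I=\ker\pi_1|_\h$ and pushing forward along the surjection $\pi_2|_\h$, while you do the equivalent element-level computation directly.
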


\begin{proof}
The essential step of the proof is to show that both the maps from the assertion are well defined.
It will immediately follow from their construction that they are inverses of each other.

We start by showing that $\Phi$ is well defined.
Let $\h$ be a Lie subalgebra of the Lie algebra~$\g$.
Projections $\pi_i$ are Lie algebra homomorphisms, and therefore $\h_i$ are Lie subalgebras of $\g_i$.
	
Further, we have
$I = \h \cap \g_2 = \ker \pi_1 \vert_{\h}$, and hence
$I$ is an ideal of the Lie subalgebra~$\h$.
Since the restriction~$\pi_2\vert_\h \colon \h \to \h_2$ is surjective by definition,
it follows that $I = \pi_2(I)$ is an ideal in $\h_2$.
	
Next, consider the map $\alpha \colon \h_1 \to \h_2 / I$.
It is easy to see that it is a linear map.
For arbitrary elements $(x_1, x_2), (y_1, y_2) \in \h$,
their commutator $([x_1, y_1], [x_2, y_2])$ also belongs to $\h$,
hence
\[
\alpha([x_1, y_1]) = [x_2, y_2] + I = [x_2 + I, y_2 + I] = [\alpha(x_1), \alpha(y_1)],
\]
i.e., $\alpha$ is a Lie algebra homomorphism.
It is clear that $\alpha$ is an epimorphism.

To show that $\Phi^{-1}$ is well defined,
consider a quadruple consisting of
Lie subalgebras $\h_i \subset \g_i$, $i=1,2$, an ideal $I \subset \h_2$ and
a Lie algebra epimorphism $\alpha \colon \h_1 \to \h_2 / I$.
We show that the subspace
$\h := \{(x_1, x_2) \mid x_1 \in \h_1,\; x_2 \in \alpha(x_1)\} \subset \g$
is a Lie subalgebra.
For any elements $(x_1,x_2),(y_1,y_2)\in\h$, 
we have that $[x_1,y_1]\in\h_1$, and moreover
\[
\alpha([x_1,y_1])=[\alpha(x_1),\alpha(y_1)]=[x_2+I,y_2+I]=[x_2,y_2]+I.
\]
This means that the commutator $([x_1, y_1], [x_2, y_2])$ is an element of~$\h$,
hence $\h$ is indeed a Lie subalgebra.
\end{proof}

\begin{proposition}\label{tuple1}
The adjoint representation $\Ad$ of the Lie group $G$
and its action on the set of the subalgebras of the Lie algebra~$\g$
induces the natural action on the quadruples $(\h_1, \h_2, I, \alpha)$
via the map $\Phi$ from Proposition~{\rm\ref{sub1}},
\[
\Phi\circ{\rm Ad}_g\circ\Phi^{-1}\colon(\h_1,\h_2,I,\alpha)\longmapsto
\bigl({\rm Ad}_{g_1}\h_1,{\rm Ad}_{g_2}\h_2,{\rm Ad}_{g_2}I,\smash{{\rm Ad}_{g_2}\circ\alpha\circ{\rm Ad}_{g_1^{-1}}}\bigr),
\]
where $g=(g_1,g_2)\in G$.%
\footnote{
By abuse of notation the Lie algebra epimorphism
$\smash{{\rm Ad}_{g_2}\circ\alpha\circ{\rm Ad}_{g_1^{-1}}}\colon{\rm Ad}_{g_1}\h_1\to{\rm Ad}_{g_2}\h_2/{\rm Ad}_{g_2}I$
is defined via the map ${\rm Ad}_{g_1}x\mapsto {\rm Ad}_{g_2}x+{\rm Ad}_{g_2}I$,
where $x\in\h_1$.
}
\end{proposition}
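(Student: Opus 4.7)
The plan is to reduce everything to the observation that the adjoint action of $G=G_1\times G_2$ respects the direct product structure: for $g=(g_1,g_2)\in G$ and $(x_1,x_2)\in\g=\g_1\times\g_2$ one has $\Ad_g(x_1,x_2)=(\Ad_{g_1}x_1,\Ad_{g_2}x_2)$. This is a routine consequence of the fact that conjugation in $G_1\times G_2$ splits componentwise, so $\Ad$ commutes with both projections $\pi_1,\pi_2$ and preserves each factor $\g_i$ regarded as a subspace of~$\g$. Once this is in hand, it suffices to chase the definition of~$\Phi$ given in Proposition~\ref{sub1} and verify the transformation of each of the four entries of the quadruple separately.

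First I would treat the two subalgebras. Since $\pi_i\circ\Ad_g=\Ad_{g_i}\circ\pi_i$, one immediately has $\pi_i(\Ad_g\h)=\Ad_{g_i}\pi_i(\h)=\Ad_{g_i}\h_i$, yielding the first two entries. For the ideal, I would write $\Ad_g\h\cap\g_2=\Ad_g(\h\cap\Ad_{g^{-1}}\g_2)=\Ad_g(\h\cap\g_2)=\Ad_gI$, and then use that $\Ad_g$ restricted to~$\g_2$ coincides with $\Ad_{g_2}$ because $g_1$ acts trivially on~$\g_2$ in the direct product; hence $\Ad_gI=\Ad_{g_2}I$.

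For the epimorphism, which is the one place where some care is needed, I would take $y_1\in\Ad_{g_1}\h_1$, write $y_1=\Ad_{g_1}x_1$ with $x_1\in\h_1$, and unpack the definition of the image epimorphism $\alpha'\colon\Ad_{g_1}\h_1\to\Ad_{g_2}\h_2/\Ad_{g_2}I$ associated with $\Ad_g\h$. The membership $(y_1,y_2)\in\Ad_g\h$ is equivalent, by the componentwise formula, to $(x_1,\Ad_{g_2^{-1}}y_2)\in\h$, which in turn is equivalent to $\Ad_{g_2^{-1}}y_2\in\alpha(x_1)$, i.e., to $y_2\in\Ad_{g_2}\alpha(\Ad_{g_1^{-1}}y_1)$. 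Thus $\alpha'(y_1)$ is precisely the coset $\Ad_{g_2}x_2+\Ad_{g_2}I$ for any representative $x_2$ of $\alpha(\Ad_{g_1^{-1}}y_1)$, which is exactly the map $\Ad_{g_2}\circ\alpha\circ\Ad_{g_1^{-1}}$ in the sense explained in the footnote.

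The main (mild) obstacle is the bookkeeping for the last entry: one must check that the prescription $\Ad_{g_1}x\mapsto\Ad_{g_2}x+\Ad_{g_2}I$ is independent of the chosen representative~$x$ of a coset of $\alpha$ and lands in $\Ad_{g_2}\h_2/\Ad_{g_2}I$, but this is automatic because $\Ad_{g_2}$ is a Lie algebra automorphism sending $\h_2$ to $\Ad_{g_2}\h_2$ and $I$ to $\Ad_{g_2}I$. Consequently $\Phi\circ\Ad_g\circ\Phi^{-1}$ is well defined on quadruples, and that $g\mapsto\Phi\circ\Ad_g\circ\Phi^{-1}$ is a group action follows at once from the functoriality of $\Ad$ together with $\Phi\circ\Phi^{-1}=\mathrm{id}$ from Proposition~\ref{sub1}.
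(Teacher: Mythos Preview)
Your proof is correct. The paper states this proposition without proof, so there is nothing to compare against; your direct verification, exploiting that $\Ad_{(g_1,g_2)}$ acts componentwise on $\g_1\times\g_2$ and then chasing the definition of $\Phi$ entry by entry, is precisely the intended routine check.
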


To classify all Lie subalgebras in $\g$ up to its inner automorphisms,
it is sufficient to go through the following steps:
\begin{enumerate}\itemsep=0ex
\item
Construct a list of representatives $\hat{\h}_i^a$
of the conjugacy classes of Lie subalgebras of $\g_i$ modulo $G_i$-equivalence,
$i=1,2$.%
\footnote{
Notation~$\hat{\h}_i^a$ is usually replaced by the one of the form $\h^*_{d.k}$,
where $d$, $k$ and $*$ stand for the subalgebra's dimension, its number in the list of subalgebras of dimension~$d$ 
and the parameters for a subalgebra family, respectively.
}

\item
For each pair $(\hat \h^a_1,\hat \h^b_2)$, find
representatives $\hat I^c_{ab}$
of the conjugacy classes of ideals of $\hat{\h}^b_2$
up to the action of the group $\Nor(\hat{\h}^b_2, G_2)$.

\item 
For each tuple $(\hat{\h}^a_1,\hat{\h}^b_2,\hat{I}^c_{ab})$, obtain representatives
of conjugacy classes of Lie algebras epimorphisms
$\hat\alpha^d\colon\hat{\h}^a_1\to\hat{\h}^b_2/\hat{I}^c_{ab}$
up to the action of the group
\[
\Nor(\hat{\h}^a_1, G_1)\times\big(\Nor(\hat{\h}^b_2, G_2)\cap\Nor(\hat I^c_{ab}, G_2)\big),
\]
where the latter action on the epimorphisms $\alpha\colon\hat{\h}^a_1\to\hat{\h}^b_2/\hat{I}^c_{ab}$ is given by
$
(g_1,g_2)\alpha={\rm Ad}_{g_2}\circ\alpha\circ{\rm Ad}_{g_1}^{-1}.
$

\item
For each quadruple $(\hat{\h}^a_1, \hat{\h}^b_2,\hat I^c_{ab}, \hat\alpha^d_{abc})$,
recover the corresponding Lie subalgebra~$\h^{abcd}$ according to Proposition~\ref{sub1}.

\end{enumerate}

\begin{remark}
The problem of finding the normalizer groups $\Nor(\hat{\h}^a_1, G_1)$ and $\Nor(\hat{\h}^b_2, G_2)$
from the above method is a nonlinear problem, which can be sophisticated.
However, to slightly reduce the complexity of the computations
while applying the algorithm,
it is possible to replace the normalizer groups $\Nor(\hat{\h}^a_1, G_1)$ and $\Nor(\hat{\h}^b_2, G_2)$ by their subgroups generated 
by the exponents of the normalizers of $\hat{\h}^a_1$ and $\hat{\h}^b_2$ in the corresponding Lie algebras, respectively,
i.e., the subgroups in $G_1$ and $G_2$ respectively generated by the sets $\exp\big(\Nor(\hat{\h}^a_1, \g_1)\big)$ and $\exp\big(\Nor(\hat{\h}^b_2, \g_2)\big)$.
Since these subgroups can be proper in the corresponding normalizer groups,
one should check that the resulting list of the Lie subalgebras
contains no conjugate pairs.
\end{remark}

\begin{remark}
The algorithm presented in this section is a modified version of the Goursat method described in \cite[Section~2.3]{wint2004a}.
However, the original method can only construct the Lie subalgebras corresponding to the quadruples $(h_1,h_2,I,\alpha)$,
where~$I$ is either the zero ideal or $\mathfrak h_2$,
in other words, $I$ is an improper ideal.
The advantage of our algorithm is that it has an invariant formulation and a proven validity,
which allows us to overcome the latter mistake.
\end{remark}

\subsection{Subalgebras of semidirect products}\label{subsec:semidirect}

Let $G_1$ and $G_2$ be connected Lie groups with $G_2$ isomorphic to $\mathbb{R}^n$, $\g_1$ and $\g_2$ are the corresponding Lie algebras, in particular $\g_2$ is an abelian Lie algebra,
$G = G_1 \ltimes_{\Psi} G_2$,
$\g = \g_1 \ltimes_{\psi} \g_2$,
$\pi_i \colon \g \to \g_i$, $i=1,2$, are the natural projections,
$\Psi\colon G_1\to{\rm Aut}(G_2)$ is a homomorphism of Lie groups and
$\psi\colon\g_1\to{\rm Der}(\g_2)$ is the corresponding Lie algebra homomorphism.

\begin{proposition}\label{prop:SubalgebrasSemidirect}
Consider the map~$\Gamma$ that assigns, to each Lie subalgebra~$\h\subset\g$,
the triple of objects $(\h_1,U,\alpha)$,
where $\h_1\subset\g_1$ is a Lie subalgebra,
$U\subset\g_2$ is a $\h_1$-submodule
and a $1$-cocycle $\alpha\colon\h_1\to\g_2/U$,
i.e., a linear map satisfying the identity (see {\rm \cite[p.~89]{bour1975LiePart1}})
\[
\alpha([x,y])=[\alpha(x),y]+[x,\alpha(y)]=\psi(x)\alpha(y)-\psi(y) \alpha(x).
\]
Specifically, $\Gamma$ is given by the correspondence
\begin{gather*}
\Gamma\colon\mathfrak h\longmapsto(\h_1,U,\alpha)\quad\mbox{with}\quad
\h_1=\pi_1(\h),\quad U =\h\cap\g_2\quad\mbox{and}
\\ \qquad
\alpha\colon\h_1\ni x\longmapsto\{v\in\g_2\mid(x,v)\in\h\}\in\g_2/U.
\end{gather*}
There exists the inverse $\Gamma^{-1}$, which maps a triple $(\h_1,U,\alpha)$
to the Lie subalgebra $\h\subset\g$ as follows:
\[
\Gamma^{-1}\colon(\h_1,U,\alpha)\longmapsto
\h:=\{(x,v)\mid x\in\h_1,\; v\in\alpha(x)\}.
\]
\end{proposition}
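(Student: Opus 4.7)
The plan is to mirror the structure of the proof of Proposition~\ref{sub1}: show that both $\Gamma$ and $\Gamma^{-1}$ are well defined, after which their mutual invertibility is immediate from a direct comparison of the defining formulas. The substantive work is concentrated in verifying that the objects output by $\Gamma$ really have the required properties (namely that $U$ is an $\h_1$-submodule, that $\alpha$ descends to $\g_2/U$, and that $\alpha$ satisfies the $1$-cocycle identity), and symmetrically that the subspace produced by $\Gamma^{-1}$ is closed under the bracket.

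For $\Gamma$, I would first note that $\h_1=\pi_1(\h)$ is a Lie subalgebra of~$\g_1$ since $\pi_1$ is a Lie algebra homomorphism, and $U=\h\cap\g_2$ is a linear subspace of~$\g_2$. To see that $U$ is an $\h_1$-submodule, take $x\in\h_1$ together with any lift $(x,v_x)\in\h$, and any $u\in U$, so that $(0,u)\in\h$; in~$\g$ one has
\[
[(x,v_x),(0,u)]=(0,\psi(x)u)\in\h,
\]
hence $\psi(x)u\in U$. For $\alpha$: if $(x,v_1),(x,v_2)\in\h$, then $(0,v_1-v_2)\in\h$, so $v_1-v_2\in U$, showing that $\alpha(x)$ is a single well-defined coset in $\g_2/U$; linearity is clear. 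For the cocycle identity, pick $x,y\in\h_1$ with lifts $(x,v_x),(y,v_y)\in\h$; the bracket $([x,y],\psi(x)v_y-\psi(y)v_x)$ lies in~$\h$, and passing to $\g_2/U$ (which is legitimate thanks to the submodule property just established) yields
\[
\alpha([x,y])=\psi(x)v_y-\psi(y)v_x+U=\psi(x)\alpha(y)-\psi(y)\alpha(x).
\]

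For $\Gamma^{-1}$, given a triple $(\h_1,U,\alpha)$ as in the statement, I would check that $\h=\{(x,v)\mid x\in\h_1,\;v\in\alpha(x)\}$ is closed under the bracket of~$\g$. For $(x_1,v_1),(x_2,v_2)\in\h$,
\[
[(x_1,v_1),(x_2,v_2)]=([x_1,x_2],\psi(x_1)v_2-\psi(x_2)v_1),
\]
where $[x_1,x_2]\in\h_1$ since $\h_1$ is a Lie subalgebra. Reducing the second component modulo~$U$ and invoking the cocycle identity gives $\psi(x_1)\alpha(x_2)-\psi(x_2)\alpha(x_1)=\alpha([x_1,x_2])$, so the bracket lies in~$\h$. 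The relations $\Gamma^{-1}\circ\Gamma={\rm id}$ and $\Gamma\circ\Gamma^{-1}={\rm id}$ follow by unfolding the definitions.

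The main obstacle, as in Proposition~\ref{sub1}, is the careful quotient bookkeeping: one must ensure that each $\psi(x)$ with $x\in\h_1$ descends to a well-defined endomorphism of~$\g_2/U$ (equivalent to $U$ being an $\h_1$-submodule) and that the cocycle identity in $\g_2/U$ is precisely the shadow of closure of~$\h$ under the bracket of~$\g$. Once these two compatibilities are established, every remaining step reduces to a routine verification parallel to the direct-product case.
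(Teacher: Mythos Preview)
Your proposal is correct and follows essentially the same approach as the paper's own proof: verify well-definedness of $\Gamma$ and $\Gamma^{-1}$ by checking the submodule property of~$U$, the cocycle identity for~$\alpha$, and closure of~$\h$ under brackets, after which mutual invertibility is immediate from the definitions. If anything, your version is slightly more explicit about why $\alpha(x)$ is a well-defined coset and why $\psi(x)$ descends to $\g_2/U$, but the structure and all key computations match the paper's proof.
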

\begin{proof}
In the same manner as for the direct product case, it suffices to show that both the above maps are well-defined.
It is clear that they are mutual inverses from their construction.

Starting with $\Gamma$, consider a Lie subalgebra $\h$ in $\g$.
Then $\h_1=\pi_1(\h)$ is a Lie subalgebra in~$\g_1$,
$U = \h \cap \g_2$ is a $\h$-invariant subspace in $\g_2$,
and hence it is $\h_1$-invariant as well.
It is obvious that $\alpha \colon \h_1 \to \g_2 / U$ is a linear map.
For any elements $(x_1,v_1),(x_2,v_2)\in\h$,
we have that $[(x_1,v_1),(x_2,v_2)]\in\h$, that is
$\big([x_1,x_2],\psi(x_1)v_2-\psi(x_2)v_1\big)\in\h$
and therefore
\begin{align*}
\alpha([x_1,x_2])
&=\psi(x_1)v_2-\psi(x_2)v_1+U \\
&=\psi(x_1)(v_2 + U)-\psi(x_2)(v_1+U) \\
&=\psi(x_1)\alpha(x_2)-\psi(x_2)\alpha(x_1),
\end{align*}
which means that $\alpha$ is a one-cocycle.

Next, we show that the inverse $\Gamma^{-1}$ is well-defined as well.
Let $\h_1$, $U$ and $\alpha\colon\h_1\to\g_2/U$
be a Lie subalgebra of~$\g_1$,
an $\h_1$-submodule of $\g_2$ and
a one-cocycle, respectively.
Consider the subspace $\h=\{(x,v)\mid x\in\h_1,v\in\alpha(x)\}$ of $\g$.
For any elements $(x_1,v_1),(x_2,v_2)\in\h$ the commutator
$[x_1,x_2]$ belongs to $\h_1$,
and moreover
\begin{align*}
\alpha([x_1,x_2])
&=\psi(x_1)\alpha(x_2)-\psi(x_2)\alpha(x_1) \\
&=\psi(x_1)(v_2 + U)-\psi(x_2)(v_1+U) \\
&=\psi(x_1)v_2-\psi(x_2)v_1+U.
\end{align*}
Therefore, $[(x_1,v_1),(x_2,v_2)]\in\h$.
\end{proof}

\begin{proposition} \label{tuple2}
Analogously to the case of direct product,
the action of the adjoint representation~$\Ad G$ on the set of all subalgebras
of the Lie algebra~$\g$ induces the natural action on the triples
$(\h_1, U, \alpha)$ via the map~$\Gamma$ from Proposition~{\rm\ref{prop:SubalgebrasSemidirect}},
\begin{gather*}
\Gamma \circ \Ad (g, 0) \circ \Gamma^{-1}\colon (\h_1, U, \alpha) \longmapsto
(\Ad g \ \h_1, \Psi(g) U, \Psi(g) \circ \alpha \circ \Ad g^{-1}), \\
\Gamma \circ \Ad (e, v) \circ \Gamma^{-1}\colon (\h_1, U, \alpha) \longmapsto
(\h_1, U, \alpha + \ad v\vert_{\h_1} ).
\end{gather*}
\end{proposition}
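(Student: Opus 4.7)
The plan is to verify the two displayed formulas by computing $\Ad(g)\,\h$ explicitly for $\h=\Gamma^{-1}(\h_1,U,\alpha)$ and then re-applying~$\Gamma$ to read off the new triple. Since every element of $G=G_1\ltimes_\Psi G_2$ factors as $(g,0)(1,v')$ with $g\in G_1$ and $v'\in\g_2$, and $\Ad$ is a group homomorphism, it suffices to treat the two cases $(g,0)$ and $(1,v)$ separately; these correspond exactly to the two equations in the statement.

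For the first case, I would start from the conjugation identity $(g,0)(h,w)(g^{-1},0)=(ghg^{-1},\,\Psi(g)w)$ in $G$ and differentiate at the identity in $(h,w)$ to obtain $\Ad(g,0)(x,w)=(\Ad_{g}x,\,\Psi(g)w)$. Substituting this into $\h=\{(x,v)\mid x\in\h_1,\,v\in\alpha(x)\}$ and relabelling $y=\Ad_{g}x$ yields the set $\{(y,\,\Psi(g)v)\mid y\in\Ad_{g}\h_1,\,v\in\alpha(\Ad_{g^{-1}}y)\}$. Reading off the triple via~$\Gamma$ then gives the first component $\Ad_{g}\h_1$, the intersection with $\g_2$ equal to $\Psi(g)U$, and the associated cocycle $\Psi(g)\circ\alpha\circ\Ad_{g^{-1}}$, which is the first formula.

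For the second case, the conjugation $(1,v)(h,w)(1,-v)=(h,\,w+v-\Psi(h)v)$ differentiates at the identity to $\Ad(1,v)(x,w)=(x,\,w-\psi(x)v)$. The semidirect-product bracket in~$\g$ gives $[(0,v),(x,0)]=(0,-\psi(x)v)$, so $-\psi(x)v=\ad(v)(x)$, and hence $\Ad(1,v)(x,w)=(x,\,w+\ad(v)(x))$. Consequently $\h$ is sent to $\{(x,\,\alpha(x)+\ad(v)(x))\mid x\in\h_1\}$, so that $\h_1$ and $U$ are preserved while the cocycle changes by $\ad(v)\vert_{\h_1}$, matching the second formula.

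The only technical care required lies in confirming that the objects on the right-hand sides are genuinely admissible triples, i.e., that $\Psi(g)\circ\alpha\circ\Ad_{g^{-1}}$ and $\alpha+\ad(v)\vert_{\h_1}$ are well defined modulo the appropriate submodules ($\Psi(g)U$ and $U$, respectively) and satisfy the 1-cocycle identity for the transported module structure. Both properties are automatic once one observes that $\Ad(g)\,\h$ is again a Lie subalgebra of~$\g$ and $\Gamma$ is the bijection established in Proposition~\ref{prop:SubalgebrasSemidirect}, so the images must correspond to bona fide triples of the type described there.
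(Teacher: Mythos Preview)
Your argument is correct: the paper states this proposition without proof, and your direct computation---differentiating the two conjugation formulas in $G_1\ltimes_\Psi G_2$ and then reading off the triple via~$\Gamma$---is exactly the natural verification and matches the statement as written. The only cosmetic point is that in the second case you should perhaps make explicit that the intersection $\Ad(1,v)\h\cap\g_2$ remains~$U$ (take $x=0$), but you effectively do this when asserting that $U$ is preserved.
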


To classify all Lie subalgebras in $\g$ up to its inner automorphisms,
it is sufficient to go through the following steps.
\begin{enumerate}\itemsep=0ex
\item
Construct representatives $\hat{\h}^{a}_1$
of the conjugacy classes of Lie subalgebras in $\g_1$ modulo $G_1$-equivalence.
\item
For each $\hat{\h}^{a}_1$, construct
representatives $\hat{U}^{b}_{a}$
of the orbits of $\hat{\h}^a_1$-submodules in $\g_2$
up to the group $\Psi\big(\Nor(\hat{\h}^{a}_1, G_1)\big)$.
\item
For each pair $(\hat\h^a_1,\hat U^b_a)$,
construct representatives
$\hat{\alpha}^{c}_{ab}\colon\hat{\h}^{a}_1 \to \g_2 / \hat{U}^{b}_a$
of the orbits of cocycles up to the group
\[
\big(\Nor(\hat{\h}^{a}_1, G_1)\cap\Nor(\hat{U}^{b}_a, G_1)\big) \ltimes G_2,
\]
where the latter action is defined as
$
(g,v)\alpha=\Psi(g)\circ\alpha\circ\Ad g^{-1}+\ad v\vert_{\h_1}.
$
\item
For each triple
$(\hat{\h}^{a}_1,\hat{U}^{b}_{a},\hat{\alpha}^{c}_{ab})$
recover the corresponding Lie subalgebra~$\h^{abc}$ according to Proposition \ref{prop:SubalgebrasSemidirect}.
\end{enumerate}

\begin{remark}
The presented algorithm corresponds to the method introduced in \cite[Section 2.3]{wint2004a}.
We reformulate the original method by Winternitz in an invariant way for the case where the ideal~$\mathfrak g_2$ in the semidirect product is an abelian Lie algebra
and prove its validity.
Up to some correction, the steps in the algorithm are the same as in the original version.
\end{remark}

\section{Subalgebras of $\mathfrak{aff}_2(\mathbb R)$}\label{sec:aff2Classification}

Using the approach from Section~\ref{subsec:Simple}, it is shown in Section~\ref{sec:sl3Classification} 
that the classification of subalgebras of the Lie algebra~$\mathfrak{sl}_3(\mathbb R)$ essentially reduces to that for the algebra $\mathfrak{aff}_2(\mathbb R)$.
Moreover, to the best of our knowledge, the latter classification has not been carried out completely in literature before,
cf. \cite[Table 1]{poch2017a},
where this problem has been solved in the more particular case,
more specifically, only the ``appropriate'' subalgebras of $\mathfrak{aff}_2(\mathbb R)$ were classified with respect to a ``weaker'' equivalence
than that generated by the action of ${\rm Aff}_2(\mathbb R)$.
In fact, in~\cite[Section~3.3]{wint2004a} the lists of ``twisted'' and ``nontwisted'' subalgebras of $\mathfrak{aff}_2(\mathbb R)$ were presented without a proof,
but combining these lists with respect to ${\rm Inn}\big(\mathfrak{aff}_2(\mathbb R)\big)$-equivalence was omitted.
Furthermore, the classification of subalgebras of $\mathfrak{aff}_2(\mathbb R)$ requires a correct list of the subalgebras of $\mathfrak{gl}_2(\mathbb R)$,
but the one in~\cite[eq.~(3.1)]{wint2004a} had a mistake: the subalgebra $\langle D\rangle$ (in their notation) was missing.
Therefore, the validity of the lists of ``twisted'' and ``nontwisted'' subalgebras is doubtful.
This is why, in this section, we present a complete list of inequivalent subalgebras
of~$\mathfrak{aff}_2(\mathbb R)$ and then, in the following section,
use it in the course of classifying subalgebras of~$\mathfrak{sl}_3(\mathbb R)$. 

The affine Lie algebra $\mathfrak{aff}_2(\mathbb R)$ of rank two can be written as the semidirect product 
$\mathfrak{aff}_2(\mathbb R)=\mathfrak{gl}_2(\mathbb R)\ltimes \mathbb R^2$,
where the algebra $\mathfrak{gl}_2(\mathbb R)$ is spanned by the matrices
\begin{gather*}
e_1=
\begin{pmatrix}
0&0\\
1&0
\end{pmatrix},\quad
e_2=\frac12
\begin{pmatrix}
1& 0\\
0&-1
\end{pmatrix},\quad
e_3=
\begin{pmatrix}
0&-1\\
0&0
\end{pmatrix},\quad
e_4=\frac12
\begin{pmatrix}
1&0\\
0&1
\end{pmatrix},
\end{gather*}
and $\mathbb R^2=\langle f_1,f_2\rangle$,
where $f_1=(1,0)^{\mathsf T}$, $f_2=(0,1)^{\mathsf T}$,
is considered as a two-dimensional abelian Lie algebra.
The action of $\mathfrak{gl}_2(\mathbb R)$ on $\mathbb R^2$
in the semidirect product is given by the natural faithful representation on~$\mathbb R^2$,
i.e., by the action of matrices on vectors.
Up to the skew-symmetry of the Lie brackets, the nontrivial commutation relations in $\mathfrak{aff}_2(\mathbb R)$
are exhausted~by
\begin{gather*}
[e_1,e_2]=e_1,\quad
[e_2,e_3]=e_3,\quad
[e_1,e_3]=2e_2,
\\
[e_1,f_1]=f_2,\quad
[e_2,f_1]=\tfrac12f_1,\quad
[e_2,f_2]=-\tfrac12f_2,\quad
[e_3,f_2]=-f_1,
\\
[e_4,f_1]=\tfrac12f_1,\quad
[e_4,f_2]=\tfrac12f_2.
\end{gather*}

Since the algebra $\mathfrak{aff}_2(\mathbb R)$ can be viewed as a semidirect product of Lie algebras,
we use the algorithm from Section~\ref{subsec:semidirect} to classify its subalgebras.
The following proposition completes the first step of that algorithm.

\begin{theorem}[\cite{pate1977a, popo2003a}]\label{thm:SubalgebrasOfGL2}
A complete list of inequivalent proper subalgebras of the algebra~$\mathfrak{gl}_2(\mathbb R)$
is exhausted by the following subalgebras:
\begin{align*}
{\rm 1D}\colon\ &
\h_{1.1}            =\langle e_1\rangle,\ 
\h_{1.2}            =\langle e_4\rangle,\ 
\h_{1.3}^\kappa     =\langle e_2+\kappa e_4\rangle,\ 
\h_{1.4}^\varepsilon=\langle e_1+\varepsilon e_4\rangle,\ 
\h_{1.5}^\gamma     =\langle e_1+e_3+\gamma e_4\rangle,
\\[1ex]
{\rm 2D}\colon\ &
\h_{2.1}       =\langle e_1,e_4\rangle,\quad
\h_{2.2}       =\langle e_2,e_4\rangle,\quad
\h_{2.3}       =\langle e_1+e_3,e_4\rangle,\quad
\h_{2.4}^\gamma=\langle e_2+\gamma e_4,e_1\rangle,
\\[1ex]
{\rm 3D}\colon\ &
\h_{3.1}=\langle e_1,e_2,e_4\rangle,\quad
\h_{3.2}=\langle e_1,e_2,e_3\rangle,
\end{align*}
where $\varepsilon=\pm1$, $\kappa\geqslant0$ and $\gamma\in\mathbb R$.
\end{theorem}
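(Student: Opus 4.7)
The plan is to apply the direct-product algorithm from Section~\ref{subsec:Direct}, exploiting the decomposition $\mathfrak{gl}_2(\mathbb R)=\mathfrak{sl}_2(\mathbb R)\times\langle e_4\rangle$ as Lie algebras, where the second factor is the centre. Setting $\g_1=\mathfrak{sl}_2(\mathbb R)$ and $\g_2=\langle e_4\rangle$ in Propositions~\ref{sub1} and~\ref{tuple1}, every subalgebra of $\mathfrak{gl}_2(\mathbb R)$ is encoded by a quadruple $(\h_1,\h_2,I,\alpha)$, which I would enumerate modulo the induced action of the connected group~${\rm GL}_2^+(\mathbb R)$.

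The needed inputs are essentially classical. Subalgebras of $\g_2=\langle e_4\rangle$ are only $\{0\}$ and $\langle e_4\rangle$ itself, with only improper ideals; the ${\rm SL}_2(\mathbb R)$-inequivalent subalgebras of~$\mathfrak{sl}_2(\mathbb R)$ form the well-known list $\{0\},\langle e_1\rangle,\langle e_2\rangle,\langle e_1+e_3\rangle,\langle e_1,e_2\rangle,\mathfrak{sl}_2(\mathbb R)$, representing the nilpotent, split-Cartan, compact-Cartan, Borel and full cases. The quadruple enumeration then splits into three branches: $(a)$~$\h_2=\{0\}$ recovers the subalgebras of~$\mathfrak{sl}_2(\mathbb R)$; $(b)$~$\h_2=\langle e_4\rangle$ and $I=\h_2$ force $\alpha=0$ and yield the direct sum $\h_1\oplus\langle e_4\rangle$; $(c)$~$\h_2=\langle e_4\rangle$ and $I=\{0\}$ require $\alpha\colon\h_1\to\langle e_4\rangle$ to be a nonzero linear functional vanishing on $[\h_1,\h_1]$, whence $\h_1$ must be non-perfect and restricted to the four candidates $\langle e_1\rangle$, $\langle e_2\rangle$, $\langle e_1+e_3\rangle$, $\langle e_1,e_2\rangle$.

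The crux of the proof lies in branch~$(c)$: I would reduce the single coefficient $c\in\mathbb R\setminus\{0\}$ in $\alpha(x)=c\,e_4$, for $x$ a chosen generator of the abelianisation of~$\h_1$, by means of $\Nor(\h_1,{\rm GL}_2^+(\mathbb R))$, invoking the transformation law from Proposition~\ref{tuple1}. A case-by-case computation of the adjoint action is expected to give: for $\h_1=\langle e_1\rangle$, conjugation by diagonal matrices scales $e_1$ by arbitrary positive factors, collapsing $c$ to a sign $\varepsilon=\pm 1$ and producing $\h_{1.4}^\varepsilon$; for $\h_1=\langle e_2\rangle$, the rotation $\exp\bigl(\tfrac{\pi}{2}(e_1+e_3)\bigr)\in {\rm SL}_2(\mathbb R)$ sends $e_2\mapsto -e_2$, reducing $c$ to $\kappa\geqslant 0$, which together with the $\kappa=0$ element from branch~$(a)$ yields the family $\h_{1.3}^\kappa$; for $\h_1=\langle e_1+e_3\rangle$, the $\pm i$-eigenvalues preclude any real scaling other than~$1$ within the connected group, so $\gamma\in\mathbb R$ stays free and produces $\h_{1.5}^\gamma$; and for the Borel $\h_1=\langle e_1,e_2\rangle$, the normaliser is the Borel itself, whose adjoint action is trivial on the abelianisation $\h_1/[\h_1,\h_1]$, again leaving $\gamma\in\mathbb R$ free and producing $\h_{2.4}^\gamma$ (absorbing the branch-$(b)$ element at $\gamma=0$).

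Finally, assembling the three branches and discarding the improper cases $\{0\}$ and the full algebra yields the list in the statement. I would conclude with a brief invariant check, using the dimension, the dimension of the commutator subalgebra, the Jordan type of a generator and the value of the $e_4$-coefficient, to verify that no two entries are conjugate and that the parameter ranges are sharp.
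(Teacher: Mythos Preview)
The paper does not supply its own proof of this theorem: it is stated with a citation to~\cite{pate1977a,popo2003a} and then used as input for Theorem~\ref{thm:SubalgebrasOfAffineAlg}. Your proposal, applying the direct-product algorithm of Section~\ref{subsec:Direct} to the splitting $\mathfrak{gl}_2(\mathbb R)=\mathfrak{sl}_2(\mathbb R)\times\langle e_4\rangle$, is a correct and natural route that the paper's machinery supports but does not itself carry out here; the branch analysis and the normaliser computations you outline are accurate and recover the stated list.

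One small slip to correct: in the Borel case $\h_1=\langle e_1,e_2\rangle$, the value $\gamma=0$ in $\h_{2.4}^\gamma$ is absorbed from branch~$(a)$ (the subalgebra $\langle e_1,e_2\rangle\subset\mathfrak{sl}_2(\mathbb R)$), not from branch~$(b)$; branch~$(b)$ for the Borel yields the three-dimensional $\h_{3.1}=\langle e_1,e_2,e_4\rangle$. With that adjustment your argument goes through.
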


\begin{theorem}\label{thm:SubalgebrasOfAffineAlg}
A complete list of ${\rm Inn}\big(\mathfrak{aff}_2(\mathbb R)\big)$-inequivalent proper subalgebras of the real rank-two affine Lie algebra~$\mathfrak{aff}_2(\mathbb R)$
is given by
\begin{align*}
1{\rm D}\colon\ &
\mathfrak s_{1.1}            =\langle f_1\rangle,\ \
\mathfrak s_{1.2}^{\delta}  =\langle e_1+\delta f_1\rangle,\ \
\mathfrak s_{1.3}            =\langle e_4\rangle,\ \
\mathfrak s_{1.4}^\kappa     =\langle e_2+\kappa e_4\rangle,\\
\hphantom{1{\rm D}\colon\ }&
\mathfrak s_{1.5}            =\langle e_2+e_4+f_2\rangle,\ \
\mathfrak s_{1.6}^\varepsilon=\langle e_1+\varepsilon e_4\rangle,\ \
\mathfrak s_{1.7}^\gamma     =\langle e_1+e_3+\gamma e_4\rangle,\ \
\\[1ex]
2{\rm D}\colon\ &
\mathfrak s_{2.1}          =\langle f_1,f_2\rangle,\ \
\mathfrak s_{2.2}^{\delta}=\langle e_1+\delta f_1,f_2\rangle,\ \
\mathfrak s_{2.3}          =\langle e_4,f_1\rangle,\ \
\mathfrak s_{2.4}^\kappa   =\langle e_2+\kappa e_4,f_1\rangle,\ \
\\
\hphantom{2{\rm D}\colon\ }&
\mathfrak s_{2.5}^{\kappa'}=\langle e_2+\kappa' e_4,f_2\rangle,\ \
\mathfrak s_{2.6}          =\langle e_2+e_4+f_2,f_1\rangle,\ \
\mathfrak s_{2.7}^\varepsilon=\langle e_1+\varepsilon e_4,f_2\rangle,\ \
\\
\hphantom{2{\rm D}\colon\ }&
\mathfrak s_{2.8}            =\langle e_1,e_4\rangle,\ \
\mathfrak s_{2.9}            =\langle e_2,e_4\rangle,\ \
\mathfrak s_{2.10}           =\langle e_1+e_3,e_4\rangle,\ \
\mathfrak s_{2.11}^\gamma    =\langle e_2+\gamma e_4,e_1\rangle,\ \
\\
\hphantom{2{\rm D}\colon\ }&
\mathfrak s_{2.12}          =\langle e_2+ e_4+f_2,e_1\rangle,\ \
\mathfrak s_{2.13}            =\langle e_2-3e_4,e_1+f_1\rangle,
\\[1ex]
3{\rm D}\colon\ &
\mathfrak s_{3.1}            =\langle e_1,f_1,f_2\rangle,\ \
\mathfrak s_{3.2}            =\langle e_4,f_1,f_2\rangle,\ \
\mathfrak s_{3.3}^\kappa     =\langle e_2+\kappa e_4,f_1,f_2\rangle,\ \
\\
\hphantom{3{\rm D}\colon\ }&
\mathfrak s_{3.4}^\varepsilon=\langle e_1+\varepsilon e_4,f_1,f_2\rangle,\ \
\mathfrak s_{3.5}^\gamma     =\langle e_1+e_3+\gamma e_4,f_1,f_2\rangle,\ \
\mathfrak s_{3.6}            =\langle e_1,e_4,f_2\rangle,\ \
\\
\hphantom{3{\rm D}\colon\ }&
\mathfrak s_{3.7}            =\langle e_2,e_4,f_1\rangle,\ \
\mathfrak s_{3.8}^\gamma     =\langle e_2+\gamma e_4,e_1,f_2\rangle,\ \
\mathfrak s_{3.9}         =\langle e_2-e_4+f_1, e_1, f_2\rangle,
\\
\hphantom{3{\rm D}\colon\ }&
\mathfrak s_{3.10}            =\langle e_2-3e_4, e_1+f_1, f_2\rangle,\ \
\mathfrak s_{3.11}           =\langle e_1,e_2,e_4\rangle,\ \
\mathfrak s_{3.12}           =\langle e_1,e_2,e_3\rangle,
\\[1ex]
4{\rm D}\colon\ &
\mathfrak s_{4.1}        =\langle e_1,e_4,f_1,f_2\rangle,\ \
\mathfrak s_{4.2}        =\langle e_2,e_4,f_1,f_2\rangle,\ \
\mathfrak s_{4.3}        =\langle e_1+e_3,e_4,f_1,f_2\rangle,
\\
\hphantom{4{\rm D}\colon\ }&
\mathfrak s_{4.4}^\gamma =\langle e_2+\gamma e_4,e_1,f_1,f_2\rangle,\ \
\mathfrak s_{4.5}        =\langle e_1,e_2,e_4,f_2\rangle,\ \
\mathfrak s_{4.6}        =\langle e_1,e_2,e_3,e_4\rangle,
\\[1ex]
5{\rm D}\colon\ &
\mathfrak s_{5.1}=\langle e_1,e_2,e_4,f_1,f_2\rangle,\ \
\mathfrak s_{5.2}=\langle e_1,e_2,e_3,f_1,f_2\rangle,
\end{align*}
where $\varepsilon\in\{-1,1\}$, $\delta\in\{0,1\}$, $\kappa\geqslant0$, $\kappa'>0$ and $\gamma\in\mathbb R$.
\end{theorem}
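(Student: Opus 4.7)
The plan is to apply the algorithm of Section~\ref{subsec:semidirect} to the semidirect decomposition $\mathfrak{aff}_2(\mathbb R)=\mathfrak{gl}_2(\mathbb R)\ltimes\mathbb R^2$. Step~1 of the algorithm is supplied by Theorem~\ref{thm:SubalgebrasOfGL2}, to which one must append the improper cases $\h_1=\{0\}$ (yielding the subalgebras wholly contained in $\mathbb R^2$) and $\h_1=\mathfrak{gl}_2(\mathbb R)$ itself. For each representative $\hat{\h}_1$ I would then (i)~determine $\Nor(\hat{\h}_1,{\rm GL}_2(\mathbb R))$ by solving $[x,\hat{\h}_1]\subset\hat{\h}_1$ at the Lie algebra level and exponentiating; (ii)~enumerate the $\hat{\h}_1$-submodules $U\subset\mathbb R^2$ (which can only be $\{0\}$, $\mathbb R^2$, or a common eigenline of $\hat{\h}_1$) modulo $\Psi(\Nor(\hat{\h}_1,{\rm GL}_2(\mathbb R)))$; (iii)~for each admissible pair $(\hat{\h}_1,U)$, enumerate the $1$-cocycles $\alpha\colon\hat{\h}_1\to\mathbb R^2/U$ modulo coboundaries and the residual normalizer; and (iv)~reassemble each triple into a subalgebra via Proposition~\ref{prop:SubalgebrasSemidirect}.

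For Step~(iii), which is the computational core, I would rewrite the cocycle condition $\alpha([x,y])=\psi(x)\alpha(y)-\psi(y)\alpha(x)$ as a small linear system in the coefficients of $\alpha$, compute the coboundaries $x\mapsto\psi(x)v+U$ with $v\in\mathbb R^2$, and quotient by the action $(g,v)\alpha=\Psi(g)\circ\alpha\circ\Ad g^{-1}+\ad v|_{\h_1}$ of $(\Nor(\hat{\h}_1,{\rm GL}_2(\mathbb R))\cap\Nor(U,{\rm GL}_2(\mathbb R)))\ltimes\mathbb R^2$. For the vast majority of pairs $(\hat{\h}_1,U)$ this collapses $\alpha$ to zero, producing a split extension; but for certain resonances between the weights of $\psi(\hat{\h}_1)$ on $\mathbb R^2/U$ a nontrivial cohomology class survives, yielding the \emph{twisted} subalgebras $\mathfrak s_{1.5}$, $\mathfrak s_{2.6}$, $\mathfrak s_{2.12}$, $\mathfrak s_{2.13}$, $\mathfrak s_{3.9}$, $\mathfrak s_{3.10}$. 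The specific numerical values $\kappa=1$ and $\kappa=-3$ appearing in these twisted representatives are precisely those for which the eigenvalue of $\psi(e_2+\kappa e_4)$ on $\mathbb R^2/U$ matches the eigenvalue on the surviving cocycle direction, making the coboundary operator fail to surject onto the relevant component.

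The main obstacle is the systematic bookkeeping: each of the representatives from Theorem~\ref{thm:SubalgebrasOfGL2} (plus the two improper ones), combined with every admissible $U$, produces its own small case analysis for both the submodules and the cocycles. The interplay between the diagonal, unipotent and reflective parts of $\Nor(\hat{\h}_1,{\rm GL}_2(\mathbb R))$ is what constrains the parameter ranges $\kappa\geq 0$, $\kappa'>0$, $\delta\in\{0,1\}$, $\varepsilon=\pm1$; in particular the sharper bound $\kappa'>0$ for $\mathfrak s_{2.5}^{\kappa'}$ appears because at $\kappa'=0$ this subalgebra becomes ${\rm GL}_2$-conjugate to $\mathfrak s_{2.4}^0$ via the swap of $f_1$ and $f_2$, which is an outer symmetry of $\langle e_2\rangle$ but inner in $\mathfrak{gl}_2(\mathbb R)$. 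A secondary delicate point is the cross-check that no two triples arising from distinct $\hat{\h}_1$'s give conjugate subalgebras of $\mathfrak{aff}_2(\mathbb R)$: this is guaranteed in principle by Proposition~\ref{tuple2} (the invariant $\pi_1(\h)$ picks out $\hat{\h}_1$ uniquely up to $G_1$-conjugation), but in practice must be re-verified directly within each low-dimensional family to rule out parameter-value coincidences that the algorithm does not detect automatically.
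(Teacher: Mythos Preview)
Your plan is exactly the paper's: run the semidirect-product algorithm of Section~\ref{subsec:semidirect} over the $\mathfrak{gl}_2(\mathbb R)$-list of Theorem~\ref{thm:SubalgebrasOfGL2} (together with $\{0\}$ and $\mathfrak{gl}_2(\mathbb R)$), compute normalizers, submodules, and cocycle classes, and reassemble. Two points of care, though.

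First, you repeatedly write $\Nor(\hat\h_1,{\rm GL}_2(\mathbb R))$, but the equivalence in play is by \emph{inner} automorphisms, so the ambient group must be the connected component ${\rm GL}_2^+(\mathbb R)$, as the paper uses throughout the proof. This is not cosmetic: conjugation by a negative-determinant matrix sends $\langle e_1+e_4\rangle$ to $\langle e_1-e_4\rangle$, so working in the full ${\rm GL}_2(\mathbb R)$ would wrongly collapse $\varepsilon=1$ and $\varepsilon=-1$ in $\h_{1.4}^\varepsilon$, $\mathfrak s_{1.6}^\varepsilon$, $\mathfrak s_{2.7}^\varepsilon$, $\mathfrak s_{3.4}^\varepsilon$. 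Relatedly, ``solve $[x,\hat\h_1]\subset\hat\h_1$ and exponentiate'' only yields the identity component of the normalizer; the paper explicitly picks up disconnected pieces inside ${\rm GL}_2^+(\mathbb R)$ (e.g.\ the antidiagonal matrices normalizing $\langle e_2\rangle$), which is precisely what produces the $f_1\leftrightarrow f_2$ swap you invoke for $\kappa'=0$.

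Second, your enumeration of the twisted subalgebras is incomplete: the nontrivial cocycles for $\h_{1.1}=\langle e_1\rangle$ (with $U=\{0\}$ and $U=\langle f_2\rangle$) give $\mathfrak s_{1.2}^1$ and $\mathfrak s_{2.2}^1$, which are not covered by your ``weight resonance'' heuristic since $e_1$ is nilpotent; here the obstruction is simply that the coboundary map $v\mapsto e_1v$ has rank one and misses the $f_1$-direction. The paper isolates this mechanism as a separate lemma (any rank-two generator kills all cocycles), and it is the rank-one cases $e_1$, $e_2+e_4$, $e_2-e_4$ that produce every twist.
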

\begin{proof}
By using Theorem~\ref{thm:SubalgebrasOfGL2}, we perform the first step of the algorithm from Section~\ref{subsec:semidirect}.
Now we should execute steps 2, 3 and 4 for the families of subalgebras from the list given in Theorem~\ref{thm:SubalgebrasOfGL2},
including the trivial subalgebra $\{0\}$.
The computations below are tedious and cumbersome,
this is why for the optimal performance and better presentation,
we structure the proof as follows.
For each subalgebra $\h_{i.j}^*$ from Theorem~\ref{thm:SubalgebrasOfGL2}
and the trivial subalgebra $\h_0=\{0\}$, we begin with constructing
the normalizer subgroup ${\rm Nor}\big(\h_{i.j}^*,{\rm GL}^+_2(\mathbb R)\big)$
where it is needed,
then classify $\h_{i.j}^*$-submodules of $\mathbb R^2$.
It is clear that there is only one zero-dimensional and one two-dimensional submodule,
i.e.,~$\{0\}$ and $\mathbb R^2=\langle f_1,f_2\rangle$.
Here $i$, $j$ and $*$ in notation of the subalgebra $\h_{i.j}^*$
stand for the dimension of the subalgebra, its number in the list of $i$-dimensional subalgebras of the algebra $\mathfrak{gl}_2(\mathbb R)$
and the list of subalgebra parameters for a subalgebra family, respectively.
We do not introduce separate notations for submodules but rather denote them as vector spaces spanned by some specific basis or as $U\subset\mathbb R^2$ if necessary.
We also do not use separate notations for cocycles for the sake of text readability,
thus for a fixed $\h_{i.j}^*$-submodule $U\subset\mathbb R^2$ we start from the most general form of a one-cocycle
$\alpha\colon\h_{i.j}^*\to\mathbb R^2/U$ and simplify it using elements of ${\rm Nor}\big(\h_{i.j}^*,{\rm GL}^+_2(\mathbb R)\big)\ltimes\mathbb R^2$.
For the obtained inequivalent cocycles we construct the subalgebras following Proposition~\ref{prop:SubalgebrasSemidirect}.
Note that if $\dim\h_{i.j}^*=1$, then any linear map $\alpha\colon\h_{i.j}^*\to\mathbb R^2/U$ is a~one-cocycle.
Otherwise, one should check that the cocycle condition is satisfied,
which in case of the affine algebra $\mathfrak{aff}_2(\mathbb R)=\mathfrak{gl}_2(\mathbb R)\ltimes\mathbb R^2$ takes the form
\[
\alpha([x,y])=x\alpha(y)-y\alpha(x)\quad\forall x,y\in\h_{i.j}^*.
\]

Another important remark is that any cocycle $\alpha\colon\h_{i.j}^*\to\mathbb R^2/\mathbb R^2$ is the trivial cocycle,
thus the subalgebra corresponding to this cocycle is $\h_{i.j}^*\ltimes\langle f_1,f_2\rangle$.
We will not mention this trivial case throughout the proof,
however, we indicate the subalgebra obtained in this way $\h_{i.j}^*\ltimes\langle f_1,f_2\rangle$
as the last element of the list of subalgebras constructed for each $\h_{i.j}^*$.

The following simple observation is useful in the course of cocycle simplification.

\begin{lemma}\label{lem:FullRank}
Let $x\in{\rm Mat}_2(\mathbb R)$ such that $\mathop{\rm rank}x=2$.
Then any cocycle $\alpha\colon\langle x\rangle\to\mathbb R^2$ is equivalent to the trivial one
under the action of the inner automorphism group of $\mathfrak{aff}_2(\mathbb R)$.
\end{lemma}
\begin{proof}
Conjugate any cocycle $\alpha\colon x\mapsto c_1f_1+c_2f_2$ by $\big({\rm diag}(1,1),x^{-1}(c_1f_1+c_2f_2)\big)$.
\end{proof}

\medskip\par\noindent
$\boldsymbol{\h_0=\langle0\rangle}$.
It is clear that the normalizer subgroup of $\h_0$ is the entire group ${\rm GL}^+_2(\mathbb R)$.
Therefore, the inequivalent $\h_0$-submodules of $\mathbb R^2$ are characterized by their dimensions,
so we can choose $\{0\}$, $\langle f_1\rangle$ and $\langle f_1,f_2\rangle$.
As a result, we obtain the subalgebras $\langle 0\rangle$, $\mathfrak s_{1.1}$ and $\mathfrak s_{2.1}$.

\medskip\par\noindent
$\boldsymbol{\h_{1.1}=\langle e_1\rangle}$.
The normalizer of the subalgebra $\h_{1.1}$ in the group ${\rm GL}^+_2(\mathbb R)$
is generated by the lower-triangular matrices, i.e., $(a_{lm})_{l,m=1,2}$ with $a_{12}=0$,
$a_{11}a_{22}>0$ and $a_{21}\in\mathbb R$.
The next step is to choose ${\rm Nor}\big(\h_{1.1},{\rm GL}^+_2(\mathbb R)\big)$-inequivalent $\h_{1.1}$-submodules $U$ of $\mathbb R^2$,
and the only nontrivial step is to find one-dimensional modules.
If $\dim U=1$, then consider $U$ spanned by $c_1f_1+c_2f_2$.
Since $e_1(c_1f_1+c_2f_2)=c_1f_2$, we have that $c_1=0$,
yielding $U=\langle f_2\rangle$.
Thus, the submodules are $\{0\}$, $\langle f_2\rangle$ and $\langle f_1,f_2\rangle$.

The next step is to consider one-cocycles $\alpha\colon\h_{1.1}\to\mathbb R^2/U$ and reduce them to the simplest form modulo the action
of the group ${\rm Nor}\big(\h_{1.1},{\rm GL}^+_2(\mathbb R)\big)\ltimes\mathbb R^2$.
To construct one-cocycles, we note that
any linear map between Lie algebras whose domain is one-dimensional is a one-cocycle.
Starting the consideration from the $\h_{1.1}$-submodule $\{0\}$,
consider a linear map $\alpha\colon e_1\mapsto c_1f_1+c_2f_2$.
Acting on $\alpha$ by $\Ad\big({\rm diag}(1,1),c_2f_1\big)$ via formula from  step 3 in algorithm presented in Section~\ref{subsec:semidirect},
we change the general form of the one-cocycle $\alpha$ to $\alpha\colon e_1\mapsto c_1f_1$.
If $c_1\ne0$, we simplify it further acting by $\Ad\big({\rm diag}(\mathop{\rm sgn}c_1,c_1),0\big)$
and obtain $\alpha\colon e_1\mapsto f_1$.
Otherwise, $\alpha\colon e_1\mapsto 0$.

By this, we obtained the candidates for the representatives of the equivalence classes of cocycles,
$\hat\alpha^1\colon e_1\mapsto f_1$ and $\hat\alpha^0\colon e_1\mapsto 0$.
If $\hat\alpha^0$ is equivalent to $\hat\alpha^1$,
then there exists $(g,v)\in{\rm Nor}\big(\langle e_1\rangle,{\rm GL}^+_2(\mathbb R)\big)\ltimes\mathbb R^2$
such that
\begin{gather}\label{eq:InEqCocyclesP1}
	\Ad(g,v)\hat\alpha^0=\hat\alpha^1.
\end{gather}
The left-hand side of~\eqref{eq:InEqCocyclesP1} can be written as
\[
\Ad(g,v)\hat\alpha^0=g\circ\hat{\alpha}^0\circ\Ad g^{-1}+\ad v\vert_{\h_{1.1}}.
\]
Since $\hat{\alpha}^0(e_1)=0$, the term $g\circ\hat{\alpha}^0\circ\Ad g^{-1}$ vanishes.
We have that $\ad v\vert_{\h_{1.1}}(e_1)\in\langle f_2\rangle$,
which implies that $\Ad(g,v)\hat\alpha^0(e_1)\in\langle f_2\rangle$.
This contradicts the equality~\eqref{eq:InEqCocyclesP1}
since $\hat\alpha^1(e_1)=f_1$.
Hence, $\hat\alpha^0$ and $\hat\alpha^1$ are not equivalent.
The subalgebras that correspond to $\hat\alpha^0$ and $\hat\alpha^1$
are $\mathfrak s_{1.2}^0$ and $\mathfrak s_{1.2}^1$.

The next submodule to consider is $\langle f_2\rangle$.
Consider the map $\alpha\colon e_1\mapsto c_1f_1+\langle f_2\rangle\in\mathbb R^2/\langle f_2\rangle$.
If $c_1\ne0$, then, acting on $\alpha$ by ${\rm diag}(\mathop{\rm sgn}c_1,c_1)$, we can set $c_1=1$.
Otherwise, $\alpha\colon e_1\mapsto \langle f_2\rangle\in\mathbb R^2/\langle f_2\rangle$.
Hence, the only admissible one-cocycles are $\alpha^1\colon e_1\mapsto f_1+\langle f_2\rangle$ and $\alpha^0\colon e_1\mapsto\langle f_2\rangle$
and they are inequivalent.
The corresponding subalgebras are $\mathfrak s_{2.2}^1$ and $\mathfrak s_{2.2}^0$.

The last subalgebra for this case, which is obtained after considering two-dimensional submodule $\mathbb R^2$, is $\mathfrak s_{3.1}$.

\medskip\par\noindent
$\boldsymbol{\h_{1.2}=\langle e_4\rangle}$.
It is clear that ${\rm Nor}\big(\h_{1.2},{\rm GL}^+_2(\mathbb R)\big)={\rm GL}^+_2(\mathbb R)$,
thus the inequivalent submodules of $\mathbb R^2$ are $\{0\}$, $\langle f_1\rangle$ and $\langle f_1,f_2\rangle$.
Since $\mathop{\rm rank}e_4=2$, Lemma~\ref{lem:FullRank} implies that
any one-cocycle $\alpha\colon\h_{1.2}\to\mathbb R^2$ can be reduced to the trivial one.
Therefore, this case results in the subalgebras $\mathfrak s_{1.3}$, $\mathfrak s_{2.3}$ and~$\mathfrak s_{3.2}$.

\medskip\par\noindent
$\boldsymbol{\h_{1.3}^\kappa=\langle e_2+\kappa e_4\rangle\textbf{ with }\kappa\geqslant0.}$
This case is by far the most complicated.
When $\kappa=0$, the normalizer ${\rm Nor}\big(\langle e_2\rangle,{\rm GL}^+_2(\mathbb R)\big)$
is constituted by the matrices
\begin{gather*}
	\begin{pmatrix}
		a_{11} & 0\\
		0      & a_{22}
	\end{pmatrix}
	\quad
	\mbox{and}
	\quad
	\begin{pmatrix}
		0      & a_{12}\\
		a_{21} & 0
	\end{pmatrix}
\end{gather*}
with $a_{11}a_{22}>0$ and $a_{12}a_{21}<0$.
Otherwise, the normalizer ${\rm Nor}\big(\langle e_2+\kappa e_4\rangle,{\rm GL}^+_2(\mathbb R)\big)$, where $\kappa>0$, is constituted
by the matrices ${\rm diag}(a_{11},a_{22})$ satisfying the condition $a_{11}a_{22}>0$.
This is why we split the consideration into three cases:
$\kappa\in\mathbb R_{\geqslant0}\setminus\{0,1\}$,
$\kappa=0$ and $\kappa=1$.
The first case is the regular one,
the case $\kappa=0$ is caused by the structure of the normalizer subgroup,
and the case with $\kappa=1$ requires separate consideration in view of $\mathop{\rm rank}(e_1+e_4)=1$.

\medskip\par\noindent
$\boldsymbol{\kappa=0}$.
To classify inequivalent one-dimensional $\langle e_2\rangle$-submodules of $\mathbb R^2$,
we use the fact that $e_2(c_1f_1+c_2f_2)=\frac12c_1f_1-\frac12c_2f_2$, yielding either $c_1=0$ or $c_2=0$.
Hence, the only one-dimensional submodules are $\langle f_1\rangle$ and $\langle f_2\rangle$,
however, they are equivalent:
consider the action by the matrix $A\in{\rm Nor}\big(\langle{\rm diag}(1,-1)\rangle,{\rm GL}_2^+(\mathbb R)\big)$
with entries $a_{11}=a_{22}=0$ and $a_{12}=-a_{21}=1$.
This means that the representatives of the $\langle e_2\rangle$-submodules of $\mathbb R^2$ are
$\{0\}$, $\langle f_1\rangle$ and $\mathbb R^2$.

In view of Lemma~\ref{lem:FullRank}, all one-cocycles can be reduced to the trivial one.
Hence, this subcase results in the subalgebras
$\mathfrak s_{1.4}^0=\langle e_2\rangle$,
$\mathfrak s_{2.4}^0=\langle e_2,f_1\rangle$ and
$\mathfrak s_{3.3}^0=\langle e_2,f_1,f_2\rangle$.

\medskip\par\noindent
$\boldsymbol{\kappa=1}$.
Since $(e_2+e_4)(c_1f_1+c_2f_2)=c_1f_1$,
we have that either $c_1=0$ or $c_2=0$.
This is why the inequivalent $\langle e_2+e_4\rangle$-submodules of $\mathbb R^2$ are
$\{0\}$,
$\langle f_1\rangle$,
$\langle f_2\rangle$
and
$\mathbb R^2$.

Consider a $1$-cocycle $\alpha\colon e_2+e_4\mapsto c_1f_1+c_2f_2\in\mathbb R^2$.
Since $\mathop{\rm im} (e_2+e_4)=\langle f_1\rangle$, we can reduce the cocycle $\alpha$ to $\alpha\colon e_2+e_4\mapsto c_2f_2$.
If $c_2=0$, then the $1$-cocycle is trivial, $\hat\alpha^0\colon e_2+e_4\mapsto 0$.
If $c_2$ is nonzero, applying ${\rm Ad}\big({\rm diag}(\mathop{\rm sgn} c_2,c_2^{-1}),0\big)$ to $\alpha\colon e_2+e_4\mapsto c_2f_2$
leads to $\hat\alpha^1\colon e_2+e_4\mapsto f_2$.
The resulting two cocycles are inequivalent.
Hence, we obtained $\hat\alpha^0\colon e_2+e_4\mapsto 0$ and $\hat\alpha^1\colon e_2+e_4\mapsto f_2$.
The subalgebras corresponding to these cocycles are $\mathfrak s_{1.4}^1=\langle e_2+e_4\rangle$ and $\mathfrak s_{1.5}=\langle e_2+e_4+f_2\rangle$.

The next step is to consider the submodule $\langle f_1\rangle$
and a one-cocycle $\alpha\colon e_2+e_4\mapsto c_2f_2+\langle f_1\rangle\in\mathbb R^2/\langle f_1\rangle$.
Analogously to the previous case, if $c_2=0$, then the one-cocycle is trivial,
$\hat\alpha^0\colon e_2+e_4\mapsto\bar0\in\mathbb R^2/\langle f_1\rangle$.
Otherwise, applying ${\rm Ad}\big({\rm diag}(\mathop{\rm sgn} c_2,c_2^{-1}),0\big)$, we gauge $c_2$ to~$1$.
Hence, the one-cocycle is equivalent to
$\hat\alpha^1\colon e_2+e_4\mapsto f_2+\langle f_1\rangle\in\mathbb R^2/\langle f_1\rangle$.
The resulting two cocycles are inequivalent.
Thus the corresponding subalgebras are $\mathfrak s_{2.4}^1=\langle e_2+e_4,f_1\rangle$ and $\mathfrak s_{2.6}=\langle e_2+e_4+f_2,f_1\rangle$.

The case with the submodule $\langle f_2\rangle$ is trivial.
Due to the fact that $\mathop{\rm im}(e_2+e_4)=\langle f_1\rangle$, we can always reduce
any cocycle $\alpha\colon\h_{1.3}^\kappa\to\mathbb R^2/\langle f_2\rangle$ to
the cocycle $\alpha\colon e_2+e_4\mapsto\langle f_2\rangle\in\mathbb R^2/\langle f_2\rangle$,
which is trivial.
Hence, the subalgebra corresponding to this one-cocycle is $\mathfrak s_{2.5}^1=\langle e_2+e_4,f_2\rangle$.
And for the last case of a two-dimensional submodule, the subalgebra is $\mathfrak s_{3.3}^1=\langle e_2+e_4,f_1,f_2\rangle$.

\medskip\par\noindent
$\boldsymbol{\kappa\in\mathbb R_{>0}\setminus\{1\}}$.
Since $\mathop{\rm rank}(e_2+\kappa e_4)=2$, each $1$-cocycle can be reduced to the trivial one.
To classify $\h_{1.3}^\kappa$-submodules of $\mathbb R^2$,
we use that $(e_2+\kappa e_4)(c_1f_1+c_2f_2)=\frac{c_1}2(\kappa+1)f_1+\frac{c_2}2(\kappa-1)f_2$,
which straightforwardly leads to the inequivalent submodules
$\{0\}$,
$\langle f_1\rangle$,
$\langle f_2\rangle$
and $\mathbb R^2$.
Hence, this subcase yields
$\mathfrak s_{1.4}^\kappa=\langle e_2+\kappa e_4\rangle$,
$\mathfrak s_{2.4}^\kappa=\langle e_2+\kappa e_4,f_1\rangle$,
$\mathfrak s_{2.5}^\kappa=\langle e_2+\kappa e_4,f_2\rangle$,
$\mathfrak s_{3.3}^\kappa=\langle e_2+\kappa e_4,f_1,f_2\rangle$.

The entire case of $\h_{1.3}^\kappa$ results in the subalgebras
$\mathfrak s_{1.4}^\kappa$,
$\mathfrak s_{1.5}$,
$\mathfrak s_{2.4}^\kappa$,
$\mathfrak s_{2.5}^{\kappa'}$,
$\mathfrak s_{2.6}$
and $\mathfrak s_{3.3}^\kappa$.

\medskip\par\noindent
$\boldsymbol{\h_{1.4}^\varepsilon=\langle e_1+\varepsilon e_4\rangle\textbf{ with }\varepsilon=\pm1.}$
Since $(e_1+\varepsilon e_4)(c_1f_1+c_2f_2)=\frac\varepsilon2 c_1f_1+(c_1+\frac\varepsilon2 c_2)f_2$,
we have that $c_1=0$.
Hence, the $\h_{1.4}^\varepsilon$-submodules of $\mathbb R^2$ are
$\{0\}$, $\langle f_2\rangle$ and $\langle f_1,f_2\rangle$,
which are definitely inequivalent.
The matrix $e_1+\varepsilon e_4$ has the full rank,
thus by Lemma~\ref{lem:FullRank} any cocycle is equivalent to the trivial one.
Therefore, we obtain the families of subalgebras $\mathfrak s_{1.6}^\varepsilon$,
$\mathfrak s_{2.7}^\varepsilon$ and $\mathfrak s_{3.4}^\varepsilon$ from the statement of Theorem~\ref{thm:SubalgebrasOfAffineAlg}.

\smallskip\par\noindent
$\boldsymbol{\h_{1.5}^\gamma=\langle e_1+e_3+\gamma e_4\rangle\textbf{ with }\gamma\in\mathbb R.}$
The only $\h_{1.5}^\gamma$-submodules of $\mathbb R^2$ are
$\{0\}$ and $\mathbb R^2$.
Using the fact that $\mathop{\rm rank}(e_1+e_3+\gamma e_4)=2$, we can reduce any cocycle to the trivial one using Lemma~\ref{lem:FullRank}.
Thus, we obtain the subalgebras~$\mathfrak s_{1.7}^\gamma$ and~$\mathfrak s_{3.5}^\gamma$.

\smallskip\par\noindent
$\boldsymbol{\h_{2.1}=\langle e_1, e_4\rangle.}$
The only $\h_{2.1}$-submodules of $\mathbb R^2$ are
$\{0\}$, $\langle f_2\rangle$ and $\mathbb R^2$.

For the submodule $\{0\}$, consider a linear map $\alpha\colon\h_{2.1}\to\mathbb R^2$
defined by
$\alpha\colon
e_1\mapsto c_{11}f_1+c_{12}f_2,
e_4\mapsto c_{21}f_1+c_{22}f_2$.
Acting on $\alpha$ by ${\rm Ad}\big({\rm diag}(1,1),2c_{21}f_1+2c_{22}f_2\big)$, we obtain
$\alpha\colon
e_1\mapsto c_{11}f_1+c_{12}f_2,
e_4\mapsto0$.
This map is a cocycle if and only if $c_{11}=c_{12}=0$,
\begin{gather*}
\alpha([e_1,e_4])=0,\quad
e_1\alpha(e_4)-e_4\alpha(e_1)=-\tfrac12c_{11}f_1-\tfrac12c_{12}f_2.
\end{gather*}
In other words, the only possible cocycle is the trivial one.

The consideration for the submodule $\langle f_2\rangle$ is analogous and it also yields the trivial one-cocycle.

As a result, in this case we obtain the subalgebras $\mathfrak s_{2.8}$, $\mathfrak s_{3.6}$ and $\mathfrak s_{4.1}$.

\medskip\par\noindent
$\boldsymbol{\h_{2.2}=\langle e_2, e_4\rangle.}$
The normalizer ${\rm Nor}\big(\h_{2.2},{\rm GL}^+_2(\mathbb R)\big)$ is generated by the diagonal and antidiagonal matrices (as in the case $\h_{1.3}^\kappa$).
Therefore, the inequivalent $\langle e_2,e_4\rangle$-submodules of $\mathbb R^2$
are $\{0\}$, $\langle f_1\rangle$ and $\mathbb R^2$.

For the case of the submodule $\{0\}$, consider the general form of a linear map from $\h_{2.2}$ to~$\mathbb R^2$,
$\alpha\colon
e_2\mapsto c_{11}f_1+c_{12}f_2,
e_4\mapsto c_{21}f_1+c_{22}f_2$.
Since $\mathop{\rm rank}e_2=2$, acting on the map~$\alpha$ by ${\rm Ad}\big({\rm diag}(1,1),e_2^{-1}(c_{11}f_1+c_{12}f_2)\big)$,
we reduce it to $\alpha\colon
e_2\mapsto 0,
e_4\mapsto c_{21}f_1+c_{22}f_2$.
This map is a cocycle if and only if $c_{21}=c_{22}=0$,
\begin{gather*}
\alpha([e_2,e_4])=0,\quad
e_2\alpha(e_4)-e_4\alpha(e_2)=\tfrac12c_{21}f_1-\tfrac12c_{22}f_2.
\end{gather*}

The consideration for the submodule $\langle f_1\rangle$ is analogous and it also results in the trivial one-cocycle.
Hence, we obtain the subalgebras $\mathfrak s_{2.9}$, $\mathfrak s_{3.7}$ and $\mathfrak s_{4.2}$.

\medskip\par\noindent
$\boldsymbol{\h_{2.3}=\langle e_1+e_3, e_4\rangle.}$
The matrix $e_1+e_3$ is the real two-by-two Jordan block with purely imaginary eigenvalues,
hence it  has no proper invariant subspaces.
Thus, the only inequivalent $\h_{2.3}$-submodules of $\mathbb R^2$
are $\{0\}$ and $\mathbb R^2$.

Since $\mathop{\rm rank}e_4=2$, any linear map $\alpha\colon\h_{2.3}\to\mathbb R^2$
is equivalent to a map of the form
$\alpha\colon
e_1+e_3\mapsto c_{11}f_1+c_{12}f_2,
e_4\mapsto 0$.
And the latter is a cocycle if and only if it is trivial,
\begin{gather*}
\alpha([e_1+e_3,e_4])=0,\quad
(e_1+e_3)\alpha(e_4)-e_4\alpha(e_1+e_3)=-\tfrac12c_{11}f_1-\tfrac12c_{12}f_2.
\end{gather*}
By this, we obtain the subalgebras $\mathfrak s_{2.10}$ and $\mathfrak s_{4.3}$.

\medskip\par\noindent
$\boldsymbol{\h_{2.4}^\gamma=\langle e_2+\gamma e_4, e_1\rangle.}$
The normalizer of the subalgebra $\h_{2.4}^\gamma$ is constituted
by the lower-triangular matrices from~${\rm GL}^+_2(\mathbb R)$.
Depending on the value of~$\gamma$, there are three separate cases to consider:
$\gamma=-1$, $\gamma=1$ and $\gamma\in\mathbb R\setminus\{-1,1\}$,
where the first two appear since the matrices $e_2\pm e_4$ are degenerate.

\smallskip\par\noindent
$\boldsymbol{\gamma=1.}$
The $\h_{2.4}^1$-submodules of $\mathbb R^2$ are $\{0\}$, $\langle f_2\rangle$ and $\mathbb R^2$.
In the case of the submodule~$\{0\}$, consider an arbitrary linear map $\alpha\colon\h_{2.4}^1\to\mathbb R^2$,
$\alpha\colon e_2+e_4\mapsto c_{11}f_1+c_{12}f_2,e_1\mapsto c_{21}f_1+c_{22}f_2$.
Acting on $\alpha$ by ${\rm Ad}\big({\rm diag}(1,1),c_{11}f_1\big)$, we can set $c_{11}=0$.
The one-cocycle condition $\alpha([e_2+e_4,e_1])=(e_2+e_4)\alpha(e_1)-e_1\alpha(e_2+e_4)$
gives us that $c_{21}=c_{22}=0$,
\begin{gather}\label{eq:CocycleCondH23^1}
\alpha([e_2+e_4,e_1])=-\alpha(e_1)=-c_{21}f_1-c_{22}f_2,\quad
(e_2+e_4)\alpha(e_1)-e_1\alpha(e_2+e_4)=c_{21}f_1.
\end{gather}
Thus, the potential one-cocycles are of the form $\alpha\colon e_2+e_4\mapsto c_{12}f_2,e_1\mapsto 0$.
If $c_{12}=0$, then $\alpha$ is the trivial cocycle.
If $c_{12}\ne0$, acting on $\alpha$ by ${\rm Ad}\big({\rm diag}(c_{12},c_{12}^{-1}),0\big)$,
we can set $c_{12}=1$.
Therefore, we obtain two inequivalent cocycles $\hat\alpha^0\colon e_2+e_4\mapsto 0,e_1\mapsto 0$
and $\hat\alpha^1\colon e_2+e_4\mapsto f_2,e_1\mapsto 0$.
They correspond to the subalgebras $\mathfrak s_{2.11}^1=\langle e_2+e_4,e_1\rangle$ and
$\mathfrak s_{2.12}=\langle e_2+e_4+f_2,e_1\rangle$.

The consideration for the submodule $\langle f_2\rangle$ is analogous:
factoring~\eqref{eq:CocycleCondH23^1} by $\langle f_2\rangle$, we derive $c_{21}=0$,
which thus leads to the trivial one-cocycle.
This cocycle corresponds to $\mathfrak s^1_{3.8}=\langle e_2+e_4,e_1,f_2\rangle$.

\medskip\par\noindent
$\boldsymbol{\gamma=-1.}$
The $\h_{2.4}^{-1}$-submodules of $\mathbb R^2$ are $\{0\}$, $\langle f_2\rangle$ and $\mathbb R^2$.
In the case of the submodule~$\{0\}$, consider an arbitrary linear map $\alpha\colon\h_{2.4}^{-1}\to\mathbb R^2$,
$\alpha\colon e_2-e_4\mapsto c_{11}f_1+c_{12}f_2,e_1\mapsto c_{21}f_1+c_{22}f_2$.
Acting on $\alpha$ by ${\rm Ad}\big({\rm diag}(1,1),c_{22}f_1\big)$, we can set $c_{22}=0$.
The one-cocycle condition $\alpha([e_2-e_4,e_1])=(e_2-e_4)\alpha(e_1)-e_1\alpha(e_2-e_4)$
gives us that $c_{21}=c_{11}=0$,
\begin{gather}\label{eq:CocycleCondH23^-1}
\alpha([e_2-e_4,e_1])=-\alpha(e_1)=-c_{21}f_1,\quad
(e_2-e_4)\alpha(e_1)-e_1\alpha(e_2-e_4)=-c_{11}f_2.
\end{gather}
Then, acting on this cocycle by ${\rm Ad}\big({\rm diag}(1,1),-c_{12}f_2\big)$,
we set $c_{12}=0$, thus the cocycle is trivial.
This yields the subalgebra $\mathfrak s_{2.11}^{-1}=\langle e_2-e_4,e_1\rangle$.

For the submodule~$\langle f_2\rangle$, the consideration is similar:
factoring~\eqref{eq:CocycleCondH23^-1} by the submodule~$\langle f_2\rangle$,
we can only obtain $c_{21}=0$. 
Therefore, $\alpha\colon e_2-e_4\mapsto c_{11}f_1+\langle f_2\rangle ,e_1\mapsto \langle f_2\rangle$
is a one-cocycle.
If $c_{11}\ne0$, acting on $\alpha$ by ${\rm Ad}\big({\rm diag}(c_{11}^{-1},c_{11}),0\big)$,
we can set $c_{11}=1$.
Otherwise, the cocycle $\alpha$ is trivial.
This case results in the subalgebras $\mathfrak s_{3.8}^{-1}=\langle e_2-e_4,e_1,f_2\rangle$
and $\mathfrak s_{3.9}=\langle e_2-e_4+f_1,e_1,f_2\rangle$ that correspond to the values $c_{11}=0$ and $c_{11}=1$,
respectively.

\medskip\par\noindent
$\boldsymbol{\gamma\in\mathbb R\setminus\{-1,1\}.}$
The inequivalent submodules of $\mathbb R^2$ are $\{0\}$, $\langle f_2\rangle$ and $\mathbb R^2$.
For the submodule~$\{0\}$, consider a linear map $\alpha\colon\h_{2.4}^{\gamma}\to\mathbb R^2$
given by $\alpha\colon e_2+\gamma e_4\mapsto c_{11}f_1+c_{12}f_2, e_1\mapsto c_{21}f_1+c_{22}f_2$.
Since $\mathop{\rm rank}(e_2+\gamma e_4)=2$, we can reduce $\alpha$ to
$\alpha\colon e_2+\gamma e_4\mapsto 0, e_1\mapsto c_{21}f_1+c_{22}f_2$
similarly as it is done in Lemma~\ref{lem:FullRank}.
The one-cocycle condition $\alpha([e_2+\gamma e_4,e_1])=(e_2+\gamma e_4)\alpha(e_1)-e_1\alpha(e_2+\gamma e_4)$,
\begin{gather*}
\alpha([e_2+\gamma e_4,e_1])=-\alpha(e_1)=-c_{21}f_1-c_{22}f_2,
\\
(e_2+\gamma e_4)\alpha(e_1)-e_1\alpha(e_2+\gamma e_4)
=\tfrac12(\gamma+1)c_{21}f_1+\tfrac12(\gamma-1)c_{22}f_2,
\end{gather*}
results in the system of equations $(\gamma+3)c_{21}=0$ and $(\gamma+1)c_{22}=0$.
The second equation straightforwardly gives $c_{22}=0$ since $\gamma\ne-1$.
Therefore, there are two cases to consider: $\gamma=-3$ or $\gamma\ne-3$.
If $\gamma\ne-3$, then $c_{21}=0$ and the one-cocycle is trivial,
which corresponds to the subalgebra $\mathfrak s_{2.11}^{\gamma}=\langle e_2+\gamma e_4,e_1\rangle$.
If $\gamma=-3$ and $c_{21}\ne0$, then applying ${\rm Ad}\big({\rm diag}(c_{21}^{-1},c_{21}),0\big)$
to $\alpha$ results in the one-cocycle $\alpha\colon e_2+\gamma e_4\mapsto 0, e_1\mapsto f_1$.
This cocycle corresponds to the subalgebra $\mathfrak s_{2.13}=\langle e_2-3e_4,e_1+f_1\rangle$.
Otherwise, $c_{21}=0$ yields the trivial one-cocycle
that corresponds to the subalgebra $\mathfrak s_{2.11}^{-3}=\langle e_2-3e_4,e_1\rangle$.

The cases with the submodule $\langle f_2\rangle$ are similar.
They also give us a trivial one-cocycle,
which corresponds to the subalgebra $\mathfrak s_{3.8}^\gamma=\langle e_2+\gamma e_4,e_1,f_2\rangle$,
and in the special case of $\gamma=-3$, we have a cocycle
$\alpha\colon e_2-3e_4\mapsto\langle f_2\rangle, e_1\mapsto f_1+\langle f_2\rangle$.
The corresponding to this special one-cocycle subalgebra is $\mathfrak s_{3.10}=\langle e_2-3e_4, e_1+f_1, f_2\rangle$.

The entire case of $\h_{2.4}^\gamma=\langle e_2+\gamma e_4, e_1\rangle$
gives the subalgebras $\mathfrak s_{2.11}^\gamma$, $\mathfrak s_{2.12}$,
$\mathfrak s_{2.13}$, $\mathfrak s_{3.8}^\gamma$, $\mathfrak s_{3.10}$, $\mathfrak s_{4.4}^\gamma$.

\medskip\par\noindent
$\boldsymbol{\h_{3.1}=\langle e_1,e_2,e_4\rangle.}$
The $\h_{3.1}$-submodules of $\mathbb R^2$ are $\{0\}$, $\langle f_2\rangle$ and $\mathbb R^2$.
Consider a linear map $\alpha\colon\h_{3.1}\to\mathbb R^2$ given by
$
\alpha\colon
(e_1,e_2,e_4)\mapsto(c_{11}f_1+c_{12}f_2,c_{21}f_1+c_{22}f_2,c_{31}f_1+c_{32}f_2).
$
Since $\mathop{\rm rank}e_2=2$ we can set $c_{21}=c_{22}=0$.
The one-cocycle condition
$\alpha([e_1,e_2])=e_1\alpha(e_2)-e_2\alpha(e_1)$
and $\alpha([e_2,e_4])=e_2\alpha(e_4)-e_4\alpha(e_2)$,
\begin{gather*}
\alpha([e_1,e_2])=-c_{11}f_1-c_{12}f_2,
\quad
e_1\alpha(e_2)-e_2\alpha(e_1)
=-\tfrac12 c_{11}f_1+\tfrac12 c_{12}f_2,
\\
\alpha([e_2,e_4])=0,
\quad
e_2\alpha(e_4)-e_4\alpha(e_2)
=\tfrac12 c_{31}f_1-\tfrac12 c_{32}f_2,
\end{gather*}
gives us that $c_{11}=c_{12}=c_{31}=c_{32}=0$.
Thus, $\alpha$ is the trivial cocycle.

The consideration for the rest of the submodules also gives us the trivial cocycles.
Hence, in this case, we obtain the subalgebras
$\mathfrak s_{3.11}$, $\mathfrak s_{4.5}$ and $\mathfrak s_{5.1}$.

\medskip\par\noindent
$\boldsymbol{\h_{3.2}=\langle e_1,e_2,e_3\rangle.}$
The algebra $\h_{3.2}$ is simple, thus by the Whitehead's lemma,
each cocycle is a coboundary, thus each one-cocycle is trivial.
This leads to the subalgebras $\mathfrak s_{3.12}$ and $\mathfrak s_{5.2}$ in the statement of the theorem.

\medskip\par\noindent
$\boldsymbol{\h_{4.1}=\langle e_1,e_2,e_3,e_4\rangle.}$
It is clear that each one-cocycle in this case is trivial,
thus yielding the subalgebra~$\mathfrak s_{4.6}$ and the entire algebra $\mathfrak{aff}_2(\mathbb R)$.
\end{proof}

Now, we use the classification from Theorem~\ref{thm:SubalgebrasOfAffineAlg} to classify
subalgebras of $\mathfrak{aff}_2(\mathbb R)$ modulo the action of the affine group ${\rm GL}_2(\mathbb R)\ltimes\mathbb R^2$.

\begin{corollary}\label{cor:ClassOfSubalgModGL2}
A complete list of ${\rm GL}_2(\mathbb R)\ltimes\mathbb R^2$-inequivalent subalgebras of
the rank-two affine algebra $\mathfrak{aff}_2(\mathbb R)$ is exhausted by the subalgebras presented in the list
of Theorem~\ref{thm:SubalgebrasOfAffineAlg}
subject to the following parameter gauges:
\begin{itemize}\itemsep=0ex
\item
$\varepsilon=1$ for $\mathfrak s_{1.6}^\varepsilon$,
$\mathfrak s_{2.7}^\varepsilon$ and
$\mathfrak s_{3.4}^\varepsilon$;

\item
$\gamma\geqslant0$ for $\mathfrak s_{1.7}^\gamma$ and
$\mathfrak s_{3.5}^\gamma$.
\end{itemize}
\end{corollary}

\begin{proof}
The group ${\rm GL}_2(\mathbb R)\ltimes\mathbb R^2$ is isomorphic to $\mathbb Z_2\ltimes{\rm Inn}(\mathfrak{aff}_2(\mathbb R))$,
where $\mathbb Z_2$ is isomorphic to the subgroup of ${\rm GL}_2(\mathbb R)$ generated by the diagonal matrix ${\rm diag}(1,-1)$.
Conjugating the matrices $e_1+\varepsilon e_4$ and $e_1+e_3+\gamma e_4$ by ${\rm diag}(1,-1)$ allows us to fix the gauges to $\varepsilon=1$ and $\gamma\geqslant0$, respectively.
Conjugating the subalgebra $\mathfrak s_{3.9}$ by ${\rm diag}(1,-1)$ yields the subalgebra $\langle e_2-e_4-f_1, e_1, f_2\rangle$,
which is ${\rm Inn}\big(\mathfrak{aff}_2(\mathbb R)\big)$-equivalent to $\mathfrak s_{3.9}$.

The remaining subalgebras are invariant under this conjugation.
\end{proof}

\section{Subalgebras of $\mathfrak{sl}_3(\mathbb R)$}\label{sec:sl3Classification}

The real order-three special linear Lie algebra $\mathfrak{sl}_3(\mathbb R)$
is the algebra of traceless $3\times3$ matrices with the standard matrix commutator as the Lie bracket
and it is spanned by the matrices
\begin{gather*}
E_1:=\begin{pmatrix}
0&0&0\\
1&0&0\\
0&0&0
\end{pmatrix},\
E_2:=\frac12\begin{pmatrix}
1&0&0\\
0&-1&0\\
0&0&0
\end{pmatrix},\
E_3:=\begin{pmatrix}
0&-1&0\\
0& 0&0\\
0& 0&0
\end{pmatrix},\ 
D:=\frac16\begin{pmatrix}
1&0& 0\\
0&1& 0\\
0&0&-2
\end{pmatrix},
\\
P_1:=\begin{pmatrix}
0&0&1\\
0&0&0\\
0&0&0
\end{pmatrix},\quad
P_2:=\begin{pmatrix}
0&0&0\\
0&0&1\\
0&0&0
\end{pmatrix},\quad
R_1:=\begin{pmatrix}
0&0&0\\
0&0&0\\
0&-1&0
\end{pmatrix},\quad
R_2:=\begin{pmatrix}
0&0&0\\
0&0&0\\
1&0&0
\end{pmatrix}.
\end{gather*}
In this way, the algebra $\mathfrak{sl}_3(\mathbb R)$ is defined through its
faithful irreducible representation of the minimal dimension,
which is exactly the vector space $\mathbb R^3$.
Using this representation, we should find all \textit{irreducibly} and \textit{reducibly embedded} maximal subalgebras of $\mathfrak{sl}_3(\mathbb R)$.

\subsection{Irreducibly embedded subalgebras}\label{subsec:SL3IrredEmbed}

Recall that an irreducibly embedded subalgebra is a reductive Lie algebra, i.e.,
semisimple or the direct product of semisimple Lie algebra with an abelian one (over a field of characteristic zero).
We can show that the only semisimple Lie algebras admitting faithful three-dimensional representation are
the special orthogonal algebras $\mathfrak{so}_3(\mathbb R)$ and $\mathfrak{so}_{2,1}(\mathbb R)$.
For this purpose, consider the following chain of propositions.

\begin{proposition}[see details in \mbox{\cite[p.~56]{bour1975LiePart1}}]\label{prop:reductive_g}
Let $\g \subset \mathfrak{gl}_n(\mathbb F)$ be a linear Lie algebra
that has no invariant subspaces.
Then $\g$ is reductive, $\g=\mathfrak s\oplus\mathfrak z$,
where $\mathfrak s$ is a semisimple Lie algebra and~$\mathfrak{z}$ is an abelian Lie algebra.
\end{proposition}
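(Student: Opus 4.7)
The plan is to prove reductivity by reducing the Levi decomposition $\g=\mathfrak s\ltimes\mathfrak r$, where $\mathfrak s$ is a Levi subalgebra and $\mathfrak r$ is the radical of $\g$, to a direct sum $\g=\mathfrak s\oplus\mathfrak r$ by showing that the radical $\mathfrak r$ is central in $\g$. This will automatically force $\mathfrak r$ to be abelian, yielding $\mathfrak z:=\mathfrak r$ and the required decomposition $\g=\mathfrak s\oplus\mathfrak z$.

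To establish $[\g,\mathfrak r]=0$, I would apply an Engel-style invariant-subspace argument to the ideal $\mathfrak n:=[\g,\mathfrak r]$. The first step is to invoke the classical fact that in characteristic zero the elements of $[\g,\mathfrak r]$ act as nilpotent endomorphisms on $V=\mathbb F^n$; this is a consequence of Lie's theorem recorded, e.g., in Bourbaki, Chapter~I, Section~5. Next, consider the subspace
\[
V_0:=\{v\in V\mid xv=0 \text{ for all } x\in\mathfrak n\}.
\]
By Engel's theorem applied to $\mathfrak n$, the subspace $V_0$ is nonzero. Moreover, for any $v\in V_0$, $g\in\g$ and $x\in\mathfrak n$, one has $x(gv)=g(xv)+[x,g]v=0$, since $xv=0$ and $[x,g]\in\mathfrak n$ annihilates~$v$. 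Hence $V_0$ is a $\g$-submodule. The irreducibility of $V$ under $\g$ then forces $V_0=V$, i.e., $\mathfrak n$ acts by zero on $V$, and since the inclusion $\g\subset\mathfrak{gl}(V)$ is faithful, $\mathfrak n=[\g,\mathfrak r]=0$.

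From $[\g,\mathfrak r]=0$ it immediately follows that $\mathfrak r$ is contained in the center of $\g$; in particular, $\mathfrak r$ is abelian, and the Levi decomposition collapses into the direct sum $\g=\mathfrak s\oplus\mathfrak r$. Setting $\mathfrak z:=\mathfrak r$ completes the argument.

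The main obstacle is justifying the nilpotency claim for elements of $[\g,\mathfrak r]$. Over an algebraically closed field it follows relatively cleanly from Lie's theorem, which places the solvable ideal~$\mathfrak r$ in simultaneous upper-triangular form in a suitable basis and forces commutators with $\g$ to land in the strictly upper-triangular part. Over $\mathbb R$ one must pass to the complexification $\g_{\mathbb C}\subset\mathfrak{gl}_n(\mathbb C)$, apply the argument there, and observe that nilpotency of an individual operator is insensitive to scalar extension. Since this auxiliary fact is a standard result worked out in detail in the cited Bourbaki volume, the proof would quote it rather than reprove it.
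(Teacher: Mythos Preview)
Your proof is correct. The paper does not supply its own proof of this proposition but instead defers entirely to Bourbaki~\cite[p.~56]{bour1975LiePart1}; the argument you give is precisely the standard one found there (nilpotency of $[\g,\mathfrak r]$ on any finite-dimensional module, Engel's theorem to produce a nonzero annihilated subspace, invariance of that subspace under~$\g$ via the ideal property, and irreducibility to conclude), so your approach coincides with the cited source.
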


\begin{lemma}\label{lem:ReducOfRepresent}
Let $\g\subset\mathfrak{gl}_n(\mathbb R)$ be a linear Lie algebra
that has no invariant subspaces.
If the complexification $\g^{\mathbb C}\subset\mathfrak{gl}_n(\mathbb C)$ 
has an invariant subspace $U \subset \mathbb C^n$,
then $\mathbb C^n=U\oplus\overline U$,
where $U$ and $\overline U$ are irreducible $\g^{\mathbb C}$-submodules,
in particular, $n = 2 \dim U$.
\end{lemma}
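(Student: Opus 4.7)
The plan is to exploit the standard complex conjugation $\sigma\colon\mathbb C^n\to\mathbb C^n$, $v+iw\mapsto v-iw$ for $v,w\in\mathbb R^n$, in order to translate between complex $\g^{\mathbb C}$-submodules of $\mathbb C^n$ and real $\g$-invariant subspaces of $\mathbb R^n$, and then to invoke the irreducibility hypothesis on $\g$.

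First I would record two elementary ingredients. (i) Since every matrix in $\g$ is real, it commutes with $\sigma$; and because $\g^{\mathbb C}$ is spanned over $\mathbb C$ by $\g$, a complex subspace of $\mathbb C^n$ is $\g^{\mathbb C}$-invariant if and only if it is $\g$-invariant. Consequently, $\overline U:=\sigma(U)$ is again a $\g^{\mathbb C}$-submodule of complex dimension equal to $\dim_{\mathbb C}U$. (ii) Any $\sigma$-invariant complex subspace $W\subset\mathbb C^n$ coincides with the complexification of its real part $W\cap\mathbb R^n$; this follows immediately from writing $w=\tfrac12\bigl(w+\sigma(w)\bigr)+i\cdot\tfrac{1}{2i}\bigl(w-\sigma(w)\bigr)$, both summands being fixed by $\sigma$ and lying in $W$.

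Next I would apply these ingredients to the two $\sigma$-invariant $\g^{\mathbb C}$-submodules $U\cap\overline U$ and $U+\overline U$. By (ii), their real parts are $\g$-invariant subspaces of $\mathbb R^n$, hence each equals $\{0\}$ or $\mathbb R^n$ by the assumed irreducibility of $\g$ on $\mathbb R^n$. Since $U$ is a proper nonzero subspace of $\mathbb C^n$, the intersection $U\cap\overline U$ cannot exhaust $\mathbb C^n$ and the sum $U+\overline U$ cannot vanish, which forces $U\cap\overline U=\{0\}$ and $U+\overline U=\mathbb C^n$. This yields the decomposition $\mathbb C^n=U\oplus\overline U$ and the equality $n=2\dim_{\mathbb C}U$.

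It remains to verify irreducibility of $U$ (and, by applying $\sigma$, of $\overline U$). For this I would repeat the same argument: any nonzero proper $\g^{\mathbb C}$-submodule $V\subsetneq U$ would be a nonzero proper submodule of $\mathbb C^n$ as well, whence $\mathbb C^n=V\oplus\overline V$ and $\dim_{\mathbb C}V=n/2=\dim_{\mathbb C}U$; combined with $V\subset U$ this forces $V=U$, a contradiction. I do not foresee a serious obstacle beyond careful bookkeeping of which subspaces are $\sigma$-stable, so that the passage to real subspaces of $\mathbb R^n$ stays legitimate at each step.
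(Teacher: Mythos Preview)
Your argument is correct. Both your proof and the paper's hinge on the same underlying mechanism---producing a $\g$-invariant real subspace of $\mathbb R^n$ from a suitable conjugation-stable complex subspace and then invoking irreducibility---but the organization differs. The paper argues by a trichotomy on $\dim_{\mathbb C}U$: for $\dim U<n/2$ it uses that $\Re U$ is a proper $\g$-invariant real subspace, for $\dim U>n/2$ it uses Grassmann's formula to show $U\cap\mathbb R^n\ne 0$, and only afterwards treats $U\cap\overline U$. You instead work directly with the two $\sigma$-stable $\g^{\mathbb C}$-submodules $U\cap\overline U$ and $U+\overline U$, observe that each is the complexification of its real part, and read off $U\cap\overline U=0$ and $U+\overline U=\mathbb C^n$ in one stroke; the dimension equality and the irreducibility of $U$ then fall out by reapplying the same conclusion. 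Your route is slightly more streamlined in that it avoids the case split and makes the role of complex conjugation explicit throughout; the paper's route is a touch more hands-on with dimensions. Note, incidentally, that the paper's $\Re U$ coincides with your $(U+\overline U)\cap\mathbb R^n$, so the two arguments are closer than they first appear.
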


\begin{proof}
Suppose $\dim_{\mathbb{C}} U<\frac{n}{2}$,
then the linear space $\Re U$ is a $\g$-invariant subspace of $\mathbb R^n$
and $\dim_{\mathbb R}\Re U<n$, which is a contradiction.
Here $\Re$ denotes the real part function extended to the vector space~$U$.
If $\dim_{\mathbb C}U>\frac n2$, then $\dim_{\mathbb R}U>n$ and
Grassmann's identity for vector spaces gives us that $U\cap\mathbb R^n\ne\{0\}$.
The latter subspace is a $\g$-invariant subspace of $\mathbb R^n$,
which is again a contradiction.
Therefore,  $\dim_{\mathbb C}U=\frac n2$.
	
If $U\cap\overline{U}\ne\{0\}$, then it admits a basis
consisting of elements of~$\mathbb R^n$ and hence $U\cap\overline U\cap\mathbb R^n\ne\{0\}$
is a $\g$-invariant subspace of~$\mathbb R^n$, which contradicts the statement of the proposition.
\end{proof}

The proposition below follows immediately from Schur's lemma.

\begin{proposition}\label{prop:CenterComplex}
Let $\g \subset \mathfrak{gl}_n(\mathbb C)$ be a reductive linear Lie algebra,
$\mathbb C^n = V_1 \oplus \dots \oplus V_k$ be the decomposition of $\mathbb C^n$
in the sum of its irreducible submodules.
Then any central element $x \in \mathop Z(\g)$ is of the form
$x = \oplus_{i=1}^k \lambda_i 1_{V_i}$,
where $\lambda_i\in\mathbb C$ and $1_{V_i}$ stands for the identity matrix on $V_i$ for each~$i$.
\end{proposition}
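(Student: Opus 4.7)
The plan is to exhibit $x$ as an intertwiner of $\mathfrak g$-modules and then apply Schur's lemma componentwise. First I would note that by the very definition of the center, $x\in Z(\mathfrak g)$ satisfies $[x,y]=0$ for every $y\in\mathfrak g$; since $\mathfrak g\subset\mathfrak{gl}_n(\mathbb C)$, this commutation is just the equality $xy=yx$ of matrices. Viewing $x$ as a linear operator $\mathbb C^n\to\mathbb C^n$, this means precisely that $x$ is a $\mathfrak g$-module endomorphism of $\mathbb C^n$.

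The main step is then routine. Let $\iota_i\colon V_i\hookrightarrow\mathbb C^n$ be the canonical inclusion and $p_j\colon\mathbb C^n\twoheadrightarrow V_j$ the projection along the complementary irreducible summands of the fixed decomposition. For every pair $(i,j)$, the composite
\[
x_{ji}:=p_j\circ x\circ\iota_i\colon V_i\to V_j
\]
is a homomorphism of irreducible $\mathfrak g$-modules, so Schur's lemma over the algebraically closed field $\mathbb C$ implies that $x_{ji}$ is either zero or an isomorphism; in the latter case $V_i$ and $V_j$ must be isomorphic as $\mathfrak g$-modules. Applied to the diagonal case $i=j$, Schur's lemma gives $x_{ii}=\lambda_i\,1_{V_i}$ for some $\lambda_i\in\mathbb C$, which supplies the scalars in the claimed formula.

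The one point that actually needs attention is the vanishing of the off-diagonal terms $x_{ji}$ with $i\ne j$, and this is exactly the step I expect to be the main obstacle. When the chosen summands $V_1,\dots,V_k$ are pairwise non-isomorphic (the standard isotypic setup), Schur instantly rules them out and the proof ends. If some $V_i$ are mutually isomorphic, one should first observe that any $\mathfrak g$-module endomorphism of $\mathbb C^n$, hence $x$, preserves each isotypic component; within a single isotypic component $V^{\oplus m}$ the element $x$ acts through a matrix $X\in\mathfrak{gl}_m(\mathbb C)$ on the multiplicity space, and one refines the chosen decomposition of that component into irreducible lines so that this matrix becomes block‑diagonal. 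With such a refined decomposition in place, the off-diagonal $x_{ji}$ vanish and the equality $x=\bigoplus_{i=1}^{k}\lambda_i\,1_{V_i}$ follows, completing the proof.
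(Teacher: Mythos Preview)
Your use of Schur's lemma is exactly what the paper invokes (the paper merely says the proposition ``follows immediately from Schur's lemma''), and your reduction of $x$ to a $\g$-module endomorphism is correct. However, the ``main obstacle'' you flag is not an obstacle at all, and your proposed resolution both overcomplicates the argument and, strictly speaking, proves a weaker statement than the one asserted.

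The point you are missing is that $x$ is not merely an arbitrary $\g$-intertwiner of $\mathbb C^n$: it is an element of $\g$ itself. Since each $V_i$ is by hypothesis a $\g$-submodule, it is in particular invariant under $x$, i.e.\ $x(V_i)\subset V_i$. Hence every off-diagonal block $x_{ji}=p_j\circ x\circ\iota_i$ with $i\ne j$ vanishes automatically, regardless of whether $V_i$ and $V_j$ are isomorphic, and with no appeal to Schur needed for that step. Schur's lemma is then applied only to the restriction $x|_{V_i}\colon V_i\to V_i$, which commutes with the (irreducible) action of $\g$ on $V_i$ because $x$ is central, giving $x|_{V_i}=\lambda_i 1_{V_i}$.

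Your proposed workaround---refining the decomposition inside an isotypic component so that the multiplicity-space matrix $X\in\mathfrak{gl}_m(\mathbb C)$ becomes diagonal---is problematic on two counts. First, you have not argued that $X$ is diagonalizable (over $\mathbb C$ it could a~priori have nontrivial Jordan blocks). Second, and more importantly, the proposition fixes a decomposition $V_1\oplus\cdots\oplus V_k$ and asserts that $x$ is block-scalar with respect to \emph{that} decomposition; replacing it by a refined one would only establish the existence of some decomposition with the desired property.
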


\begin{proposition}[see details in~\mbox{\cite[p.~68]{bour1975LiePart1}}]\label{prop:CenterReal}
Let $\g$ be a real reductive Lie algebra. Then its complexification
$\g^{\mathbb{C}}$ is a complex reductive Lie algebra and
$\mathop Z(\g^{\mathbb{C}}) = \mathop Z(\g)^{\mathbb{C}}$.
\end{proposition}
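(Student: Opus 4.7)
The plan is to leverage the structural decomposition of a reductive Lie algebra. By the definition invoked in Proposition~\ref{prop:reductive_g}, I would write $\g=\mathfrak{s}\oplus\mathfrak{z}$ as a Lie algebra direct sum, where $\mathfrak{s}$ is a semisimple ideal and $\mathfrak{z}$ is an abelian ideal. The relation $[\mathfrak{s},\mathfrak{z}]=0$ together with the vanishing of the center of any semisimple Lie algebra identifies $\mathfrak{z}$ with the center~$\mathop Z(\g)$. Complexification is compatible with direct sums, so $\g^{\mathbb C}=\mathfrak{s}^{\mathbb C}\oplus\mathfrak{z}^{\mathbb C}$ as Lie algebras over~$\mathbb C$.

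Next, I would verify that each summand keeps its type under complexification. For the abelian part~$\mathfrak{z}^{\mathbb C}$ this is immediate. For the semisimple part~$\mathfrak{s}^{\mathbb C}$, I would invoke Cartan's criterion: the Killing form of~$\g^{\mathbb C}$ is the $\mathbb C$-linear extension of that of~$\g$, so its restriction to~$\mathfrak{s}^{\mathbb C}$ is nondegenerate, making~$\mathfrak{s}^{\mathbb C}$ semisimple. Consequently, $\g^{\mathbb C}$ is a Lie algebra direct sum of a semisimple and an abelian complex Lie algebra, hence reductive by the same definition as in the real case.

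For the center equality, I would argue by taking real and imaginary parts. The inclusion $\mathop Z(\g)^{\mathbb C}\subseteq\mathop Z(\g^{\mathbb C})$ is immediate from $\mathbb C$-bilinearity of the bracket: if $x_1,x_2\in\mathop Z(\g)$, then $[x_1+ix_2,\,y_1+iy_2]=0$ for every $y_1+iy_2\in\g^{\mathbb C}$. For the reverse inclusion, take $x=x_1+ix_2\in\mathop Z(\g^{\mathbb C})$ with $x_1,x_2\in\g$; evaluating the identity $[x,y]=[x_1,y]+i[x_2,y]=0$ for an arbitrary $y\in\g$ and separating the real and imaginary parts in~$\g$ forces $[x_1,y]=[x_2,y]=0$, so $x_1,x_2\in\mathop Z(\g)$ and hence $x\in\mathop Z(\g)^{\mathbb C}$.

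I do not anticipate a genuine obstacle in this argument: every step reduces to a routine property of complexification and an application of Cartan's criterion. The only slight subtlety worth flagging at the outset is the identification $\mathfrak{z}=\mathop Z(\g)$, which is what turns the second claim of the proposition into a transparent consequence of the real/imaginary parts decomposition rather than an independent computation.
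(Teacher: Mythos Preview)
Your argument is correct. Note, however, that the paper does not supply its own proof of this proposition: it is stated with a pointer to Bourbaki and used as a black box, so there is no in-paper proof to compare against. That said, your approach is precisely the standard one underlying the cited reference---decompose $\g=\mathfrak s\oplus\mathop Z(\g)$, observe that complexification respects direct sums, invoke Cartan's criterion to see that $\mathfrak s^{\mathbb C}$ stays semisimple, and recover the center identity by splitting into real and imaginary parts---so there is nothing to add.
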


Lemma~\ref{lem:ReducOfRepresent} and Propositions~\ref{prop:reductive_g}, \ref{prop:CenterComplex} and \ref{prop:CenterReal}
imply that irreducibly embedded Lie subalgebras of~$\mathfrak{sl}_3(\mathbb R)$ are semisimple.
Their complexifications are complex semisimple Lie algebras of dimension less than eight
that have faithful representation on $\mathbb C^3$.
The following proposition shows that irreducibly embedded Lie subalgebras of~$\mathfrak{sl}_3(\mathbb R)$ can only be real forms of 
$\mathfrak{sl}_2(\mathbb C)$.

\begin{proposition}
The Lie algebra $\g = \mathfrak{sl}_2(\mathbb{C}) \oplus \mathfrak{sl}_2(\mathbb{C})$
has no faithful representations on $\mathbb{C}^3$.
\end{proposition}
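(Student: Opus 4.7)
The plan is to exploit complete reducibility together with the classification of irreducible representations of $\mathfrak{sl}_2(\mathbb C)\oplus\mathfrak{sl}_2(\mathbb C)$. Since $\mathfrak g=\mathfrak{sl}_2(\mathbb C)\oplus\mathfrak{sl}_2(\mathbb C)$ is a complex semisimple Lie algebra, Weyl's theorem on complete reducibility ensures that every finite-dimensional representation of~$\mathfrak g$ on $\mathbb C^3$ splits as a direct sum of irreducible $\mathfrak g$-submodules. Hence it suffices to understand the irreducible summands of dimension at most three.

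The key structural fact I would use is that the irreducible finite-dimensional representations of a direct sum $\mathfrak g_1\oplus\mathfrak g_2$ of semisimple Lie algebras are exactly the outer tensor products $V\boxtimes W$ of irreducible representations~$V$ of~$\mathfrak g_1$ and~$W$ of~$\mathfrak g_2$. Combined with the standard classification of irreducible $\mathfrak{sl}_2(\mathbb C)$-modules by highest weight, this yields that the irreducible $\mathfrak g$-modules are parametrized by pairs $(m,n)$ of non-negative integers with $\dim(V_m\boxtimes V_n)=(m+1)(n+1)$.

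Next I would enumerate all ways to decompose $3$ into a sum of such products. The admissible multiset of irreducible dimensions summing to~$3$ is one of $(3)$, $(2,1)$ or $(1,1,1)$, where the irreducible dimensions arising from $\mathfrak g$ are $(m+1)(n+1)$, i.e., $1,2,3,4,\dots$ In each case the dimension-$3$ summand must be $V_2\boxtimes V_0$ or $V_0\boxtimes V_2$, and a dimension-$2$ summand must be $V_1\boxtimes V_0$ or $V_0\boxtimes V_1$; the trivial summand $V_0\boxtimes V_0$ is the one-dimensional case. In every decomposition, at least one of the two factors of~$\mathfrak g$ acts as zero on every irreducible summand, and therefore acts as zero on the whole $\mathbb C^3$.

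Consequently the kernel of any such representation contains a nonzero ideal, so the representation cannot be faithful. The only step requiring a little care is the reduction to outer tensor products of irreducibles; this is the most delicate point but is a completely standard consequence of Schur's lemma applied to the commuting actions of the two factors on an irreducible $\mathfrak g$-module.
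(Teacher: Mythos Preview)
Your proof is correct, but it follows a different route from the paper's. You invoke two standard structure theorems---Weyl's complete reducibility and the classification of irreducible modules over a direct sum as outer tensor products $V_m\boxtimes V_n$---and then finish by a short case analysis on the partitions of~$3$ into products $(m+1)(n+1)$. This is clean and systematic: once the classification is on the table, the conclusion is almost a table lookup.

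The paper instead argues directly without the tensor-product classification. It assumes a faithful $\rho$ and restricts to the first factor $\mathfrak g_1$: if $\rho|_{\mathfrak g_1}$ were irreducible, Schur's lemma would force $\rho(\mathfrak g_2)\subset\mathbb C\cdot 1$, hence zero, a contradiction; so $\rho|_{\mathfrak g_1}$ is reducible and one extracts a one-dimensional $\mathfrak g$-invariant subspace from the common kernel. The quotient representation on $\mathbb C^3/\mathbb C\langle v\rangle\simeq\mathbb C^2$ is then shown to be faithful, contradicting $\dim\mathfrak g=6>4=\dim\mathfrak{gl}_2(\mathbb C)$. This argument is more hands-on and self-contained---it needs only Schur's lemma and Weyl's theorem, not the tensor-product description of irreducibles---at the cost of being slightly longer and more ad hoc. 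Your approach buys transparency and generalizes immediately to similar questions; the paper's approach keeps prerequisites minimal.
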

\begin{proof}
Ad absurdum.
Suppose that $\rho\colon\g\to\mathfrak{gl}_3(\mathbb C)$ is a faithful representation of the Lie algebra~$\g$.
Denoting the direct summands of the Lie algebra $\g=\mathfrak{sl}_2(\mathbb{C}) \oplus \mathfrak{sl}_2(\mathbb{C})$
as $\g_1$ and $\g_2$, respectively,
consider the restriction $\rho\vert_{\g_1}$ of $\rho$ to the first summand.
If $\rho \vert_{\g_1}$ is irreducible, then
$[\rho(\g_1),\rho(\g_2)]=\rho([\g_1,\g_2])=0$
implies $\rho(\g_2)\subset\mathbb C 1_{\mathbb C^3}$ by Schur's lemma.
Since the Lie subalgebra $\g_2$ is simple, it follows that $\rho(\g_2)=0$,
which contradicts the fact that the representation~$\rho$ is faithful.
Therefore, the representation $\rho \vert_{\g_1}$ is reducible and thus is completely reducible.
Moreover, since $\dim\mathbb C^3=3$, the representation $\rho \vert_{\g_1}$
has a one-dimensional invariant subspace $\mathbb C\left< v \right> \subset \mathbb C^3$
and, in particular, $\rho\vert_{\g_1}v=0$. 
	
Consider $\g_1$-submodule $K:=\cap_{x\in\g_1}\ker\rho(x)$ of $\mathbb C^3$.
It is clear that $\mathbb C\langle v\rangle\subset K$.
The dimension of the submodule $K$ cannot be equal to three
since the action of $\g_1$ on $\mathbb C^3$ would be trivial in this case.
If $\dim K=2$, Weyl's theorem gives us that $\g_1$-submodule $K$ has a $\g_1$-invariant complement in $\mathbb C^3$.
The latter implies that the quotient representation of
$\rho\vert_{\g_1}$ on the space $\mathbb C^3/K$ is faithful,
which cannot be possible since $\dim\mathbb C^3/K=1$.
Therefore, $\dim K=1$ and $\mathbb C\langle v\rangle=K=\cap_{x\in\g_1}\ker\rho(x)$.
	
The equality $[\rho(\g_1),\rho(\g_2)] = 0$ implies that $\mathbb C\langle v\rangle$
is an invariant subspace of the representation~$\rho$ and hence
$\mathbb C\langle v\rangle = \cap_{x \in \g} \ker \rho(x)$.
Using the arguments analogous to those in the previous paragraph, 
we can show that the quotient representation
$\bar\rho\colon\g\to\mathfrak{gl}(\mathbb C^3/\mathbb C\langle v\rangle)$
is faithful as well, which is a contradiction, since
$6 = \dim\g\nleq\dim\mathfrak{gl}_2(\mathbb C)=4$.
\end{proof}

\subsection{Reducibly embedded subalgebras}\label{subsec:SL3RedEmbed}

A maximal subalgebra $\mathfrak s\subset\mathfrak{sl}_3(\mathbb R)$ that is reducibly embedded
possesses a proper invariant subspace $V\subset\mathbb R^3$.
Up to a change of basis, we can assume that the invariant subspaces are either $V_1=\langle (1,0,0),(0,1,0)\rangle$ or $V_2=\langle(0,0,1)\rangle$.
Consequently, the corresponding maximal subalgebras are
\begin{gather*}
\mathfrak a_1:=\left\{
\begin{pmatrix}
a_{11}&a_{12}&a_{13}\\
a_{21}&a_{22}&a_{23}\\
     0&     0&-a_{11}-a_{22}
\end{pmatrix}
\Bigm\vert a_{i,j}\in\mathbb R
\right\}
=\langle E_1,E_2,E_3,D,P_1,P_2\rangle,
\\
\mathfrak a_2:=\left\{
\begin{pmatrix}
a_{11}&a_{12}&0\\
a_{21}&a_{22}&0\\
a_{31}&a_{32}&-a_{11}-a_{22}
\end{pmatrix}
\Bigm\vert a_{i,j}\in\mathbb R
\right\}
=\langle E_1,E_2,E_3,D,R_1,R_2\rangle.
\end{gather*}
Both the subalgebras~$\mathfrak a_1$ and~$\mathfrak a_2$ are isomorphic to the rank-two real affine Lie algebra~$\mathfrak{aff}_2(\mathbb R)$,
thus Theorem~\ref{thm:SubalgebrasOfAffineAlg} provides the lists of their inequivalent subalgebras up to the corresponding inner automorphism groups.
However, these lists may contain subalgebras that are conjugate modulo the action of the Lie group ${\rm SL}_3(\mathbb R)$.
Therefore, our next task is to identify such ${\rm SL}_3(\mathbb R)$-equivalent subalgebras and eliminate redundancies. 
To this end, we first determine which subalgebras of $\mathfrak a_1$ are ${\rm SL}_3(\mathbb R)$-inequivalent.
We then supplement the obtained list with the subalgebras of $\mathfrak a_2$ that are both ${\rm SL}_3(\mathbb R)$-inequivalent to each other
and not ${\rm SL}_3(\mathbb R)$-conjugate to any subalgebra of $\mathfrak a_1$.

As Lemma~\ref{lem:SL3Decomp} establishes below, we should use the classification of subalgebras of $\mathfrak{aff}_2(\mathbb R)$ presented in Corollary~\ref{cor:ClassOfSubalgModGL2}. 
The isomorphism between the Lie algebras $\mathfrak{aff}_2(\mathbb R)$ and $\mathfrak a_1$ is given by a linear map~$\rho_1$ defined on the chosen bases of the algebras via the correspondence
\[
(e_1,e_2,e_3,e_4,f_1,f_2)\mapsto(E_1,E_2,E_3,D,P_1,P_2).
\]
Hence, a complete list of inequivalent subalgebras of $\mathfrak a_1$ (with respect to the action of its inner automorphism group)
is constituted by the subalgebras $\hat{\mathfrak s}_{i.j}^*:=\rho_1(\mathfrak s_{i.j}^*)$,
where~$\mathfrak s_{i.j}^*$ are subalgebras listed in Corollary~\ref{cor:ClassOfSubalgModGL2}.
To determine which of the subalgebras $\hat{\mathfrak s}_{i.j}^*$
are equivalent under the action of ${\rm SL}_3(\mathbb R)$, we use the following elementary observation,
which is a variant of the general approach discussed in Section~\ref{subsec:Simple}, see Proposition~\ref{prop:ListMerging} and Remark~\ref{rem:ListMerging}.

\begin{lemma}\label{lem:SL3Decomp}
(i) Any matrix $S$ from ${\rm SL}_3(\mathbb R)$ can be decomposed into a product of matrices $A\in {\rm A}_1\subset{\rm SL}_3(\mathbb R)$ and $M$, $S=AM$,
where ${\rm A}_1$ is isomorphic to the group ${\rm Aff}_2(\mathbb R)$,
\begin{gather*}
{\rm A}_1:=\left\{\begin{pmatrix}
a & b & x\\
c & d & y\\
0 & 0 & (ad-bc)^{-1}
\end{pmatrix}
\Bigm\vert
a,b,c,d,x,y\in\mathbb R~\text{such that}~ad-bc\ne0\right\},
\end{gather*}
and $M$ is of one of the following forms:
\begin{gather*}
M_1(c_1,c_2):=
\begin{pmatrix}
1   & 0   & 0\\
0   & 1   & 0\\
c_1 & c_2 & 1
\end{pmatrix},
\ 
M_2(c_1,c_2):=
\begin{pmatrix}
-1  & 0   & 0\\
 0  & 0   & 1\\
c_1 & 1   & c_2
\end{pmatrix},
\ 
M_3(c_1,c_2):=
\begin{pmatrix}
0  & 0   & -1\\
0  & 1   &  0\\
1  & c_1 & c_2
\end{pmatrix},
\end{gather*}
where the parameters~$c_1$ and~$c_2$ are arbitrary real numbers.
This decomposition corresponds to $\mathfrak m=\mathfrak a_1$ in Proposition~\ref{prop:ListMerging}.

(ii)
There is a similar decomposition for $\mathfrak m=\mathfrak a_2$ in notation of Proposition~\ref{prop:ListMerging}:
any matrix~$S$ from ${\rm SL}_3(\mathbb R)$ decomposes into a product of matrices
$A\in {\rm A}_2\subset{\rm SL}_3(\mathbb R)$ and $N$, $S=AN$,
where ${\rm A}_2$ is isomorphic to the group ${\rm Aff}_2(\mathbb R)$,
\begin{gather*}
{\rm A}_2:=\left\{\begin{pmatrix}
a & b & 0\\
c & d & 0\\
x & y & (ad-bc)^{-1}
\end{pmatrix}
\Bigm\vert
a,b,c,d,x,y\in\mathbb R~\text{such that}~ad-bc\ne0\right\},
\end{gather*}
and~$N$ takes one of the following forms:
\begin{gather*}
N_1(c_1,c_2):=
\begin{pmatrix}
1   & 0   & c_1\\
0   & 1   & c_2\\
0   & 0   & 1
\end{pmatrix},
\ 
N_2(c_1,c_2):=
\begin{pmatrix}
-1  & c_1 & 0\\
 0  & c_2 & 1\\
 0  & 1   & 0
\end{pmatrix},
\ 
N_3(c_1,c_2):=
\begin{pmatrix}
c_1 & 0   & -1\\
c_2 & 1   &  0\\
1   & 0   &  0
\end{pmatrix}.
\end{gather*}
\end{lemma}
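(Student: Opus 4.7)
The two statements are analogous, so I first describe the plan for part~(i). The idea is to solve directly for $A$ and for the parameters $c_1, c_2$ by setting $M = M_i(c_1, c_2)$ and noting that $A = S M_i(c_1, c_2)^{-1}$, then characterising for which $S$ the resulting $A$ lies in ${\rm A}_1$. Each of $M_1, M_2, M_3$ has determinant $1$, so it lies in ${\rm SL}_3(\mathbb R)$, and its inverse can be written down explicitly with the same pattern of free entries.

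Multiplying $S$ from the right by $M_i(c_1, c_2)^{-1}$ and reading off the third row yields, for $M_1$, the row $(s_{31} - c_1 s_{33},\ s_{32} - c_2 s_{33},\ s_{33})$; for $M_2$, the row $(-s_{31} + c_1 s_{32},\ -c_2 s_{32} + s_{33},\ s_{32})$; and for $M_3$, the row $(c_2 s_{31} - s_{33},\ -c_1 s_{31} + s_{32},\ s_{31})$. In each case the requirement $A \in {\rm A}_1$, namely that the first two entries of the third row vanish, is a nondegenerate linear system for $(c_1, c_2)$ whose solvability is equivalent to the nonvanishing of $s_{33}$, $s_{32}$, or $s_{31}$, respectively. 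Since $\det S = 1 \ne 0$ rules out a zero third row, at least one of these three entries is nonzero, so one of the three forms is always available. Once $c_1, c_2$ are chosen so that the first two entries of the third row of $A = S M_i(c_1, c_2)^{-1}$ vanish, the remaining requirement for $A \in {\rm A}_1$ is that $A_{33}$ equal the reciprocal of the determinant of the upper-left $2 \times 2$ block of $A$; this follows automatically from $\det A = \det S \cdot \det M_i(c_1, c_2)^{-1} = 1$ together with the block-triangular structure of $A$.

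Part~(ii) is entirely analogous, with ``third row'' replaced by ``third column'': ${\rm A}_2$ consists of matrices whose third column has the form $(0, 0, \ast)^{\mathsf T}$, and I impose that $A = S N_i(c_1, c_2)^{-1}$ have its third column of this shape. The resulting linear systems for $(c_1, c_2)$ are now built from the top two rows of $S$, and their solvability conditions translate into the nonvanishing of the three $2 \times 2$ minors of $S$ obtained by striking out the third row and one of the three columns in turn. Since $\det S = 1$ forces the first two rows of $S$ to be linearly independent, at least one of these minors is nonzero, so again one of the three forms is always available; the verification that the resulting $A$ lies in ${\rm A}_2$ follows from $\det A = 1$ as before.

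The only real obstacle is bookkeeping: computing the inverses $M_i(c_1, c_2)^{-1}$ and $N_i(c_1, c_2)^{-1}$ correctly, and tracking signs when relating the linear systems to minors of $S$. Conceptually, however, the lemma simply provides canonical representatives for the three standard affine charts of the projective plane of left cosets, which for part~(i) is parameterised by the projective class of the third row of $S$, and for part~(ii) by the projective class of the third column of $S^{-1}$ (equivalently, by the $2 \times 2$ minors supported on the first two rows of $S$).
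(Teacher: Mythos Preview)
Your proof is correct. The paper states this lemma without proof, calling it an ``elementary observation'' (a specific instance of the coset-representative construction described in Proposition~\ref{prop:ListMerging} and Remark~\ref{rem:ListMerging}), so your direct computation of $A=SM_i^{-1}$ and the resulting linear systems for $(c_1,c_2)$ is exactly the omitted verification; the geometric remark identifying the three cases with the standard affine charts on the projective line of cosets $\mathrm A_1\backslash\mathrm{SL}_3(\mathbb R)$ (respectively $\mathrm A_2\backslash\mathrm{SL}_3(\mathbb R)$) is also apt.
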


Note that both the groups $A_1$ and $A_2$ in the previous lemma are isomorphic to the affine group ${\rm GL}_2(\mathbb R)\ltimes\mathbb R^2$.
This is why, as a starting point, we use the classification of subalgebras of $\mathfrak{aff}_2(\mathbb R)$
presented in Corollary~\ref{cor:ClassOfSubalgModGL2}.

\begin{corollary}\label{cor:IneqSubalgsOfA1}
Among the subalgebras of~$\mathfrak a_1$, the following are inequivalent with respect to the action of~${\rm SL}_3(\mathbb R)$:
\begin{align*}
&1{\rm D}\colon\
\langle E_1+\delta P_1\rangle,\ \
\langle E_1+E_3+\kappa D\rangle,\ \
\langle E_2+\mu'D\rangle,\ \
\langle E_1+D\rangle,
\\[1ex]
&2{\rm D}\colon\
\langle P_1,P_2\rangle,\ \
\langle E_1+\delta P_1,P_2\rangle,\ \
\langle E_2+D+P_2,P_1\rangle,\ \
\langle E_1+D,P_2\rangle,
\\
&\hphantom{2{\rm D}\colon\ }
\langle E_1,D\rangle,\ \
\langle E_2,D\rangle,\ \
\langle E_1+E_3,D\rangle,\ \
\langle E_2+\gamma D,E_1\rangle,\ \
\langle E_2-3D,E_1+P_1\rangle,
\\[1ex]
&3{\rm D}\colon\
\langle E_1,P_1,P_2\rangle,\ \
\langle D,P_1,P_2\rangle,\ \
\langle E_2+\kappa D,P_1,P_2\rangle,\ \
\langle E_1+D,P_1,P_2\rangle,
\\
&\hphantom{3{\rm D}\colon\ }
\langle E_1+E_3+\kappa D,P_1,P_2\rangle,\ \
\langle E_2+\mu D,E_1,P_2\rangle,\ \
\langle E_2-D+P_1,E_1,P_2\rangle,
\\
&\hphantom{3{\rm D}\colon\ }
\langle E_2-3D, E_1+P_1,P_2\rangle,\ \
\langle E_1,E_2,D\rangle,\ \
\langle E_1,E_2,E_3\rangle,
\\[1ex]
&4{\rm D}\colon\ 
\langle E_1,D,P_1,P_2\rangle,\ \
\langle E_2,D,P_1,P_2\rangle,\ \
\langle E_1\!+\!E_3,D,P_1,P_2\rangle,\ \ 
\langle E_2\!+\!\gamma D,E_1,P_1,P_2\rangle,
\\
&\hphantom{4{\rm D}\colon\ }
\langle E_1,E_2,D,P_2\rangle,\ \
\langle E_1,E_2,E_3,D\rangle,
\\[1ex]
&5{\rm D}\colon\
\langle E_1,E_2,D,P_1,P_2\rangle,\ \
\langle E_1,E_2,E_3,P_1,P_2\rangle,
\end{align*}
where $\delta\in\{0,1\}$,
$\kappa\geqslant0$,
$\mu\in[-1,3]$,
$\mu'\in[0,1]$
and $\gamma\in\mathbb R$.
\end{corollary}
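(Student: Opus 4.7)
The plan is to apply Proposition~\ref{prop:ListMerging} with $\mathfrak m=\mathfrak a_1$ and use the explicit decomposition provided by Lemma~\ref{lem:SL3Decomp}(i). Since $\rho_1$ is an isomorphism of Lie algebras $\mathfrak{aff}_2(\mathbb R)\to\mathfrak a_1$ that identifies the group $\mathop{\rm Aff}_2(\mathbb R)$ with $\mathrm A_1\subset{\rm SL}_3(\mathbb R)$, the subalgebras $\hat{\mathfrak s}^{\,*}_{i.j}=\rho_1\mathfrak s^{\,*}_{i.j}$ constitute a complete, $\mathrm A_1$-irredundant list of subalgebras of $\mathfrak a_1$. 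Hence, by Proposition~\ref{prop:ListMerging}, to obtain the ${\rm SL}_3(\mathbb R)$-equivalence classes inside $\mathfrak a_1$ it suffices to take $C=\{M_1(c_1,c_2),M_2(c_1,c_2),M_3(c_1,c_2)\mid c_1,c_2\in\mathbb R\}$ and, for every pair $(\hat{\mathfrak s}^{\,*}_{i.j},\hat{\mathfrak s}^{\,*}_{i'.j'})$ from the translated list in Theorem~\ref{thm:SubalgebrasOfAffineAlg}, to check whether some $M\in C$ satisfies $\mathop{\rm Ad}_M\hat{\mathfrak s}^{\,*}_{i.j}\subset\mathfrak a_1$ and lands on $\hat{\mathfrak s}^{\,*}_{i'.j'}$ up to $\mathrm A_1$-equivalence.

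Concretely, for each of the subalgebras listed in Theorem~\ref{thm:SubalgebrasOfAffineAlg} (translated into $\mathfrak a_1$) one would carry out the following routine computation in turn: pick a basis of the subalgebra, compute $M_k X M_k^{-1}$ for each basis element $X$ and each $k\in\{1,2,3\}$, and read off the condition on $(c_1,c_2)$ under which the resulting matrices continue to have a zero bottom row except in position $(3,3)$ (equivalently, belong to~$\mathfrak a_1$). Whenever such a condition is solvable, one computes the resulting subspace $\mathop{\rm Ad}_{M_k}\hat{\mathfrak s}^{\,*}_{i.j}\subset\mathfrak a_1$, reduces it modulo $\mathrm A_1$-equivalence using Theorem~\ref{thm:SubalgebrasOfAffineAlg}, and thus identifies the (possibly) coincident equivalence class. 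The $M_1$-case is a pure translation of $\mathfrak a_1$ into itself and yields no new identifications beyond those already inherent in $\mathrm A_1$; the nontrivial identifications come from $M_2$ and $M_3$, which swap the invariant subspaces $V_1$ and $V_2$ used to define $\mathfrak a_1$.

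Executing this for each family would produce the explicit identifications responsible for every merge between the families $\mathfrak s^{\,*}_{i.j}$, notably: the collapse of the two families $\mathfrak s^\varepsilon_{1.6}$ to the single representative $\langle E_1+D\rangle$ via an $M_2$-conjugation (and similarly $\mathfrak s^\varepsilon_{2.7}$ to $\langle E_1+D,P_2\rangle$, $\mathfrak s^\varepsilon_{3.4}$ to one normal form); the reduction of the parameter range for $\mathfrak s^\kappa_{1.4}$ from $\kappa\geqslant0$ to $\mu'\in[0,1]$ via the $M_2$-map $e_2+\kappa e_4\mapsto$ (a rescaled element whose ratio of eigenvalues is inverted); an analogous collapse of $\mathfrak s^\gamma_{3.8}$ to the range $\mu\in[-1,3]$; and the disappearance of $\mathfrak s_{1.5}$ and $\mathfrak s^{\kappa'}_{2.5}$ by $M_2$-identifications with previously listed subalgebras. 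The remaining subalgebras would be shown to be pairwise ${\rm SL}_3(\mathbb R)$-inequivalent by extracting ${\rm SL}_3(\mathbb R)$-invariants (for instance, the dimensions of intersections with the standard radical $\langle P_1,P_2\rangle$, conjugacy invariants of the projection to $\mathfrak{gl}_2$, traces and eigenvalue multisets of adjoint actions) that separate the families and the parameter values within each family.

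The main obstacle is the sheer combinatorial bulk: one must go through every subalgebra, each of the three $M_k$-conjugations, and a two-parameter family of conditions, and then verify both the existence of merges and their noncoincidence with already-accounted-for equivalences. The delicate places are precisely where the ``inversion'' induced by $M_2$ or $M_3$ changes the parameter in a nontrivial way (the $\mu,\mu'$ and $\gamma$ ranges above); here one must be particularly careful to determine the fundamental domain for the residual action and to ensure that boundary parameter values are not double-counted. All other identifications are of the more transparent ``swap two coordinate axes'' type and contribute only to eliminating the sign $\varepsilon$ or merging paired subalgebras.
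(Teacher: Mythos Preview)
Your overall strategy coincides with the paper's: apply Proposition~\ref{prop:ListMerging} with $\mathfrak m=\mathfrak a_1$ and $C=\{M_k(c_1,c_2)\}$ from Lemma~\ref{lem:SL3Decomp}(i), and for each $\hat{\mathfrak s}^{\,*}_{i.j}$ determine which conjugates $M_k\hat{\mathfrak s}^{\,*}_{i.j}M_k^{-1}$ land back in $\mathfrak a_1$, then identify their $\mathrm A_1$-class. The paper does exactly this case by case (with a \textsf{Sage} notebook for verification), with one shortcut you do not exploit: for the one-dimensional subalgebras it bypasses the $M$-action and argues directly via the real Jordan normal form of the generator, which in one stroke settles the merges of $\hat{\mathfrak s}_{1.1}$, $\hat{\mathfrak s}_{1.3}$, $\hat{\mathfrak s}_{1.5}$, $\hat{\mathfrak s}_{1.6}^{-1}$ and the reduction $\kappa\to\mu'\in[0,1]$.

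There is, however, a concrete error in your predicted merges. You assert that $\hat{\mathfrak s}^{\,\varepsilon}_{3.4}=\langle E_1+\varepsilon D,P_1,P_2\rangle$ collapses to a single representative, by analogy with $\hat{\mathfrak s}^{\,\varepsilon}_{1.6}$ and $\hat{\mathfrak s}^{\,\varepsilon}_{2.7}$. This is false: both values $\varepsilon=\pm1$ survive, as the statement of the corollary itself shows. The reason the analogy breaks down is that the ${\rm SL}_3(\mathbb R)$-conjugation identifying $\langle E_1+D\rangle$ with $\langle E_1-D\rangle$ does not carry the abelian ideal $\langle P_1,P_2\rangle$ to itself; the paper's computation confirms that for each $\varepsilon$ the only element of $M(\hat{\mathfrak s}^{\,\varepsilon}_{3.4})$ lying in $\mathfrak a_1$ is $\hat{\mathfrak s}^{\,\varepsilon}_{3.4}$ itself. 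Thus the heuristic ``the sign $\varepsilon$ always disappears'' is too coarse, and following it would yield an incorrect list in dimension three. A minor secondary inaccuracy: $M_1(c_1,c_2)$ does not in general map $\mathfrak a_1$ into itself (a short block computation shows the $(3,1)$- and $(3,2)$-entries need not vanish), so it cannot be dismissed as a ``pure translation of $\mathfrak a_1$ into itself''; it must be checked alongside $M_2$ and $M_3$, even if in practice it contributes no new merges.
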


\begin{proof}
The proof of this corollary is the most computationally complicated part of this research.
This is why for convenient cross-verification of the results we prepare the {\sf Sage} notebook following
all the computations throughout the proof.
For more details, see section Code availability.

To find ${\rm SL}_3(\mathbb R)$-inequivalent subalgebras among $\hat{\mathfrak s}_{i.j}^*$,
Lemma~\ref{lem:SL3Decomp} implies that it is sufficient to consider the action of the set $M:=\{M_k(c_1,c_2)\mid k\in\{1,2,3\}\text{ and }c_1,c_2\in\mathbb R\}$ on this list.
Naturally, the action by $M$ does not preserve the set of subalgebras~$\{\hat{\mathfrak s}_{i.j}^*\}$,
nor does it preserve the set of subalgebras of $\mathfrak a_1$.
In other words, the set
$M(\h):=\big\{M_k(c_1,c_2)\h M_k(c_1,c_2)^{-1}\mid k\in\{1,2,3\}\text{ and } c_1,c_2\in\mathbb R\big\}$
may contain the subalgebras of $\mathfrak{sl}_3(\mathbb R)$
that are not subalgebras of~$\mathfrak a_1$.
In fact, as the computations below show, the subset of $M(\hat{\mathfrak s}_{i.j}^*)$
whose elements are subalgebras of~$\mathfrak a_1$ is no more than finite and of small cardinality for each subalgebra~$\hat{\mathfrak s}_{i.j}^*$.
Therefore, to find the canonical representative of the equivalence class of the subalgebra~$\h$,
we select from $M(\h)$ those subalgebras that are also subalgebras of $\mathfrak a_1$
(excluding $\h$ itself) and identify their canonical representatives from the list of subalgebras $\hat{\mathfrak s}_{i.j}^*$.

Throughout the proof, the tilde $\sim$ denotes the ${\rm SL}_3(\mathbb R)$-equivalence.
Matrix $S$ is useful in the course of proving the corollary's statement,
\begin{gather*}
S:=\begin{pmatrix}
 0& -1& 0\\
-1&  0& 0\\
 0&  0&-1
\end{pmatrix}.
\end{gather*}

\smallskip\par\noindent
$\boldsymbol{1\rm D.}$
Among the one-dimensional subalgebras of $\mathfrak a_1$,
the subalgebras $\hat{\mathfrak s}_{1.2}^{\delta}$,
$\hat{\mathfrak s}_{1.4}^\kappa$,
$\hat{\mathfrak s}_{1.7}^\kappa$ with $\kappa\geqslant0$
and~$\hat{\mathfrak s}_{1.6}^1$
are definitely inequivalent because their generators have distinct (real) Jordan normal forms.
Moreover, up to gauging parameter $\kappa$ in $\hat{\mathfrak s}_{1.7}^\kappa$ to $\kappa\in\mathbb R$,
the chosen generators of the subalgebras exhaust all the possible real Jordan forms of three-by-three matrices.
Therefore, this list inherently contains within itself the list of ${\rm SL}_3(\mathbb R)$-inequivalent subalgebras.
Consequently, any additional conjugation can only occur within the families parameterized by real parameters,
i.e., $\hat{\mathfrak s}_{1.4}^\kappa$ and $\hat{\mathfrak s}_{1.7}^\kappa$ with $\kappa\geqslant0$.
This is why these two cases deserve further investigation.

For any fixed $\kappa\geqslant0$, the set $M(\hat{\mathfrak s}_{1.7}^\kappa)$ contains no subalgebras of $\mathfrak a_1$ except
the subalgebra $\hat{\mathfrak s}_{1.7}^\kappa$ itself.
This is why there is no equivalence between $\hat{\mathfrak s}_{1.7}^\kappa$ and \smash{$\hat{\mathfrak s}_{1.7}^{\kappa'}$} with $\kappa\ne\kappa'$.

The only subalgebra of $\mathfrak a_1$ that is contained in the set $M(\hat{\mathfrak s}_{1.4}^1)\setminus\{\hat{\mathfrak s}_{1.4}^1\}$
is the subalgebra $\mathfrak s_{1.3}$.

As for the remaining subalgebras from the family $\hat{\mathfrak s}_{1.4}^\kappa$,
for any fixed $\kappa$ with $\kappa\ne1$, the set $M(\hat{\mathfrak s}_{1.4}^\kappa)\setminus\{\hat{\mathfrak s}_{1.4}^\kappa\}$
contains only two subalgebras of $\mathfrak a_1$.
These subalgebras are $\hat{\mathfrak s}_{1.4}^{\tilde\kappa}$ with $\tilde\kappa=\frac{\kappa+3}{\kappa-1}$
and~$\hat{\mathfrak s}_{1.4}^{\hat\kappa}$ with $\hat\kappa=-\frac{\kappa-3}{\kappa+1}$.
The equivalences  $\hat{\mathfrak s}_{1.4}^{\kappa}\sim\hat{\mathfrak s}_{1.4}^{\tilde\kappa}\sim\hat{\mathfrak s}_{1.4}^{\hat\kappa}$
allow us to gauge the range of $\kappa$ to the set $[0,1)$.

Summing up, the ${\rm SL}_3(\mathbb R)$-inequivalent one-dimensional subalgebras of $\mathfrak a_1$ are
$\hat{\mathfrak s}_{1.2}^{\delta}$,
$\hat{\mathfrak s}_{1.7}^\kappa$,
$\hat{\mathfrak s}_{1.4}^{\mu'}$ with $\mu'\in[0,1]$
and $\hat{\mathfrak s}_{1.6}^1$.

\medskip\par\noindent
$\boldsymbol{2\rm D.}$
For a two-dimensional subalgebra
$\h\in\{\hat{\mathfrak s}_{2.1},\hat{\mathfrak s}_{2.2}^{\delta},\hat{\mathfrak s}_{2.6},
\hat{\mathfrak s}_{2.9},\hat{\mathfrak s}_{2.10},\hat{\mathfrak s}_{2.13}\}$,
the set $M(\h)$ contains only one element that is a subalgebra of $\mathfrak a_1$.
This element is the subalgebra $\h$ itself.
This means that the subalgebras
$\hat{\mathfrak s}_{2.1}$, $\hat{\mathfrak s}_{2.2}^{\delta}$, $\hat{\mathfrak s}_{2.6}$,
$\hat{\mathfrak s}_{2.9}$, $\hat{\mathfrak s}_{2.10}$ and $\hat{\mathfrak s}_{2.13}$
are mutually inequivalent and they are not equivalent to other subalgebras from the list.

For a fixed value of $\kappa$, the only elements of $M(\hat{\mathfrak s}_{2.4}^\kappa)$
that are subalgebras of $\mathfrak a_1$
are $\hat{\mathfrak s}_{2.4}^\kappa$ itself and $\langle E_2+\tilde\kappa D,E_3\rangle$,
where $\tilde\kappa=-(\kappa-3)/(\kappa+1)$.
The latter subalgebra is equivalent to the subalgebra~$\hat{\mathfrak s}_{2.11}^{-\tilde\kappa}$
under the action of the matrix $S$.

Acting on the subalgebra~$\hat{\mathfrak s}_{2.3}$ 
by the matrix $SM_2(0,0)$, we obtain the subalgebra $\smash{\hat{\mathfrak s}_{2.11}^1}$.

Any subalgebra~$\hat{\mathfrak s}_{2.5}^{\kappa'}$ with $\kappa'\ne1$ is
${\rm SL}_3(\mathbb R)$-equivalent to $\hat{\mathfrak s}_{2.11}^{\tilde\kappa'}$ with $\tilde\kappa':=(\kappa'+3)/(\kappa'-1)$
under the conjugation by the matrix $M_3(0,0)$.
Meanwhile, the only element of the set $M(\hat{\mathfrak s}_{2.5}^{1})\setminus\{\hat{\mathfrak s}_{2.5}^{1}\}$
is the subalgebra $\hat{\mathfrak s}_{2.8}=\langle E_1,D\rangle$.

Analogously, we consider the subalgebra $\hat{\mathfrak s}_{2.7}^1$:
the only element of the set $M(\hat{\mathfrak s}_{2.7}^1)\setminus\{\hat{\mathfrak s}_{2.7}^1\}$
that is a subalgebra of $\mathfrak a_1$
is the subalgebra $\langle E_2+D-2P_2,E_1\rangle$.
Conjugating this subagebra by the matrix ${\rm diag}(-2,-1/2,1)$
yields the subalgebra $\hat{\mathfrak s}_{2.12}=\langle E_2+D+P_2,E_1\rangle$.

The equivalence between the elements of the set $M(\mathfrak s_{2.11}^\gamma)\setminus\{\mathfrak s_{2.11}^\gamma\}$
and the subalgebras from the list $\hat{\mathfrak s}_{i,j}^*$ are exhaustively described in the previous paragraphs:
$\mathfrak s_{2.4}^\kappa\sim\mathfrak s_{2.11}^{\tilde\kappa}$,
$\mathfrak s_{2.5}^{\kappa'}\sim\mathfrak s_{2.11}^{\tilde\kappa'}$,
\smash{$\mathfrak s_{2.3}\sim\mathfrak s_{2.11}^1$}. 
There is no equivalence between $\mathfrak s_{2.11}^\gamma$ and \smash{$\mathfrak s_{2.11}^{\gamma'}$}
when $\gamma\ne\gamma'$.

This step results in the subalgebras 
$\hat{\mathfrak s}_{2.1}$,
$\hat{\mathfrak s}_{2.2}^{\delta}$,
$\hat{\mathfrak s}_{2.6}$,
$\hat{\mathfrak s}_{2.7}^1$,
$\hat{\mathfrak s}_{2.8}$,
$\hat{\mathfrak s}_{2.9}$,
$\hat{\mathfrak s}_{2.10}$,
$\hat{\mathfrak s}_{2.11}^\gamma$,
$\hat{\mathfrak s}_{2.13}$.

\medskip\par\noindent
$\boldsymbol{3\rm D.}$
As before, we first identify the subalgebras
whose images under the action of~$M$ contain precisely one subalgebra of~$\mathfrak a_1$.
These are $\hat{\mathfrak s}_{3.1}$,
$\hat{\mathfrak s}_{3.2}$,
$\hat{\mathfrak s}_{3.3}^\kappa$,
$\hat{\mathfrak s}_{3.4}^1$,
$\hat{\mathfrak s}_{3.5}^\kappa$,
$\hat{\mathfrak s}_{3.9}$
and $\hat{\mathfrak s}_{3.10}$.
Consequently, they are mutually inequivalent and inequivalent to other subalgebras from the list.
Additionally, $\hat{\mathfrak{s}}_{3.12}$ is the unique Levi factor of $\mathfrak{a}_1$ present in the list,
ensuring it cannot be equivalent to any other subalgebra.

The only element of the set $M(\hat{\mathfrak s}_{3.6})\setminus\{\hat{\mathfrak s}_{3.6}\}$
that is a subalgebra of $\mathfrak a_1$ is the subalgebra~$\hat{\mathfrak s}_{3.8}^1$.

The only element of the set $M(\hat{\mathfrak s}_{3.7})$ distinct form $\hat{\mathfrak s}_{3.7}$
is the subalgebra $\langle E_2,E_3,D\rangle$, which is equivalent to $\hat{\mathfrak s}_{3.11}$ under the action of the matrix~$S$.

For any fixed value of $\gamma$ with $\gamma\ne1$, the only element of the set $M(\hat{\mathfrak s}_{3.8}^\gamma)$
distinct from $\hat{\mathfrak s}_{3.8}^\gamma$ that is a subalgebra of $\mathfrak a_1$
is the subalgebra $\hat{\mathfrak s}_{3.8}^{\tilde\gamma}$ with $\tilde\gamma=\frac{\gamma+3}{\gamma-1}$.
This is why we can restrict the range of $\gamma$ to the set $[-1,3]$.

As a result, we obtain the subalgebras $\hat{\mathfrak s}_{3.1}$,
$\hat{\mathfrak s}_{3.2}$,
$\hat{\mathfrak s}_{3.3}^\kappa$,
$\hat{\mathfrak s}_{3.4}^1$,
$\hat{\mathfrak s}_{3.5}^\kappa$,
$\hat{\mathfrak s}_{3.8}^\mu$ with $\mu\in[-1,3]$,
$\hat{\mathfrak s}_{3.9}$,
$\hat{\mathfrak s}_{3.10}$,
$\hat{\mathfrak s}_{3.11}$
and  $\hat{\mathfrak s}_{3.12}$.

\medskip\par\noindent
$\boldsymbol{\rm4 D\textbf{ and }5D.}$
The image under the action by $M$ on any four-dimensional subalgebra $\hat{\mathfrak s}_{4.j}^*$ 
contains no subalgebras of $\mathfrak a_1$ different from $\hat{\mathfrak s}_{4.j}^*$.
Therefore, the four-dimensional subalgebras are ${\rm SL}_3(\mathbb R)$-inequivalent.
For five-dimensional subalgebras $\hat{\mathfrak s}_{5.1}$ and $\hat{\mathfrak s}_{5.2}$,
notice that the former is solvable whereas the latter is not.
Hence, they cannot be conjugate under any automorphism.
\end{proof}

\subsection{Inequivalent subalgebras of $\mathfrak{sl}_3(\mathbb R)$}\label{subsec:MergingSL3}

\looseness=-1
To construct a complete list of ${\rm SL}_3(\mathbb R)$-inequivalent subalgebras of the Lie algebra $\mathfrak{sl}_3(\mathbb R)$,
we consider its maximal subalgebras $\mathfrak a_1$, $\mathfrak a_2$, $\mathfrak{so}_{2,1}(\mathbb R)$
and $\mathfrak{so}_3(\mathbb R)$ and combine their lists of ${\rm SL}_3(\mathbb R)$-inequivalent subalgebras up to ${\rm SL}_3(\mathbb R)$-equivalence.
\begin{enumerate}\itemsep=0ex
\item The list of ${\rm SL}_3(\mathbb R)$-inequivalent subalgebras of the algebra $\mathfrak a_1$ is given in Corollary~\ref{cor:IneqSubalgsOfA1}.
\item\looseness=-1
To the list from Corollary~\ref{cor:IneqSubalgsOfA1} we need to add subalgebras of $\mathfrak a_2$ that are ${\rm SL}_3(\mathbb R)$-inequivalent to the subalgebras of $\mathfrak a_1$.
Clearly, if a subalgebra $\mathfrak s\subset\mathfrak a_2$ leaves two-dimensional subspace of $\mathbb R^3$ invariant,
then it is equivalent to a subalgebra from $\mathfrak a_1$.
Thus, among the list of ${\rm SL}_3(\mathbb R)$-inequivalent subalgebras of $\mathfrak a_2$,
we omit those whose arbitrary element is a~block-diagonal or a lower-triangular matrix.
The only proper subalgebras of~$\mathfrak a_2$ that remain are
$\langle E_1+E_3+\kappa D,R_1,R_2\rangle$,
$\langle E_1+E_3,D,R_1,R_2\rangle$ and
$\langle E_1,E_2,E_3,R_1,R_2\rangle$,
where $\kappa\geqslant0$.
Verifying that these subalgebras do not leave two-dimensional invariant subspace
and confirming that those of the same dimension are ${\rm SL}_3(\mathbb R)$-inequivalent
is a computational problem, which we solve using {\sf Sage} computational system (see further section Code availability).
\item
Include the irreducibly embedded subalgebras $\mathfrak{so}_3(\mathbb R)=\mathfrak f_{3.12}$ and $\mathfrak{so}_{2,1}(\mathbb R)=\mathfrak f_{3.13}$.
Since their proper subalgebras are conjugate to subalgebras of $\mathfrak a_1$,
these subalgebras should be omitted in the final classification list.
\end{enumerate}

\begin{theorem}\label{thm:SubalgebrasOfSL3}
A complete list of proper ${\rm SL}_3(\mathbb R)$-inequivalent subalgebras of the algebra $\mathfrak{sl}_3(\mathbb R)$
is exhausted by the following:
\begin{gather*}
1{\rm D}\colon\
\mathfrak f_{1.1}^{\delta}=\langle E_1+\delta P_1\rangle,\ \
\mathfrak f_{1.2}^\kappa  =\langle E_1+E_3+\kappa D\rangle,\ \
\mathfrak f_{1.3}^{\mu'}=\langle E_2+\mu'D\rangle,\ \
\mathfrak f_{1.4}=\langle E_1+D\rangle,
\\[1ex]
2{\rm D}\colon\
\mathfrak f_{2.1}          =\langle P_1,P_2\rangle,\
\mathfrak f_{2.2}^\delta   =\langle E_1+\delta P_1,P_2\rangle,\
\mathfrak f_{2.3}          =\langle E_2+D+P_2,P_1\rangle,\
\mathfrak f_{2.4}          =\langle E_1+D,P_2\rangle,
\\
\hphantom{2{\rm D}\colon\ }
\mathfrak f_{2.5}          =\langle E_1,D\rangle,\ \
\mathfrak f_{2.6}          =\langle E_2,D\rangle,\ \
\mathfrak f_{2.7}          =\langle E_1+E_3,D\rangle,
\\
\hphantom{2{\rm D}\colon\ }
\mathfrak f_{2.8}^\gamma   =\langle E_2+\gamma D,E_1\rangle,\ \
\mathfrak f_{2.9}       =\langle E_2-3D, E_1+P_1\rangle,
\\[1ex]
3{\rm D}\colon\
\mathfrak f_{3.1}            =\langle E_1,P_1,P_2\rangle,\ \
\mathfrak f_{3.2}            =\langle D,P_1,P_2\rangle,\ \
\mathfrak f_{3.3}^\kappa     =\langle E_2+\kappa D,P_1,P_2\rangle,
\\
\hphantom{3{\rm D}\colon\ }
\mathfrak f_{3.4}=\langle E_1\!+\!D,P_1,P_2\rangle,\
\mathfrak f_{3.5}^\kappa=\langle E_1\!+\!E_3\!+\!\kappa D,P_1,P_2\rangle,\
\mathfrak f_{3.6}^\kappa=\langle E_1\!+\!E_3\!+\!\kappa D,R_1,R_2\rangle,
\\
\hphantom{3{\rm D}\colon\ }
\mathfrak f_{3.7}^\mu=\langle E_2+\mu D,E_1,P_2\rangle,\ \
\mathfrak f_{3.8}    =\langle E_2-D+P_1,E_1,P_2\rangle,
\\
\hphantom{3{\rm D}\colon\ }
\mathfrak f_{3.9}    =\langle E_2-3D, E_1+P_1,P_2\rangle,\ \
\mathfrak f_{3.10}       =\langle E_1,E_2,D\rangle,\ \
\mathfrak f_{3.11}      =\langle E_1,E_2,E_3\rangle,\ \
\\
\hphantom{3{\rm D}\colon\ }
\mathfrak f_{3.12}=\langle E_1+E_3,P_1-R_2,P_2+R_1\rangle,\ \
\mathfrak f_{3.13}=\langle E_1+E_3,P_1+R_2,P_2-R_1\rangle,
\\[1ex]
4{\rm D}\colon\ 
\mathfrak f_{4.1}=\langle E_1,D,P_1,P_2\rangle,\ \
\mathfrak f_{4.2}=\langle E_2,D,P_1,P_2\rangle,\ \
\mathfrak f_{4.3}=\langle E_1+E_3,D,P_1,P_2\rangle,
\\
\hphantom{4{\rm D}\colon\ }
\mathfrak f_{4.4}^\gamma=\langle E_2+\gamma D,E_1,P_1,P_2\rangle,\ \
\mathfrak f_{4.5}=\langle E_1,E_2,D,P_2\rangle,\ \
\mathfrak f_{4.6}=\langle E_1+E_3,D,R_1,R_2\rangle,
\\
\hphantom{4{\rm D}\colon\ }
\mathfrak f_{4.7}=\langle E_1,E_2,E_3,D\rangle,
\\[1ex]
5{\rm D}\colon\
\mathfrak f_{5.1}=\langle E_1,E_2,D,P_1,P_2\rangle,\ \
\mathfrak f_{5.2}=\langle E_1,E_2,E_3,P_1,P_2\rangle,\ \
\mathfrak f_{5.3}=\langle E_1,E_2,E_3,R_1,R_2\rangle,
\\[1ex]
6{\rm D}\colon\
\mathfrak f_{6.1}=\langle E_1,E_2,E_3,D,P_1,P_2\rangle,\ \
\mathfrak f_{6.2}=\langle E_1,E_2,E_3,D,R_1,R_2\rangle,
\end{gather*}
where $\delta\in\{0,1\}$,
$\kappa\geqslant0$,
$\mu\in[-1,3]$,
$\mu'\in[0,1]$
and $\gamma\in\mathbb R$.
\end{theorem}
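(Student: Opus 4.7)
The plan is to realize the classification as a combination of four already-prepared sublists, one for each ${\rm SL}_3(\mathbb R)$-conjugacy class of maximal subalgebras of $\mathfrak{sl}_3(\mathbb R)$, and then prune the union using Proposition~\ref{prop:ListMerging}. From Sections~\ref{subsec:SL3IrredEmbed} and~\ref{subsec:SL3RedEmbed}, the only maximal subalgebras up to ${\rm SL}_3(\mathbb R)$-conjugation are the two reducibly embedded stabilizers $\mathfrak a_1$ and $\mathfrak a_2$ of a $2$-dimensional and a $1$-dimensional subspace of $\mathbb R^3$, respectively, both isomorphic to $\mathfrak{aff}_2(\mathbb R)$, and the two irreducibly embedded simple Lie algebras $\mathfrak{so}_3(\mathbb R)$ and $\mathfrak{so}_{2,1}(\mathbb R)$. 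Any proper subalgebra of $\mathfrak{sl}_3(\mathbb R)$ lies in at least one of these.

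The subalgebras of $\mathfrak a_1$ up to ${\rm SL}_3(\mathbb R)$-conjugation are furnished by Corollary~\ref{cor:IneqSubalgsOfA1}; I simply rename them $\mathfrak f_{i.j}^*$ to obtain the bulk of the list (all entries with generators in $\langle E_1,E_2,E_3,D,P_1,P_2\rangle$, together with $\mathfrak f_{6.1}=\mathfrak a_1$ itself). For $\mathfrak a_2$, I would argue that any subalgebra $\mathfrak s\subset\mathfrak a_2$ preserving a $2$-dimensional subspace of $\mathbb R^3$ stabilizes also some $2$-dimensional subspace in a way that conjugates $\mathfrak s$ into $\mathfrak a_1$ (indeed, via Lemma~\ref{lem:SL3Decomp}(ii) one can move the additional invariant plane into the standard one). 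Thus, from the classification of subalgebras of $\mathfrak a_2$, obtained by applying the isomorphism $\rho_2:(e_1,e_2,e_3,e_4,f_1,f_2)\mapsto(E_3,-E_2,E_1,-D,R_1,R_2)$ to Theorem~\ref{thm:SubalgebrasOfAffineAlg}, only those subalgebras whose elements cannot be simultaneously put into block (lower-triangular) form survive. A direct inspection of the families shows that the only ones leaving no $2$-plane invariant are the ones containing the ``rotation'' generator $E_1+E_3$ together with both translations $R_1,R_2$; this gives precisely $\mathfrak f_{3.6}^\gamma$, $\mathfrak f_{4.6}$, $\mathfrak f_{5.3}$ and $\mathfrak f_{6.2}$.

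For the irreducibly embedded block, note that $\mathfrak{so}_3(\mathbb R)$ and $\mathfrak{so}_{2,1}(\mathbb R)$ are non-conjugate real forms of $\mathfrak{sl}_2(\mathbb C)$ (one is compact, the other split), hence non-conjugate in $\mathfrak{sl}_3(\mathbb R)$; I would exhibit them as $\mathfrak f_{3.12}$ and $\mathfrak f_{3.13}$ by a direct computation of a basis with the appropriate Killing form signature. Every proper subalgebra of either is one- or two-dimensional and reductive, and hence reducibly embedded, so by the first paragraph it is already listed among $\mathfrak f_{1.k}^*$ and $\mathfrak f_{2.k}^*$.

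The final step is to verify that no further identifications occur between the four sublists, and this is where the bulk of the real work lies. The inequivalences among entries coming from $\mathfrak a_1$ are already guaranteed by Corollary~\ref{cor:IneqSubalgsOfA1}. For a subalgebra of $\mathfrak a_2$ of the form $\mathfrak f_{3.6}^\gamma,\mathfrak f_{4.6},\mathfrak f_{5.3},\mathfrak f_{6.2}$ and any subalgebra $\h$ in the $\mathfrak a_1$-list, Proposition~\ref{prop:ListMerging} with $\mathfrak m=\mathfrak a_1$ and $C$ given by Lemma~\ref{lem:SL3Decomp}(i) reduces the question to checking that $\Ad_{M_k(c_1,c_2)}\mathfrak f_{\cdot}$ never lies in $\mathfrak a_1$; this is a finite computation of the $(3,j)$-entries of the transported matrices that can be dispatched along the lines of the proof of Corollary~\ref{cor:IneqSubalgsOfA1} (and in practice verified with the accompanying {\sf Sage} notebook). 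Similarly, I must check that internally the subalgebra family $\mathfrak f_{3.6}^\gamma$ has no hidden equivalences for distinct $\gamma$. The main obstacle is precisely this merging step: the computation is conceptually routine but large in size, requiring one to track how the two decompositions in Lemma~\ref{lem:SL3Decomp} act on each of the remaining families and confirm that no new conjugations appear beyond the ones already identified in Corollary~\ref{cor:IneqSubalgsOfA1}.
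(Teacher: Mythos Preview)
Your approach is essentially identical to the paper's: assemble the list from Corollary~\ref{cor:IneqSubalgsOfA1} for~$\mathfrak a_1$, retain from~$\mathfrak a_2$ only the subalgebras with no invariant $2$-plane (yielding exactly $\mathfrak f_{3.6}^\gamma$, $\mathfrak f_{4.6}$, $\mathfrak f_{5.3}$, $\mathfrak f_{6.2}$), append the two irreducibly embedded simple algebras, and verify via Lemma~\ref{lem:SL3Decomp} and the accompanying computation that no further cross-identifications occur. One small slip: you write that every proper subalgebra of $\mathfrak{so}_3(\mathbb R)$ or $\mathfrak{so}_{2,1}(\mathbb R)$ is ``one- or two-dimensional and reductive,'' but the two-dimensional Borel subalgebra of $\mathfrak{so}_{2,1}(\mathbb R)\cong\mathfrak{sl}_2(\mathbb R)$ is solvable, not reductive; the conclusion that such subalgebras are reducibly embedded is still correct, but the justification should instead invoke that a solvable (or one-dimensional) subalgebra of $\mathfrak{gl}_3(\mathbb R)$ always stabilizes a proper subspace.
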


\begin{remark}\label{rem:DiffBetwClassificationsR}
We compare the subalgebras of $\mathfrak{sl}_3(\mathbb R)$ from Theorem~\ref{thm:SubalgebrasOfSL3} and~\cite[Table~1]{wint2004a}
in detail in Section~\ref{sec:ClassCompare}.
The results are summarized in Table~\ref{tab:ListsCompare}.
We mark by the bullet symbol~$\bullet$ the subalgebras and subalgebra families from~\cite[Table~1]{wint2004a}
that have misprints or improper restrictions on the parameters.
In this brief remark, we focus our attention only on the discrepancies in the list~\cite[Table~1]{wint2004a},
such as wrong parameter constraints, omitted subalgebras and conjugate subalgebras that are listed as distinct.
We discuss sublists of subalgebras from~\cite[Table~1]{wint2004a} of each possible dimension separately.

In the list of one-dimensional subalgebras, the constraints on the parameter $\alpha$ in the family $\smash{W_{1.1}^{(\alpha)}}$ are also incorrect and
should be replaced by $\alpha\in[0,\arctan\frac19]$.

The list of two-dimensional subalgebras in~\cite[Table~1]{wint2004a} coincides with those presented in Theorem~\ref{thm:SubalgebrasOfSL3}.

The list of three-dimensional subalgebras has two errors.
The subalgebra $\mathfrak f_{3.8}$ are missed.
The subalgebra $W_{3.11}$ is conjugate with $W_{3.5}^{(1/18)}$.
Therefore, it should be excluded from the list.

For the family of four-dimensional subalgebras $\smash{W^{(\alpha)}_{4.6}}$ we have that $\smash{W^{(0)}_{4.6}=W^{(\pi)}_{4.6}}$.
In fact, the requirement $\alpha\in[0,\pi)$ yields the correct subalgebra family.

In the list of five- and six-dimensional subalgebras, the linear spans $W_{5.2}$ and $W_{6.2}$ have misprints
and, in fact, do not form subalgebras of~$\mathfrak{sl}_3(\mathbb R)$.

In total, the classification list from~\cite{wint2004a}
has two incorrect families of subalgebras, one redundant subalgebra, one omitted subalgebra and
two subalgebras with misprints.

The omission of the subalgebra~$\mathfrak f_{3.8}$ from the list in~\cite{wint2004a}
requires a detailed analysis.
This subalgebra arises from the subalgebra $\mathfrak s_{3.9}$
of the affine Lie algebra $\mathfrak{aff}_2(\mathbb R)$.
It is also listed as subalgebra $S_{2,2;6}$ in the classification of the nonsplitting subalgebras of $\mathfrak{aff}_2(\mathbb R)$
presented in~\cite[Section~3.3]{wint2004a}.
Direct computations in Theorem~\ref{cor:IneqSubalgsOfA1}
confirm that the subalgebra~$\mathfrak s_{3.9}$ cannot be merged with any other subalgebra of~$\mathfrak{aff}_2(\mathbb R)$.
Therefore, it should be included in the final classification.

Despite this, the subalgebra $\mathfrak f_{3.8}$ is missing from the list in~\cite[Table~1]{wint2004a},
and as demonstrated in Section~\ref{sec:ClassCompare},
it is not equivalent to any other subalgebra therein. 
This contrasts with the presence of  $\mathfrak f_{3.8}$ in the classification by Douglas and Repka~\cite{doug2016a}.
\end{remark}

To finalize this part of our study, we also consider the equivalence generated by the action of the full automorphism group ${\rm Aut}\big(\mathfrak{sl}_3(\mathbb R)\big)$
of the Lie algebra~$\mathfrak{sl}_3(\mathbb R)$ on the set of its subalgebras.
It is well known that the quotient group ${\rm Aut}\big(\mathfrak{sl}_3(\mathbb R)\big)/{\rm Inn}\big(\mathfrak{sl}_3(\mathbb R)\big)$
is isomorphic to $\mathbb Z_2$
and a complete list of outer automorphisms of the algebra $\mathfrak{sl}_3(\mathbb R)$
that are independent up to combining with each other and with inner automorphisms
is exhausted by the single involution $\mathscr I\colon x\mapsto-x^{\mathsf T}$, $x\in\mathfrak{sl}_3(\mathbb R)$, see~\cite{mura1952a}.

\begin{corollary}\label{thm:SubalgebrasOfSL3ModAut}
A complete list of proper ${\rm Aut}(\mathfrak{sl}_3(\mathbb R))$-inequivalent subalgebras of the algebra $\mathfrak{sl}_3(\mathbb R)$
is exhausted by the following spans:
\begin{gather*}1{\rm D}\colon\
\hat{\mathfrak f}_{1.1}^{\delta}=\langle E_1+\delta P_1\rangle,\ \
\hat{\mathfrak f}_{1.2}^\kappa   =\langle E_1+E_3+\kappa D\rangle,\ \
\hat{\mathfrak f}_{1.3}^{\mu'}=\langle E_2+\mu'D\rangle,\ \
\hat{\mathfrak f}_{1.4}=\langle E_1+D\rangle,
\\[0.5ex]
2{\rm D}\colon\
\hat{\mathfrak f}_{2.1}^\delta   =\langle E_1+\delta P_1,P_2\rangle,\ \
\hat{\mathfrak f}_{2.2}          =\langle E_1+D,P_2\rangle,\ \ 
\hat{\mathfrak f}_{2.3}          =\langle E_1,D\rangle,\ \
\hat{\mathfrak f}_{2.4}          =\langle E_2,D\rangle,\ \
\\
\hphantom{2{\rm D}\colon\ }
\hat{\mathfrak f}_{2.5}          =\langle E_1+E_3,D\rangle,\ \
\hat{\mathfrak f}_{2.6}^\kappa   =\langle E_2+\kappa D,E_1\rangle,\ \
\hat{\mathfrak f}_{2.7}       =\langle E_2-3D, E_1+P_1\rangle,
\\[0.5ex]
3{\rm D}\colon\
\hat{\mathfrak f}_{3.1}       =\langle E_1,P_1,P_2\rangle,\ \
\hat{\mathfrak f}_{3.2}       =\langle E_1+D,P_1,P_2\rangle,\ \
\hat{\mathfrak f}_{3.3}^\kappa=\langle E_1\!+\!E_3\!+\!\kappa D,P_1,P_2\rangle,
\\
\hphantom{3{\rm D}\colon\ }
\hat{\mathfrak f}_{3.4}^\mu=\langle E_2+\mu D,E_1,P_2\rangle,\ \
\hat{\mathfrak f}_{3.5}    =\langle E_2-3D, E_1+P_1,P_2\rangle,\ \
\hat{\mathfrak f}_{3.6}       =\langle E_1,E_2,D\rangle,\ \
\\
\hphantom{3{\rm D}\colon\ }
\hat{\mathfrak f}_{3.7}      =\langle E_1,E_2,E_3\rangle,\ \
\hat{\mathfrak f}_{3.8}=\langle E_1+E_3,P_1-R_2,P_2+R_1\rangle,
\\
\hphantom{3{\rm D}\colon\ }
\hat{\mathfrak f}_{3.9}=\langle E_1+E_3,P_1+R_2,P_2-R_1\rangle,
\\[0.5ex]
4{\rm D}\colon\ 
\hat{\mathfrak f}_{4.1}^\mu=\langle E_2-\mu D,E_1,P_1,P_2\rangle,\ \
\hat{\mathfrak f}_{4.2}=\langle E_1,E_2,D,P_2\rangle,\ \
\hat{\mathfrak f}_{4.3}=\langle E_1+E_3,D,P_1,P_2\rangle,
\\
\hphantom{4{\rm D}\colon\ }
\hat{\mathfrak f}_{4.4}=\langle E_1,E_2,E_3,D\rangle,
\\[0.5ex]
5{\rm D}\colon\
\hat{\mathfrak f}_{5.1}=\langle E_1,E_2,D,P_1,P_2\rangle,\ \
\hat{\mathfrak f}_{5.2}=\langle E_1,E_2,E_3,P_1,P_2\rangle,\ \
\\[0.5ex]
6{\rm D}\colon\
\hat{\mathfrak f}_{6.1}=\langle E_1,E_2,E_3,D,P_1,P_2\rangle,\ \
\end{gather*}
where $\delta\in\{0,1\}$, $\kappa\geqslant0$,
$\mu\in[-1,3]$, $\mu'\in[0,1]$ and $\gamma\in\mathbb R$
\end{corollary}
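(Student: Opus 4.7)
The strategy is to use the fact that $\mathrm{Aut}(\mathfrak{sl}_3(\mathbb R))/\mathrm{Inn}(\mathfrak{sl}_3(\mathbb R))\cong\mathbb Z_2$ with the nontrivial coset represented by $\mathscr I\colon x\mapsto -x^{\mathsf T}$. Therefore, the $\mathrm{Aut}(\mathfrak{sl}_3(\mathbb R))$-conjugacy classes of subalgebras are precisely the orbits of the involution $[\mathfrak h]\mapsto[\mathscr I(\mathfrak h)]$ acting on the $\mathrm{SL}_3(\mathbb R)$-conjugacy classes already listed in Theorem~\ref{thm:SubalgebrasOfSL3}. The reduced list is then obtained by choosing one canonical representative from each such orbit.

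I would first record the action of $\mathscr I$ on the distinguished basis:
\begin{gather*}
E_1\leftrightarrow E_3,\quad E_2\mapsto -E_2,\quad D\mapsto -D,\quad P_1\mapsto -R_2,\quad R_2\mapsto -P_1,\quad P_2\mapsto R_1,\quad R_1\mapsto P_2.
\end{gather*}
A direct and crucial consequence is $\mathscr I(\mathfrak a_1)=\mathfrak a_2$, i.e.\ $\mathscr I$ interchanges the two maximal reducibly embedded subalgebras. This immediately accounts for the disappearance from the reduced list of the four families $\mathfrak f_{3.6}^\gamma$, $\mathfrak f_{4.6}$, $\mathfrak f_{5.3}$, $\mathfrak f_{6.2}$ supported in $\mathfrak a_2$ but not $\mathrm{SL}_3(\mathbb R)$-conjugate to any subalgebra of $\mathfrak a_1$: each of them is $\mathscr I$-equivalent to the corresponding family $\mathfrak f_{3.5}^{-\gamma}$, $\mathfrak f_{4.3}$, $\mathfrak f_{5.2}$, $\mathfrak f_{6.1}$ inside $\mathfrak a_1$.

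Next I would proceed family by family through the remaining subalgebras $\mathfrak f\subset\mathfrak a_1$, computing $\mathscr I(\mathfrak f)$ and locating the $\mathrm{SL}_3(\mathbb R)$-class it represents in the list from Theorem~\ref{thm:SubalgebrasOfSL3}. A useful tool is the rotation $\rho=\bigl(\begin{smallmatrix}0&1&0\\-1&0&0\\0&0&1\end{smallmatrix}\bigr)\in\mathrm{SL}_3(\mathbb R)$, whose adjoint action realizes $E_1\leftrightarrow E_3$ and $E_2\mapsto -E_2$ while leaving $D$ fixed and rotating the pair $R_1,R_2$; composing $\mathscr I$ with $\mathrm{Ad}_\rho$ typically brings $\mathscr I(\mathfrak f)$ back into a subalgebra whose $\mathrm{SL}_3(\mathbb R)$-class can be read off directly. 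Many families turn out to be $\mathscr I$-fixed (including the $\mathfrak{so}_{2,1}$-type subalgebras $\mathfrak f_{3.12}$ and $\mathfrak f_{3.13}$, verified by pointwise evaluation), parameter-dependent families such as $\mathfrak f_{1.2}^\gamma$, $\mathfrak f_{2.8}^\gamma$ and $\mathfrak f_{4.4}^\gamma$ get their parameter range contracted to a suitable fundamental domain reflecting the identification $\gamma\leftrightarrow -\gamma$ (possibly combined with further inner identifications), and a small number of families merge with genuinely distinct $\mathrm{SL}_3(\mathbb R)$-classes and must be matched by comparing invariants (common kernel and image of defining matrices, Jordan structure of the adjoint action on any abelian ideal, isomorphism type of the Lie algebra structure) or by exhibiting explicit conjugators in $\mathrm{SL}_3(\mathbb R)$.

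The main obstacle is the extensive case-by-case bookkeeping and the delicate detection of every merging, especially in the three- and four-dimensional layers where several $\mathrm{SL}_3(\mathbb R)$-classes can collapse into a single $\mathrm{Aut}$-class. To keep this tractable and independently verifiable I would extend the Sage notebook from the proof of Corollary~\ref{cor:IneqSubalgsOfA1} by encoding $\mathscr I$ and a systematic search for $\mathrm{SL}_3(\mathbb R)$-conjugators, with the search space built from permutation matrices, diagonal sign-change matrices and unipotent exponentials of the translation generators $P_1,P_2,R_1,R_2$. Reading off canonical representatives from each orbit---preferring subalgebras of $\mathfrak a_1$ and the simplest parameter values---then yields exactly the list of spans $\hat{\mathfrak f}_{i.j}^*$ stated in the corollary.
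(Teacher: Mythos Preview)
Your proposal is correct and follows essentially the same approach as the paper: starting from the $\mathrm{SL}_3(\mathbb R)$-classification of Theorem~\ref{thm:SubalgebrasOfSL3}, you act with the outer involution $\mathscr I\colon x\mapsto -x^{\mathsf T}$ on each representative, reduce back to the list via explicit inner conjugations, and select one representative per orbit. The paper carries out exactly this case-by-case computation using the matrices $S$ and $\exp\bigl(\tfrac{\pi}{2}(P_2+R_1)\bigr)$ in place of your rotation $\rho$, with the same Sage notebook as an aid; your observation that $\mathscr I(\mathfrak a_1)=\mathfrak a_2$ immediately absorbs $\mathfrak f_{3.6}^\gamma$, $\mathfrak f_{4.6}$, $\mathfrak f_{5.3}$, $\mathfrak f_{6.2}$ matches the paper's explicit identifications.
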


\begin{proof}
It is clear that a complete list of ${\rm Aut}\big(\mathfrak{sl}_3(\mathbb R)\big)$-inequivalent proper subalgebras of the algebra $\mathfrak{sl}_3(\mathbb R)$
is contained within the list presented in Theorem~\ref{thm:SubalgebrasOfSL3}.
In order to find it, we should determine the ${\rm SL}_3(\mathbb R)$-equivalence classes
of the subalgebras $\mathscr J(\mathfrak f_{i.j}^*)$.
The following computations can also be found in the {\sf Sage} notebook (see further section Code availability).

\smallskip\par\noindent\looseness=-1
$\boldsymbol{1\rm D.}$
We have $\mathscr I(\mathfrak f_{1.3}^{\mu'})=\mathfrak f_{1.3}^{\mu'}$,
$\mathscr I(\mathfrak f_{1.2}^\kappa)=\mathfrak f_{1.2}^{-\kappa}$,
and the latter subalgebra is equivalent to $\mathfrak f_{1.2}^\kappa$.
Applying $S$ to the subalgebras $\mathscr I(\mathfrak f_{1.1}^0)$ and $\mathscr I(\mathfrak f_{1.4})$,
we obtain the subalgebras $\mathfrak f_{1.1}^0$ and $\mathfrak f_{1.4}$, respectively.
Acting by the product ${\rm diag}(-1,-1,1)\exp\big(\pi/2(P_2+R_1)\big)$ on the subalgebra $\mathscr I(\mathfrak f_{1.1}^1)$
maps it back to~$\mathfrak f_{1.1}^1$.

\smallskip\par\noindent
$\boldsymbol{2\rm D.}$
We have the following equalities:
$\mathscr I(\mathfrak f_{2.6})=\mathfrak f_{2.6}$
and $\mathscr I(\mathfrak f_{2.7})=\mathfrak f_{2.7}$.
The rotation matrix $\exp\big(\pi/2(P_2+R_1)\big)$ maps the subalgebras
$\mathscr I(\mathfrak f_{2.1})$,
$\mathscr I(\mathfrak f_{2.9})$ and $\mathscr I(\mathfrak f_{2.2}^1)$
to the subalgebras 
\[\mathfrak f_{2.2}^0,\quad
\langle E_2-3D,E_1-P_1\rangle\quad\mbox{and}\quad
\langle E_1-P_1,P_2\rangle,\]
respectively.
The subalgebras $\langle E_2-3D,E_1-P_1\rangle$ and $\langle E_1-P_1,P_2\rangle$
are ${\rm SL}_3(\mathbb R)$-equivalent to $\mathfrak f_{2.9}$ and $\mathfrak f_{2.2}^1$, respectively.

Applying the matrix $S$ to the subalgebras $\mathscr J(\mathfrak f_{2.5})$ and $\mathscr I(\mathfrak f_{2.8}^\kappa)$ 
yields $\mathfrak f_{2.5}$ and $\mathfrak f_{2.8}^{-\kappa}$,
and the latter subalgebra is equivalent to $\mathfrak f_{2.8}^{\kappa}$.
Acting by $S\exp\big(\pi/2(P_2+R_1)\big)$ on the subalgebra $\mathscr J(\mathfrak f_{2.4})$
gives the subalgebra $\langle E_2+D+2P_2,P_1\rangle$, which is equivalent to~$\mathfrak f_{2.3}$.

\smallskip\par\noindent
$\boldsymbol{3\rm D.}$
We have that $\mathscr I(\mathfrak f_{3.12})=\mathfrak f_{3.12}$,
$\mathscr I(\mathfrak f_{3.13})=\mathfrak f_{3.13}$ and
$\mathscr I(\mathfrak f_{3.5}^\kappa)=\mathfrak f_{3.6}^{-\kappa}$.
Acting by the matrix $S$ on the subalgebras $\mathscr I(\mathfrak f_{3.10})$ and $\mathscr I(\mathfrak f_{3.11})$
we obtain $\mathfrak f_{3.10}$ and $\mathfrak f_{3.11}$, respectively.

Conjugating by the matrix $\exp\big(\pi/2(P_2+R_1)\big)$ the subalgebras $\mathscr I(\mathfrak f_{3.1})$
and $\mathscr I(\mathfrak f_{3.2})$, we obtain $\mathfrak f_{3.1}$ and $\mathfrak f_{3.7}^{-1}$, respectively.
Applying $\exp\big(\pi/2(P_2+R_1)\big)$ to the subalgebras $\mathscr I(\mathfrak f_{3.4})$ and $\mathscr I(\mathfrak f_{3.9})$,
we obtain the subalgebras 
\[
\langle E_2-D-2P_1,E_1,P_2\rangle
\quad\mbox{and}\quad
\langle E_2-3D,E_1-P_1,P_2\rangle,
\] 
which are ${\rm SL}_3(\mathbb R)$-equivalent to $\mathfrak f_{3.8}$ and $\mathfrak f_{3.9}$, respectively.

The matrix $\exp\big(\pi/2(P_2+R_1)\big)$ also maps the subalgebra family $\mathscr I(\mathfrak f_{3.7}^\mu)$
to the subalgebra
\[\mathfrak f_{3.3}^{\tilde\mu}=\langle (\mu+1)E_2-(\mu-3)D,P_1,P_2\rangle
\quad\mbox{with}\quad\tilde\mu:=-\tfrac{\mu-3}{\mu+1},\]
which coincides with the family $\mathfrak f_{3.3}^\kappa$ united with the subalgebra~$\mathfrak f_{3.2}$.

\smallskip\par\noindent
$\boldsymbol{4\rm D.}$
We have the following equalities: $\mathscr I(\mathfrak f_{4.3})=\mathfrak f_{4.6}$
and $\mathscr I(\mathfrak f_{4.7})=\mathfrak f_{4.7}$.
Applying $\exp\big(\pi/2(P_2+R_1)\big)$ to $\mathscr I(\mathfrak f_{4.1})$ and $\mathscr I(\mathfrak f_{4.2})$
yields $\mathfrak f_{4.4}^{-1}$ and $\mathfrak f_{4.5}$, respectively.

Acting by $\exp\big(\frac{\pi}2(P_2+R_1)\big)$ on $\mathfrak f_{4.4}^\gamma$, we obtain
\[
\mathfrak f_{4.4}^{-\tilde\gamma}=\langle (\gamma+1)E_2-(\gamma-3)D,E_1,P_1,P_2\rangle
\quad\mbox{with}\quad
\tilde\gamma:=\tfrac{\gamma-3}{\gamma+1}.
\]
Consequently, when $\gamma\ne-1$, we can gauge the parameter $-\gamma$ to the set $(-1,3]$.

\smallskip\par\noindent
$\boldsymbol{\rm5 D\textbf{ and }6D.}$
This case is straightforward: $\mathscr I(\mathfrak f_{5.3})=\mathfrak f_{5.2}$ and $\mathscr I(\mathfrak f_{6.2})=\mathfrak f_{6.1}$.
\end{proof}

\section{Subalgebras of $\mathfrak{sl}_3(\mathbb C)$}\label{sec:sl3CClassification}

We classify the subalgebras of the complex order-three special linear Lie algebra~$\mathfrak{sl}_3(\mathbb C)$
using the same method as in Section~\ref{sec:sl3Classification}.
This is why we adopt the notations from that section for convenience.
The algebra~$\mathfrak{sl}_3(\mathbb C)$ is spanned by the matrices
$E_1,E_2,E_3,D,P_1,P_2,R_1,R_2$.
The only irreducibly embedded subalgebra of~$\mathfrak{sl}_3(\mathbb C)$ with respect to its defining representation on $\mathbb C^3$
is the subalgebra isomorphic to $\mathfrak{sl}_2(\mathbb C)$.
This subalgebra can be chosen as the complexification
of the real subalgebra $\mathfrak f_{3.12}=\langle E_1+E_3,P_1-R_2,P_2+R_1\rangle$ of $\mathfrak{sl}_3(\mathbb R)$.
The maximal reducibly embedded subalgebras are $\mathfrak a_1:=\langle E_1,E_2,E_3,D,P_1,P_2\rangle$
and $\mathfrak a_2:=\langle E_1,E_2,E_3,D,R_1,R_2\rangle$.
Both $\mathfrak a_1$ and $\mathfrak a_2$ are isomorphic to the complex rank-two affine Lie algebra $\mathfrak{aff}_2(\mathbb C)=\mathfrak{gl}_2(\mathbb C)\ltimes\mathbb C^2$,
where $\mathfrak{gl}_2(\mathbb C)=\langle e_1,e_2,e_3,e_4\rangle=\mathfrak{sl}_2(\mathbb C)\times\langle e_4\rangle$.

We present the result of the classification of subalgebras of~$\mathfrak{sl}_3(\mathbb C)$
as a sequence of assertions without proof.
Starting with listing inequivalent subalgebras of the algebra $\mathfrak{gl}_2(\mathbb C)$,
we construct the list of inequivalent subalgebras of the algebra $\mathfrak{aff}_2(\mathbb C)$
using the method from Section~\ref{subsec:semidirect}.
This step is essential in the course of listing subalgebras of~$\mathfrak{sl}_3(\mathbb C)$
and its proof can be derived from that in Section~\ref{sec:aff2Classification}.
The list of inequivalent subalgebras of~$\mathfrak{aff}_2(\mathbb C)$ is presented in Theorem~\ref{thm:SubalgebrasOfAffineAlgC}.
The last step is to combine the obtained lists of subalgebras of the maximal subalgebras of~$\mathfrak{sl}_3(\mathbb C)$
modulo the ${\rm SL}_3(\mathbb C)$-equivalence as it is done in Section~\ref{subsec:MergingSL3},
which results in Theorem~\ref{thm:SubalgebrasOfSL3C}.

\begin{theorem}\label{prop:SubalgebrasOfGL2C}
A complete list of inequivalent proper subalgebras of the algebra~$\mathfrak{gl}_2(\mathbb C)$
is exhausted by the following subalgebras:
\begin{gather*}
{\rm 1D}\colon\quad
\h_{1.1}=\langle e_1\rangle,\ \
\h_{1.2}=\langle e_4\rangle,\ \
\h_{1.3}^{\rho,\varphi}=\langle e_2+\rho{\rm e}^{i\varphi} e_4\rangle,\ \
\h_{1.4}=\langle e_1+e_4\rangle,
\\[0.5ex]
{\rm 2D}\colon\quad
\h_{2.1}       =\langle e_1,e_4\rangle,\quad
\h_{2.2}       =\langle e_2,e_4\rangle,\quad
\h_{2.3}^\gamma=\langle e_2+\gamma e_4,e_1\rangle,
\\[0.5ex]
{\rm 3D}\colon\quad
\h_{3.1}=\langle e_1,e_2,e_4\rangle,\quad
\h_{3.2}=\langle e_1,e_2,e_3\rangle,
\end{gather*}
where $\rho\in\mathbb R_{\geqslant0}$, $-\pi/2<\varphi\leqslant\pi/2$ and $\gamma\in\mathbb C$.
\end{theorem}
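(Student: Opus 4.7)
The plan is to proceed by the dimension of the subalgebra, paralleling the proof of Theorem~\ref{thm:SubalgebrasOfGL2} in the real case but exploiting algebraic closure to simplify the Jordan-normal-form analysis of generators. Throughout, equivalence means ${\rm GL}_2(\mathbb C)$-conjugacy, which coincides with the action of ${\rm Inn}(\mathfrak{gl}_2(\mathbb C))$ since ${\rm GL}_2(\mathbb C)$ is connected.

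For the one-dimensional case, I would take an arbitrary generator $x\in\mathfrak{gl}_2(\mathbb C)$ and apply its Jordan decomposition $x=x_s+x_n$. If $x$ is nonzero nilpotent, conjugation puts it in the form of~$e_1$, yielding $\h_{1.1}$. If $x$ is scalar, we get $\h_{1.2}$. If $x$ is semisimple with two distinct eigenvalues, diagonalization and rescaling of the generator reduce $\langle x\rangle$ to $\langle e_2+\kappa e_4\rangle$ for some $\kappa\in\mathbb C$; the residual symmetry is the permutation of eigenvalues via the antidiagonal swap, which sends $e_2\mapsto-e_2$ and fixes $e_4$, identifying $\kappa$ with $-\kappa$. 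Writing $\kappa=\rho{\rm e}^{i\varphi}$ with $\rho\geqslant0$ and $\varphi\in(-\pi/2,\pi/2]$ gives a unique representative, producing the family $\h_{1.3}^{\rho,\varphi}$. Finally, when $x$ has a nontrivial Jordan block, after conjugation one writes $x=e_1+\alpha e_4$; combining rescaling of the generator by $c\in\mathbb C\setminus\{0\}$ with conjugation by ${\rm diag}(a,1)$ replaces $\alpha$ by any prescribed nonzero scalar, reducing to the single subalgebra $\h_{1.4}=\langle e_1+e_4\rangle$.

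For the two-dimensional case, I would distinguish abelian and non-abelian subalgebras. An abelian 2D subalgebra either consists of simultaneously diagonalizable elements (giving the maximal torus $\h_{2.2}$) or contains a nilpotent, in which case commutativity together with the Jordan structure forces inclusion in the centralizer $\langle e_1,e_4\rangle$ of a nilpotent, yielding~$\h_{2.1}$. A non-abelian 2D subalgebra has the structure $[x,y]=y$; by Lie's theorem the derived element~$y$ is a nilpotent matrix in $\mathfrak{gl}_2(\mathbb C)$, and after conjugating $y$ to $e_1$ the relation $[x,e_1]=e_1$ constrains $x$ to a coset of the centralizer of $e_1$, giving the family $\h_{2.3}^\gamma=\langle e_2+\gamma e_4,e_1\rangle$ with $\gamma\in\mathbb C$. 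Inequivalence within the family follows because any conjugation mapping $\h_{2.3}^\gamma$ to $\h_{2.3}^{\gamma'}$ must preserve the derived subalgebra~$\langle e_1\rangle$ and hence lie in its normalizer (lower-triangular matrices), under which~$\gamma$ is an invariant.

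The three-dimensional case splits according to solvability. By Lie's theorem, any 3D solvable complex subalgebra of $\mathfrak{gl}_2(\mathbb C)$ is conjugate to the lower-triangular subalgebra $\h_{3.1}$. The only non-solvable 3D subalgebra is $\mathfrak{sl}_2(\mathbb C)=\h_{3.2}$, since a 3D non-solvable Lie algebra is necessarily simple and isomorphic to $\mathfrak{sl}_2(\mathbb C)$, embedded in $\mathfrak{gl}_2(\mathbb C)$ essentially uniquely via its defining representation. The main obstacle in implementing this plan is the careful bookkeeping of residual symmetries in the parametric families $\h_{1.3}^{\rho,\varphi}$ and $\h_{2.3}^\gamma$, which differ subtly from their real counterparts: the richer complex rescaling freedom absorbs the $\varepsilon=\pm1$ ambiguity of the real case as well as the Jordan block $e_1+e_3$ (now diagonalizable), while the involution $\kappa\sim-\kappa$ survives and produces the half-plane parametrization by~$\varphi\in(-\pi/2,\pi/2]$.
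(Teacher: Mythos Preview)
Your argument is correct. The paper does not actually supply a proof of this theorem: it records the complex list and merely remarks that each entry has a counterpart in the real classification of Theorem~\ref{thm:SubalgebrasOfGL2}, which is itself quoted from the literature without proof. Your direct Jordan-normal-form analysis therefore provides considerably more detail than the paper offers, and your closing paragraph correctly identifies the mechanism by which the real families $\h_{1.4}^\varepsilon$ and $\h_{1.5}^\gamma$ collapse over~$\mathbb C$.

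One small point worth tightening: in the abelian two-dimensional case your dichotomy ``simultaneously diagonalizable or contains a nilpotent'' is true but not quite immediate as stated. If some element $x$ of the subalgebra is not diagonalizable then $x-\operatorname{tr}(x)e_4$ is nilpotent, but you must observe that this difference actually lies in the subalgebra. The clean way is to note that the centralizer in $\mathfrak{gl}_2(\mathbb C)$ of a non-diagonalizable $2\times2$ matrix is exactly $\langle e_1,e_4\rangle$ in a suitable basis, so any two-dimensional abelian subalgebra containing such an $x$ must coincide with $\langle e_1,e_4\rangle$ and hence contain the nilpotent $e_1$. Similarly, in the family $\h_{1.3}^{\rho,\varphi}$ you should remark that the parametrization is redundant at $\rho=0$, where all values of $\varphi$ give the same subalgebra $\langle e_2\rangle$; this matches the paper's convention implicitly.
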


\begin{theorem}\label{thm:SubalgebrasOfAffineAlgC}
A complete list of inequivalent proper subalgebras of the real rank-two affine Lie algebra~$\mathfrak{aff}_2(\mathbb C)$
is given by
\begin{gather*}
1{\rm D}\colon\ 
\mathfrak s_{1.1}            =\langle f_1\rangle,\ \
\mathfrak s_{1.2}^{\delta}  =\langle e_1+\delta f_1\rangle,\ \
\mathfrak s_{1.3}            =\langle e_4\rangle,\ \
\mathfrak s_{1.4}^{\rho,\varphi}     =\langle e_2+\rho{\rm e}^{\rm i\varphi}e_4\rangle,\\
\hphantom{1{\rm D}\colon\ }
\mathfrak s_{1.5}            =\langle e_2+e_4+f_2\rangle,\ \
\mathfrak s_{1.6}            =\langle e_1+e_4\rangle,\ \
\\[0.5ex]
2{\rm D}\colon\ 
\mathfrak s_{2.1}          =\langle f_1,f_2\rangle,\ \
\mathfrak s_{2.2}^{\delta}=\langle e_1+\delta f_1,f_2\rangle,\ \
\mathfrak s_{2.3}          =\langle e_4,f_1\rangle,\ \
\mathfrak s_{2.4}^{\rho,\varphi}   =\langle e_2+\rho{\rm e}^{\rm i\varphi} e_4,f_1\rangle,\ \
\\
\hphantom{2{\rm D}\colon\ }
\mathfrak s_{2.5}^{\rho',\varphi}=\langle e_2+\rho'{\rm e}^{\rm i\varphi}e_4,f_2\rangle,\ \
\mathfrak s_{2.6}          =\langle e_2+e_4+f_2,f_1\rangle,\ \
\mathfrak s_{2.7}          =\langle e_1+e_4,f_2\rangle,\ \
\\
\hphantom{2{\rm D}\colon\ }
\mathfrak s_{2.8}            =\langle e_1,e_4\rangle,\ \
\mathfrak s_{2.9}            =\langle e_2,e_4\rangle,\ \
\mathfrak s_{2.10}^\gamma    =\langle e_2+\gamma e_4,e_1\rangle,\ \
\mathfrak s_{2.11}^\varepsilon=\langle e_2+\varepsilon e_4+f_2,e_1\rangle,
\\
\hphantom{2{\rm D}\colon\ }
\mathfrak s_{2.13}            =\langle e_2-3e_4,e_1+f_1\rangle,
\\[0.5ex]
3{\rm D}\colon\ 
\mathfrak s_{3.1}            =\langle e_1,f_1,f_2\rangle,\ \
\mathfrak s_{3.2}            =\langle e_4,f_1,f_2\rangle,\ \
\mathfrak s_{3.3}^{\rho,\varphi}=\langle e_2+\rho{\rm e}^{\rm i\varphi}e_4,f_1,f_2\rangle,\ \
\\
\hphantom{3{\rm D}\colon\ }
\mathfrak s_{3.4}            =\langle e_1+e_4,f_1,f_2\rangle,\ \
\mathfrak s_{3.5}            =\langle e_1,e_4,f_2\rangle,\ \
\mathfrak s_{3.6}            =\langle e_2,e_4,f_1\rangle,\ \
\\
\hphantom{3{\rm D}\colon\ }
\mathfrak s_{3.7}^\gamma     =\langle e_2+\gamma e_4,e_1,f_2\rangle,\ \
\mathfrak s_{3.8}         =\langle e_2-e_4+f_1, e_1, f_2\rangle,\ \
\mathfrak s_{3.9}            =\langle e_2-3e_4, e_1+f_1, f_2\rangle,
\\
\hphantom{3{\rm D}\colon\ }
\mathfrak s_{3.10}           =\langle e_1,e_2,e_4\rangle,\ \
\mathfrak s_{3.11}           =\langle e_1,e_2,e_3\rangle,
\\[0.5ex]
4{\rm D}\colon\ 
\mathfrak s_{4.1}        =\langle e_1,e_4,f_1,f_2\rangle,\ \
\mathfrak s_{4.2}        =\langle e_2,e_4,f_1,f_2\rangle,\ \
\mathfrak s_{4.3}^\gamma =\langle e_2+\gamma e_4,e_1,f_1,f_2\rangle,\ \
\\
\hphantom{4{\rm D}\colon\ }
\mathfrak s_{4.4}        =\langle e_1,e_2,e_4,f_2\rangle,\ \
\mathfrak s_{4.5}        =\langle e_1,e_2,e_3,e_4\rangle,
\\[0.5ex]
5{\rm D}\colon\ 
\mathfrak s_{5.1}=\langle e_1,e_2,e_4,f_1,f_2\rangle,\ \
\mathfrak s_{5.2}=\langle e_1,e_2,e_3,f_1,f_2\rangle,
\end{gather*}
where $\varepsilon\in\{-1,1\}$, $\delta\in\{0,1\}$,
$\rho\in\mathbb R_{\geqslant0}$, $\rho'\in\mathbb R_{>0}$, $-\pi/2<\varphi\leqslant\pi/2$
and $\gamma\in\mathbb C$.
\end{theorem}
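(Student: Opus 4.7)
The plan is to apply the algorithm of Section~\ref{subsec:semidirect} to the semidirect decomposition $\mathfrak{aff}_2(\mathbb{C}) = \mathfrak{gl}_2(\mathbb{C}) \ltimes \mathbb{C}^2$, mirroring the structure of the proof of Theorem~\ref{thm:SubalgebrasOfAffineAlg}. Step~1 of the algorithm is supplied by Theorem~\ref{prop:SubalgebrasOfGL2C}. For each subalgebra of $\mathfrak{gl}_2(\mathbb{C})$ listed there, together with the trivial subalgebra, I would compute its normalizer in $\mathrm{GL}_2(\mathbb{C})$, classify its submodules of $\mathbb{C}^2$ up to that normalizer, and then for each submodule $U$ determine the inequivalent $1$-cocycles $\alpha\colon\mathfrak{h}_{i.j}^*\to\mathbb{C}^2/U$ under the action described in Proposition~\ref{tuple2}. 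The analogue of Lemma~\ref{lem:FullRank} carries over verbatim to $\mathbb{C}$: if $x\in\mathrm{Mat}_2(\mathbb{C})$ is invertible, every cocycle $\alpha\colon\langle x\rangle\to\mathbb{C}^2$ is trivialized by the action of $(1,x^{-1}\alpha(x))$, which collapses all full-rank cases to the trivial cocycle.

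The calculations are essentially transcriptions of those in the proof of Theorem~\ref{thm:SubalgebrasOfAffineAlg}, with $\mathbb{R}$ replaced by $\mathbb{C}$. Two simplifications typically kick in over $\mathbb{C}$: first, $\mathrm{GL}_2(\mathbb{C})$ is connected, so no positivity conditions arise; second, any nonzero complex scalar can be absorbed by scaling, which collapses most of the real sign parameters $\varepsilon=\pm1$ of Theorem~\ref{thm:SubalgebrasOfAffineAlg} to single representatives (so, for instance, $\mathfrak{s}_{1.6}^{\varepsilon}$ and $\mathfrak{s}_{2.7}^{\varepsilon}$ should collapse to parameter-free subalgebras). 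On the other hand, the one-dimensional family $\langle e_2+\mu e_4\rangle$ of $\mathfrak{gl}_2(\mathbb{C})$ retains a residual $\mathbb{Z}_2$-action $\mu\mapsto-\mu$, induced by the swap of the two eigenvectors of $\mathrm{diag}(1+\mu,-1+\mu)$, which realizes the identification $\rho\mathrm{e}^{\mathrm{i}\varphi}\sim\rho\mathrm{e}^{\mathrm{i}(\varphi+\pi)}$ and yields the fundamental domain $\rho\geqslant0$, $-\pi/2<\varphi\leqslant\pi/2$ recorded in both Theorem~\ref{prop:SubalgebrasOfGL2C} and Theorem~\ref{thm:SubalgebrasOfAffineAlgC}. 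The degenerate values $\mu=\pm1$ (where $e_2+\mu e_4$ drops to rank~$1$) and the exceptional value $\mu=-3$ (at which the semidirect-product step produces $\mathfrak{s}_{2.13}$ and $\mathfrak{s}_{3.9}$) must still be handled separately, exactly as in the real case.

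The main technical obstacle is to propagate the complex parameter identifications consistently through steps~2 and~3 of the algorithm: for every subalgebra that carries a continuous parameter, one must check that the residual normalizer action on $\mathbb{C}^2/U$ neither identifies further parameter values nor introduces new ones beyond those already built into Theorem~\ref{prop:SubalgebrasOfGL2C}. This is especially delicate in the rank-$1$ cases $\mu=\pm1$, where a nontrivial cocycle survives: one must verify that the two signs in the family $\mathfrak{s}_{2.11}^{\varepsilon}$ indeed give inequivalent subalgebras, since the presence of the inhomogeneous term $f_2$ in the generator breaks the $\mathbb{C}^{\times}$-scaling that would otherwise relate them. Apart from this bookkeeping, the remainder of the proof is a case-by-case rerun of the argument of Theorem~\ref{thm:SubalgebrasOfAffineAlg}.
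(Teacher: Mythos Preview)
Your proposal is correct and follows essentially the same approach as the paper, which simply states that the proof of Theorem~\ref{thm:SubalgebrasOfAffineAlg} does not essentially depend on the base field being real and carries over to~$\mathbb C$ with the substitutions $\kappa\to\rho\mathrm e^{\mathrm i\varphi}$ and $\kappa'\to\rho'\mathrm e^{\mathrm i\varphi}$. Your discussion of which parameters collapse over~$\mathbb C$ and which degenerate cases require separate care is in fact more detailed than the paper's treatment.
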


Note that each subalgebra from Theorem~\ref{prop:SubalgebrasOfGL2C}
has a counterpart in the list from Theorem~\ref{thm:SubalgebrasOfGL2}.
Furthermore, it is straightforward to verify that the proof of Theorem~\ref{thm:SubalgebrasOfAffineAlg} does not
essentially depend on the base field being real.
In this way, up to the substitutions 
$\kappa\to\rho{\rm e}^{i \varphi}$ and $\kappa'\to\rho'{\rm e}^{i \varphi}$,
the proof of Theorem~\ref{thm:SubalgebrasOfAffineAlgC} follows immediately from that of Theorem~\ref{thm:SubalgebrasOfAffineAlg}.

It is clear that Lemma~\ref{lem:SL3Decomp} takes the same form when the base field is extended to the complex numbers~$\mathbb C$.
Since each subalgebra in the list from Theorem~\ref{thm:SubalgebrasOfAffineAlgC}
has a counterpart in the list from Theorem~\ref{thm:SubalgebrasOfAffineAlg}
and the proof of Corollary~\ref{cor:IneqSubalgsOfA1} does not depend on the underlying field being $\mathbb R$,
we can use the same {\sf Sage} notebook, dropping the assumption that all parameters are real,
to verify the following corollary.

\begin{corollary}\label{cor:IneqSubalgsOfA1C}
Among the subalgebras of $\mathfrak a_1$, the following are inequivalent with respect to the action of~${\rm SL}_3(\mathbb C)$,
\begin{align*}
&1{\rm D}\colon\
\langle E_1+\delta P_1\rangle,\ \
\langle E_2+\mu' D\rangle,\ \
\langle E_1+D\rangle,
\\[0.5ex]
&2{\rm D}\colon\
\langle P_1,P_2\rangle,\ \
\langle E_1+\delta P_1,P_2\rangle,\ \
\langle E_2+D+P_2,P_1\rangle,\ \
\langle E_1+D,P_2\rangle,\ \
\\
&\hphantom{2{\rm D}\colon\ }
\langle E_1,D\rangle,\ \
\langle E_2,D\rangle,\ \
\langle E_2+\gamma D,E_1\rangle,\ \
\langle E_2-3D, E_1+P_1\rangle,
\\[0.5ex]
&3{\rm D}\colon\
\langle E_1,P_1,P_2\rangle,\ \
\langle D,P_1,P_2\rangle,\ \
\langle E_2+\rho{\rm e}^{\rm i\varphi} D,P_1,P_2\rangle,\ \
\langle E_1+D,P_1,P_2\rangle,
\\
&\hphantom{3{\rm D}\colon\ }
\langle E_2+\mu D,E_1,P_2\rangle,\ \
\langle E_2-D+P_1,E_1,P_2\rangle,\ \
\langle E_2-3D, E_1+P_1,P_2\rangle,
\\
&\hphantom{3{\rm D}\colon\ }
\langle E_1,E_2,D\rangle,\ \
\langle E_1,E_2,E_3\rangle,
\\[0.5ex]
&4{\rm D}\colon\ 
\langle E_1,D,P_1,P_2\rangle,\ \
\langle E_2,D,P_1,P_2\rangle,\ \
\langle E_2+\gamma D,E_1,P_1,P_2\rangle,
\\
&\hphantom{3{\rm D}\colon\ }
\langle E_1,E_2,D,P_2\rangle,\ \
\langle E_1,E_2,E_3,D\rangle,
\\[0.5ex]
&5{\rm D}\colon\
\langle E_1,E_2,D,P_1,P_2\rangle,\ \
\langle E_1,E_2,E_3,P_1,P_2\rangle,
\end{align*}
where $\delta\in\{0,1\}$,
$\mu'\in({\rm B}_{2}(-1)\cap\{z\in\mathbb C\mid \Re z\geqslant0\})\setminus\{{\rm i}a\mid a\in\mathbb R_{\leqslant0}\}$,
$\mu\in{\rm B}_2(1)$, $\rho\in\mathbb R_{\geqslant0}$, $-\pi/2<\varphi\leqslant\pi/2$
and~$\gamma\in\mathbb C$.
Here and in what follows $B_r(a)$ denotes the closed ball of radius $r$ centred at $z=a$,
$B_r(a):=\{z\in\mathbb C\mid |z|\leqslant r\}$.
\end{corollary}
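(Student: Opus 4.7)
The plan is to reuse the strategy of Corollary~\ref{cor:IneqSubalgsOfA1} practically verbatim, relying on the observation that none of the steps in that proof depends essentially on the base field being $\mathbb R$. Lemma~\ref{lem:SL3Decomp} remains valid over $\mathbb C$ once the parameters $c_1,c_2$ in $M_k(c_1,c_2)$ are allowed to be complex, so every $S\in{\rm SL}_3(\mathbb C)$ decomposes as $S=AM$ with $A$ in the complex analog of ${\rm A}_1$ and $M\in\{M_k(c_1,c_2)\mid k\in\{1,2,3\},\,c_1,c_2\in\mathbb C\}$. Accordingly, for each subalgebra $\hat{\mathfrak s}_{i.j}^*$ obtained from the list of Theorem~\ref{thm:SubalgebrasOfAffineAlgC} via the isomorphism $\mathfrak{aff}_2(\mathbb C)\to\mathfrak a_1$, one computes the finite intersection of the orbit set $M(\hat{\mathfrak s}_{i.j}^*)$ with the collection of subalgebras of $\mathfrak a_1$ and matches its elements with canonical representatives of the list.

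Concretely, I would rerun the {\sf Sage} computations from the proof of Corollary~\ref{cor:IneqSubalgsOfA1} after dropping the reality assumption on the free parameters; the underlying polynomial and matrix manipulations carry over without modification. Several families from the real classification merge in the complex one: the matrix $E_1+E_3$, a real Jordan block with purely imaginary eigenvalues, becomes diagonalizable and ${\rm SL}_3(\mathbb C)$-conjugate to $E_2$ over $\mathbb C$, so the families $\langle E_1+E_3+\gamma D\rangle$, $\langle E_1+E_3,D\rangle$, $\langle E_1+E_3+\gamma D,P_1,P_2\rangle$ and $\langle E_1+E_3,D,P_1,P_2\rangle$ disappear, being absorbed into the families parametrized by $\mu'$ and $\gamma$. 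The sign parameter $\varepsilon\in\{-1,1\}$ in $\hat{\mathfrak s}_{3.4}^\varepsilon$ likewise collapses to a single value, since $\pm D$ become conjugate over $\mathbb C$.

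The main obstacle is the determination of fundamental domains for the Möbius-type identifications on the complex parameter spaces. For the one-dimensional family $\langle E_2+\mu' D\rangle$, the equivalences $\mu'\sim(\mu'+3)/(\mu'-1)\sim-(\mu'-3)/(\mu'+1)$ generate a finite subgroup of ${\rm PSL}_2(\mathbb C)$ which, after tracking the compositions, turns out to be isomorphic to $S_3$, with three involutions whose fixed-point sets are $\{-1,3\}$, $\{1,-3\}$ and $\{0,\infty\}$, together with two elements of order three. I would verify that $({\rm B}_2(-1)\cap\{z\in\mathbb C\mid\Re z\geqslant0\})\setminus\{{\rm i}a\mid a\in\mathbb R_{\leqslant0}\}$ is a fundamental domain by computing explicitly the images of this region under each group element and confirming bijective coverage of orbits, paying particular attention to boundary identifications and to the special orbits passing through the fixed points. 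For the three-dimensional family $\langle E_2+\mu D,E_1,P_2\rangle$, only the single involution $\mu\mapsto(\mu+3)/(\mu-1)$ acts; the circle $|\mu-1|=2$ passes through both of its fixed points $-1$ and $3$ and is preserved by the involution, which sends the interior of $B_2(1)$ to its exterior, so that $B_2(1)$ is the desired fundamental domain. All remaining cases are routine and parallel those treated in Corollary~\ref{cor:IneqSubalgsOfA1}.
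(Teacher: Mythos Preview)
Your proposal is correct and follows essentially the same route as the paper: the paper's entire justification for Corollary~\ref{cor:IneqSubalgsOfA1C} is the observation that Lemma~\ref{lem:SL3Decomp} holds verbatim over~$\mathbb C$, that the proof of Corollary~\ref{cor:IneqSubalgsOfA1} is field-independent, and that one may rerun the same {\sf Sage} notebook with the reality assumptions on the parameters removed. Your explicit analysis of the $S_3$-action on the parameter~$\mu'$ and the involution on~$\mu$, together with the verification of the fundamental domains $({\rm B}_2(-1)\cap\{\Re z\geqslant0\})\setminus\{{\rm i}a\mid a\leqslant0\}$ and $B_2(1)$, goes beyond what the paper spells out and is a welcome addition.
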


\begin{theorem}\label{thm:SubalgebrasOfSL3C}
A complete list of proper ${\rm SL}_3(\mathbb C)$-inequivalent subalgebras of the algebra $\mathfrak{sl}_3(\mathbb C)$
is exhausted by the following:
\begin{align*}
&1{\rm D}\colon\
\mathfrak f_{1.1}^{\delta}=\langle E_1+\delta P_1\rangle,\ \
\mathfrak f_{1.2}^{\mu'}  =\langle E_2+\mu'D\rangle,\ \
\mathfrak f_{1.3}         =\langle E_1+D\rangle,
\\[0.5ex]
&2{\rm D}\colon\
\mathfrak f_{2.1}          =\langle P_1,P_2\rangle,\ \
\mathfrak f_{2.2}^\delta  =\langle E_1+\delta P_1,P_2\rangle,\ \
\mathfrak f_{2.3}          =\langle E_2+D+P_2,P_1\rangle,
\\
&\hphantom{2{\rm D}\colon\ }
\mathfrak f_{2.4}          =\langle E_1+D,P_2\rangle,\ \
\mathfrak f_{2.5}          =\langle E_1,D\rangle,\ \
\mathfrak f_{2.6}          =\langle E_2,D\rangle,
\\
&\hphantom{2{\rm D}\colon\ }
\mathfrak f_{2.7}^\gamma   =\langle E_2+\gamma D,E_1\rangle,\ \
\mathfrak f_{2.8}          =\langle E_2-3D,E_1+P_1\rangle,
\\[0.5ex]
&3{\rm D}\colon\
\mathfrak f_{3.1}               =\langle E_1,P_1,P_2\rangle,\ \
\mathfrak f_{3.2}               =\langle D,P_1,P_2\rangle,\ \
\mathfrak f_{3.3}^{\rho,\varphi}=\langle E_2+\rho{\rm e}^{\rm i\varphi} D,P_1,P_2\rangle,\ \
\\
&\hphantom{3{\rm D}\colon\ }
\mathfrak f_{3.4}               =\langle E_1+D,P_1,P_2\rangle,\ \
\mathfrak f_{3.5}^\mu        =\langle E_2+\mu D,E_1,P_2\rangle,\ \
\mathfrak f_{3.6}            =\langle E_2-D+P_1,E_1,P_2\rangle,\ \
\\
&\hphantom{3{\rm D}\colon\ }
\mathfrak f_{3.7}            =\langle E_2-3D, E_1+P_1,P_2\rangle,\ \
\mathfrak f_{3.8}       =\langle E_1,E_2,D\rangle,\ \
\mathfrak f_{3.9}       =\langle E_1,E_2,E_3\rangle,
\\
&\hphantom{3{\rm D}\colon\ }
\mathfrak f_{3.10}=\langle E_1+E_3,P_1-R_2,P_2+R_1\rangle,\ \
\\[0.5ex]
&4{\rm D}\colon\ 
\mathfrak f_{4.1}       =\langle E_1,D,P_1,P_2\rangle,\ \
\mathfrak f_{4.2}       =\langle E_2,D,P_1,P_2\rangle,\ \
\mathfrak f_{4.3}^\gamma=\langle E_2+\gamma D,E_1,P_1,P_2\rangle,\ \
\\
&\hphantom{4{\rm D}\colon\ }
\mathfrak f_{4.4}=\langle E_1,E_2,D,P_2\rangle,\ \
\mathfrak f_{4.5}=\langle E_1,E_2,E_3,D\rangle,
\\[0.5ex]
&5{\rm D}\colon\
\mathfrak f_{5.1}=\langle E_1,E_2,D,P_1,P_2\rangle,\ \
\mathfrak f_{5.2}=\langle E_1,E_2,E_3,P_1,P_2\rangle,\ \
\mathfrak f_{5.3}=\langle E_1,E_2,E_3,R_1,R_2\rangle,
\\[0.5ex]
&6{\rm D}\colon\
\mathfrak f_{6.1}=\langle E_1,E_2,E_3,D,P_1,P_2\rangle,\ \
\mathfrak f_{6.2}=\langle E_1,E_2,E_3,D,R_1,R_2\rangle,
\end{align*}
where $\delta\in\{0,1\}$,
$\mu'\in\big({\rm B}_{2}(-1)\cap\{z\in\mathbb C\mid \Re z\geqslant0\}\big)\setminus\{{\rm i}a\mid a\in\mathbb R_{<0}\}$,
$\mu\in{\rm B}_2(1)$, $\rho\in\mathbb R_{\geqslant0}$, $-\pi/2<\varphi\leqslant\pi/2$
and~$\gamma\in\mathbb C$.
\end{theorem}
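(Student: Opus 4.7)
The plan is to mirror the three-step merging procedure used in Section~\ref{subsec:MergingSL3} for the real case. First, I would identify the maximal subalgebras of $\mathfrak{sl}_3(\mathbb C)$ with respect to the defining representation on $\mathbb C^3$: the two reducibly embedded affine subalgebras $\mathfrak a_1$ and $\mathfrak a_2$, which are the stabilizers of the coordinate two-dimensional and one-dimensional subspaces of $\mathbb C^3$, and the irreducibly embedded maximal subalgebra. The key simplification over the complex field is that all real forms collapse: $\mathfrak{so}_3(\mathbb C)\cong\mathfrak{so}_{2,1}(\mathbb C)\cong\mathfrak{sl}_2(\mathbb C)$, so there is only one irreducibly embedded subalgebra up to conjugation, and it can be taken to be $\mathfrak f_{3.10}=\langle E_1+E_3,P_1-R_2,P_2+R_1\rangle$, the complexification of $\mathfrak f_{3.12}$ from Theorem~\ref{thm:SubalgebrasOfSL3}.

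Second, I would start from the list in Corollary~\ref{cor:IneqSubalgsOfA1C} of ${\rm SL}_3(\mathbb C)$-inequivalent subalgebras of $\mathfrak a_1$, and then adjoin those subalgebras of $\mathfrak a_2$ that are not ${\rm SL}_3(\mathbb C)$-equivalent to any subalgebra of $\mathfrak a_1$. Exactly as in the real proof, a subalgebra of $\mathfrak a_2$ is conjugate to one of $\mathfrak a_1$ if and only if it preserves a two-dimensional subspace of $\mathbb C^3$, i.e., each of its elements can be simultaneously conjugated into the block-triangular pattern defining $\mathfrak a_1$. Applying the $\mathfrak a_2$-analogue of Corollary~\ref{cor:IneqSubalgsOfA1C}, which is obtained by applying the anti-involution $X\mapsto -X^{\mathsf T}$ (it swaps $\mathfrak a_1\leftrightarrow\mathfrak a_2$, $P_i\leftrightarrow R_i$), and then discarding every candidate that admits a two-dimensional invariant subspace, I expect to retain only the two subalgebras $\mathfrak f_{5.3}=\langle E_1,E_2,E_3,R_1,R_2\rangle$ and $\mathfrak f_{6.2}=\langle E_1,E_2,E_3,D,R_1,R_2\rangle$. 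In particular, the real-case families $\langle E_1+E_3+\gamma D,R_1,R_2\rangle$ and $\langle E_1+E_3,D,R_1,R_2\rangle$ do not appear over $\mathbb C$, because $E_1+E_3$ becomes diagonalizable with eigenvalues $\pm\mathrm i$ and consequently these candidate subalgebras acquire a two-dimensional invariant subspace in $\mathbb C^3$ and merge into $\mathfrak a_1$-classes already enumerated in Corollary~\ref{cor:IneqSubalgsOfA1C}.

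Third, I would address the irreducibly embedded factor: every proper subalgebra of $\mathfrak f_{3.10}\cong\mathfrak{sl}_2(\mathbb C)$ is solvable and hence, by Lie's theorem over $\mathbb C$, preserves a one-dimensional subspace of $\mathbb C^3$, so it is conjugate to a subalgebra of $\mathfrak a_1$ and already represented on the list. Thus $\mathfrak f_{3.10}$ itself is the only new entry contributed at this stage. Combining the three steps and renumbering yields precisely the list stated in Theorem~\ref{thm:SubalgebrasOfSL3C}.

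The main obstacle will be the bookkeeping in the second step, namely the case-by-case verification over $\mathbb C$ of which subalgebras of $\mathfrak a_2$ fuse with $\mathfrak a_1$-classes and, among those that do not, which are pairwise ${\rm SL}_3(\mathbb C)$-inequivalent. Since the complex analogue of Lemma~\ref{lem:SL3Decomp} is available verbatim and the proof of Corollary~\ref{cor:IneqSubalgsOfA1} does not rely on the reality of the parameters, I would reuse the authors' \textsf{Sage} computations with the reality assumptions dropped, as the paragraph preceding Corollary~\ref{cor:IneqSubalgsOfA1C} already suggests; this turns the remaining verifications into essentially the same rank and invariant-subspace computations performed in the real case, but with $\kappa\to\rho\mathrm e^{\mathrm i\varphi}$ and $\gamma\in\mathbb C$.
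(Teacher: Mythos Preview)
Your proposal is correct and follows essentially the same three-step merging procedure as the paper: take the ${\rm SL}_3(\mathbb C)$-inequivalent subalgebras of $\mathfrak a_1$ from Corollary~\ref{cor:IneqSubalgsOfA1C}, adjoin the subalgebras of $\mathfrak a_2$ that lack a two-dimensional invariant subspace of $\mathbb C^3$ (leaving only $\mathfrak f_{5.3}$ and $\mathfrak f_{6.2}$, since over $\mathbb C$ the element $E_1+E_3$ is diagonalizable), and finally add the unique irreducibly embedded $\mathfrak{sl}_2(\mathbb C)$. Your explicit justification for why the real-case families $\langle E_1+E_3+\gamma D,R_1,R_2\rangle$ and $\langle E_1+E_3,D,R_1,R_2\rangle$ disappear over $\mathbb C$ is a helpful detail that the paper leaves implicit.
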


\begin{remark}\label{rem:DiffBetwClassificationsC}
We compare the subalgebras of~$\mathfrak{sl}_3(\mathbb C)$ from Theorem~\ref{thm:SubalgebrasOfSL3C}
and~\cite[Table~1]{doug2016a} in detail in Section~\ref{sec:ClassCompare2}.
The results are summarized in Table~\ref{tab:ListsCompare2}.
In this short remark, we discuss the differences between the lists.
For notation, see Section~\ref{sec:ClassCompare2}.

The subalgebra $K_2^2$ has a misprint and should read as $K_2^2=\langle x_1,-\frac13 h_1+\frac13 h_2+x_3\rangle$.
The subalgebra $M^2_{13,0}$ is equivalent to the subalgebra \smash{$M^2_{13,\psi(1/2)}$},
thus should be excluded from the classification.
In total, the classification from~\cite{doug2016a} has one redundant subalgebra and one subalgebra with a misprint.
\end{remark}

To make our research complete, we classify the subalgebras of $\mathfrak{sl}_3(\mathbb C)$ also
with respect to ${\rm Aut}\big(\mathfrak{sl}_3(\mathbb C)\big)$-equivalence.
The factor group ${\rm Aut}\big(\mathfrak{sl}_3(\mathbb C)\big)/{\rm Inn}\big(\mathfrak{sl}_3(\mathbb C)\big)$
is isomorphic to $\mathbb Z_2$,
and thus the group of all automorphisms of $\mathfrak{sl}_3(\mathbb C)$ is generated by the
subgroup of its inner automorphisms ${\rm Inn}\big(\mathfrak{sl}_3(\mathbb C)\big)$ and
the involution $\mathscr I\colon x\mapsto-x^{\mathsf T}$, $x\in\mathfrak{sl}_3(\mathbb C)$,
see~\cite[Theorem~5, p.~283]{jaco1972A}.

\begin{corollary}\label{thm:SubalgebrasOfSL3CAut}
A complete list of proper ${\rm Aut}(\mathfrak{sl}_3(\mathbb C))$-inequivalent subalgebras of the algebra $\mathfrak{sl}_3(\mathbb C)$
is exhausted by the following:
\begin{align*}
&1{\rm D}\colon\
\hat{\mathfrak f}_{1.1}^{\delta}=\langle E_1+\delta P_1\rangle,\ \
\hat{\mathfrak f}_{1.2}^{\mu'}  =\langle E_2+\mu'D\rangle,\ \
\hat{\mathfrak f}_{1.3}         =\langle E_1+D\rangle,
\\[1ex]
&2{\rm D}\colon\
\hat{\mathfrak f}_{2.1}^\delta =\langle E_1+\delta P_1,P_2\rangle,\ \
\hat{\mathfrak f}_{2.2}          =\langle E_1+D,P_2\rangle,\ \
\hat{\mathfrak f}_{2.3}          =\langle E_1,D\rangle,\ \
\\
&\hphantom{2{\rm D}\colon\ }
\hat{\mathfrak f}_{2.4}          =\langle E_2,D\rangle,\ \
\hat{\mathfrak f}_{2.5}^{\rho,\varphi}   =\langle E_2+\rho{\rm e}^{\rm i\varphi} D,E_1\rangle,\ \
\hat{\mathfrak f}_{2.6}          =\langle E_2-3D,E_1+P_1\rangle,
\\[1ex]
&3{\rm D}\colon\
\hat{\mathfrak f}_{3.1}               =\langle E_1,P_1,P_2\rangle,\ \
\hat{\mathfrak f}_{3.2}^{\rho,\varphi}=\langle E_2+\rho{\rm e}^{\rm i\varphi} D,P_1,P_2\rangle,\ \
\hat{\mathfrak f}_{3.3}               =\langle E_1+D,P_1,P_2\rangle,\ \
\\
&\hphantom{3{\rm D}\colon\ }
\hat{\mathfrak f}_{3.4}^\mu=\langle E_2+\mu D,E_1,P_2\rangle,\ \
\hat{\mathfrak f}_{3.5}            =\langle E_2-3D, E_1+P_1,P_2\rangle,\ \
\hat{\mathfrak f}_{3.6}       =\langle E_1,E_2,D\rangle,
\\
&\hphantom{3{\rm D}\colon\ }
\hat{\mathfrak f}_{3.7}       =\langle E_1,E_2,E_3\rangle,\ \
\hat{\mathfrak f}_{3.8}      =\langle E_1+E_3,P_1-R_2,P_2+R_1\rangle,
\\[1ex]
&4{\rm D}\colon\ 
\hat{\mathfrak f}_{4.1}^{\mu'}=\langle E_2-\mu' D,E_1,P_1,P_2\rangle,\ \
\hat{\mathfrak f}_{4.2}=\langle E_1,E_2,D,P_2\rangle,\ \
\hat{\mathfrak f}_{4.3}=\langle E_1,E_2,E_3,D\rangle,
\\[1ex]
&5{\rm D}\colon\
\hat{\mathfrak f}_{5.1}=\langle E_1,E_2,D,P_1,P_2\rangle,\ \
\hat{\mathfrak f}_{5.2}=\langle E_1,E_2,E_3,P_1,P_2\rangle,
\\[1ex]
&6{\rm D}\colon\
\hat{\mathfrak f}_{6.1}=\langle E_1,E_2,E_3,D,P_1,P_2\rangle,
\end{align*}
where $\delta\in\{0,1\}$,
$\mu'\in({\rm B}_{2}(-1)\cap\{z\in\mathbb C\mid \Re z\geqslant0\})\setminus\{{\rm i}a\mid a\in\mathbb R_{<0}\}$,
$\mu\in{\rm B}_2(1)$, $\rho\in\mathbb R_{\geqslant0}$, $-\pi/2<\varphi\leqslant\pi/2$
and~$\gamma\in\mathbb C$.
\end{corollary}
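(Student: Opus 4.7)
The plan is to mirror the proof of Corollary~\ref{thm:SubalgebrasOfSL3ModAut}. Since ${\rm Aut}(\mathfrak{sl}_3(\mathbb C))$ is generated by ${\rm Inn}(\mathfrak{sl}_3(\mathbb C))$ together with the involution $\mathscr I\colon x\mapsto-x^{\mathsf T}$, a complete list of ${\rm Aut}(\mathfrak{sl}_3(\mathbb C))$-inequivalent subalgebras must sit inside the ${\rm SL}_3(\mathbb C)$-inequivalent list of Theorem~\ref{thm:SubalgebrasOfSL3C}. The task therefore reduces to computing $\mathscr I(\mathfrak f_{i.j}^*)$ for each subalgebra in that list and recognising it, up to ${\rm SL}_3(\mathbb C)$-conjugation, as either another member of the list or the same one for a transformed value of the parameters.

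First I would dispatch the easy cases. The involution $\mathscr I$ acts trivially on diagonal matrices while sending $E_1\mapsto -E_3$, $P_1\mapsto -R_2$ and $P_2\mapsto -R_1$, so it interchanges the two maximal reducibly embedded subalgebras $\mathfrak a_1$ and $\mathfrak a_2$. On these grounds $\mathscr I$ swaps the pairs $\{\mathfrak f_{5.2},\mathfrak f_{5.3}\}$ and $\{\mathfrak f_{6.1},\mathfrak f_{6.2}\}$ by direct inspection, so only one representative of each pair survives. The subalgebra $\mathfrak f_{3.10}$, whose generators are symmetric in the transpose, is fixed, while those such as $\mathfrak f_{2.5}$, $\mathfrak f_{2.6}$, $\mathfrak f_{3.8}$ and $\mathfrak f_{3.9}$ are fixed after conjugating by the matrix $S$ or by $\exp\bigl(\tfrac{\pi}{2}(P_2+R_1)\bigr)$, the same witnesses already used in the real case.

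For the remaining subalgebras I would proceed case by case and exhibit, for each of them, an element of ${\rm SL}_3(\mathbb C)$ that conjugates $\mathscr I(\mathfrak f_{i.j}^*)$ into a representative already on the list. The witnesses from the real proof continue to work over $\mathbb C$, and the availability of arbitrary complex scalars streamlines the reductions for the parameterised families. In particular, combining $\mathscr I$ with a suitable diagonal conjugation or Weyl-group element induces the symmetry $\gamma\mapsto -\gamma$ on the continuous parameter. Under the substitution $\gamma=\rho\mathrm{e}^{\mathrm{i}\varphi}$ with $\rho\geqslant 0$, this halves the range of $\varphi$ to $(-\pi/2,\pi/2]$, accounting precisely for the narrower domains appearing in $\hat{\mathfrak f}_{2.5}^{\rho,\varphi}$, $\hat{\mathfrak f}_{3.2}^{\rho,\varphi}$ and $\hat{\mathfrak f}_{4.1}^{\mu'}$. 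The generator $E_2+\mu'D$ of $\hat{\mathfrak f}_{1.2}^{\mu'}$ is literally fixed by $\mathscr I$, so the domain of $\mu'$ inherited from Theorem~\ref{thm:SubalgebrasOfSL3C} survives unchanged.

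The main obstacle is the bookkeeping of parameter ranges together with the verification that the resulting list is simultaneously complete and irredundant once the $\mathscr I$-action is added. The cleanest way to discharge this is to extend the {\sf Sage} verification used for Corollary~\ref{cor:IneqSubalgsOfA1C}: for each candidate pair $(\mathfrak f_{i.j}^*,\mathfrak f_{k.l}^\star)$ one symbolically solves $\Ad_g\mathscr I(\mathfrak f_{i.j}^*)=\mathfrak f_{k.l}^\star$ for $g\in{\rm SL}_3(\mathbb C)$ and retains exactly those representatives that exhaust the resulting partition. Each individual computation is routine, but ensuring the consistent exclusion of the boundary measure-zero sets in the parameter domain (as already visible in the statement of the corollary) requires care.
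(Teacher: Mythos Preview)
Your overall strategy---mirror the real-case proof of Corollary~\ref{thm:SubalgebrasOfSL3ModAut} by computing $\mathscr I(\mathfrak f_{i.j}^*)$ and identifying its ${\rm SL}_3(\mathbb C)$-class---is exactly the paper's (implicit) approach, and your treatment of the five- and six-dimensional cases and of $\mathfrak f_{3.10}$ is fine. The gap is in your account of how the parameter domains arise.

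You assert that $\mathscr I$ combined with an inner automorphism induces $\gamma\mapsto-\gamma$ on the continuous parameter and that this ``halves'' the range, accounting for the domains of $\hat{\mathfrak f}_{2.5}^{\rho,\varphi}$, $\hat{\mathfrak f}_{3.2}^{\rho,\varphi}$ and $\hat{\mathfrak f}_{4.1}^{\mu'}$. This is correct only for $\hat{\mathfrak f}_{2.5}^{\rho,\varphi}$ (where, as in the real case, $S\cdot\mathscr I(\mathfrak f_{2.7}^\gamma)=\mathfrak f_{2.7}^{-\gamma}$). For $\hat{\mathfrak f}_{3.2}^{\rho,\varphi}$ your explanation cannot be right: the half-plane range $\rho\geqslant0$, $\varphi\in(-\pi/2,\pi/2]$ is \emph{already} the range of $\mathfrak f_{3.3}^{\rho,\varphi}$ in Theorem~\ref{thm:SubalgebrasOfSL3C}, so $\mathscr I$ does not restrict it further; what $\mathscr I$ actually does (transporting the real computation) is send $\mathfrak f_{3.5}^\mu$ to $\mathfrak f_{3.3}^{(3-\mu)/(\mu+1)}$ and $\mathfrak f_{3.2}$ to $\mathfrak f_{3.5}^{-1}$, so the bookkeeping is about mergers between families, not a sign flip within one. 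For $\hat{\mathfrak f}_{4.1}^{\mu'}$ the domain is the set $({\rm B}_2(-1)\cap\{\Re z\geqslant0\})\setminus\{{\rm i}a:a<0\}$, which is not a half-plane and cannot come from $\gamma\mapsto-\gamma$; the real-case computation shows the relevant $\mathscr I$-induced self-map on $\mathfrak f_{4.3}^\gamma$ is the M\"obius involution $\gamma\mapsto(3-\gamma)/(\gamma+1)$, together with the absorption of $\mathfrak f_{4.1}$ into the family at $\gamma=-1$. So while the machinery you invoke is right, the specific parameter analysis needs to be redone family by family rather than attributed to a single uniform sign symmetry.
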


\section{Conclusion}\label{sec:Conclusion}

In this paper, we have reexamined and corrected the classification of subalgebras of
real and complex order-three special linear Lie algebras, building upon the foundational paper~\cite{wint2004a}.
This has been a long-standing and challenging problem with a number of applications in algebra and mathematical physics.

The initial point of the study is the review of the classical approaches for subalgebra classification
that were developed by Patera, Winternitz, Zassenhaus and others in the series of papers
\cite{burd1978a,pate1976b,pate1976c,pate1977b,pate1976a,pate1975a,pate1975b}
for the specific cases of finite-dimensional real and complex Lie algebras.
We have also suggested new perspectives on these methods and rigorously presented their theoretical framework.
As a result, we have suggested the schemes for the classification of subalgebras of a Lie algebra
based on whether it is simple, a direct product, or a semidirect product of its subalgebras
in Sections~\ref{subsec:Simple}, \ref{subsec:Direct} and~\ref{subsec:semidirect}, respectively.

To carry out the classification of subalgebras of a simple Lie algebra $\mathfrak{sl}_3(\mathbb R)$,
we used the approach outlined in Section~\ref{subsec:Simple},
which for the specific case of $\mathfrak g=\mathfrak{sl}_3(\mathbb R)$ required us to go through the following steps:
\begin{itemize}\itemsep=0ex
\item[$(i)$]
using the defining representation~$\mathbb R^3$ of the Lie algebra $\mathfrak{sl}_3(\mathbb R)$,
find all its maximal reducibly and irreducibly embedded subalgebras;
\item[$(ii)$]
for the obtained maximal subalgebras, construct the lists of inequivalent subalgebras
with respect to their corresponding inner automorphism groups;
\item[$(iii)$]
combine the obtained lists modulo the action of the group ${\rm SL}_3(\mathbb R)$.
\end{itemize}
Following~\cite{wint2004a}, in Sections~\ref{subsec:SL3IrredEmbed} and~\ref{subsec:SL3RedEmbed}
it is shown that the Lie algebra~$\mathfrak{sl}_3(\mathbb R)$ contains two irreducibly embedded maximal subalgebras,
namely the special orthogonal Lie algebras $\mathfrak{so}_3(\mathbb R)$ and $\mathfrak{so}_{2,1}(\mathbb R)$,
and two reducibly embedded maximal subalgebras $\mathfrak a_1$ and $\mathfrak a_2$,
each of which is isomorphic to the rank-two affine Lie algebra~$\mathfrak{aff}_2(\mathbb R)$.
In view of step $(ii)$, classifying the subalgebras of~$\mathfrak{aff}_2(\mathbb R)$
is an essential step in the course of solving the primary problem.
This is why we devoted the entire Section~\ref{sec:aff2Classification} to
constructing a complete list of inequivalent subalgebras of~$\mathfrak{aff}_2(\mathbb R)$.
To the best of our knowledge, such list has never been presented in full completeness in the literature before,
cf. \cite[Table 1]{poch2017a}, where the authors  classified only the ``appropriate'' subalgebras modulo a ``weaker'' equivalence
than that one generated by the action of the group ${\rm SL}_3(\mathbb R)$.
The Lie algebra~$\mathfrak{aff}_2(\mathbb R)$ can be viewed as the semidirect product $\mathfrak{gl}_2(\mathbb R)\ltimes\mathbb R^2$.
Therefore, to classify subalgebras of~$\mathfrak{aff}_2(\mathbb R)$,
we applied the approach from Section~\ref{subsec:semidirect} to the Lie algebra $\mathfrak{gl}_2(\mathbb R)\ltimes\mathbb R^2$.
As a result, we present a complete list of inequivalent subalgebras of the rank-two affine Lie algebra~$\mathfrak{aff}_2(\mathbb R)$
in Theorem~\ref{thm:SubalgebrasOfAffineAlg}.
In fact, the classification of subalgebras of the algebra~$\mathfrak{aff}_2(\mathbb R)$ was initiated in~\cite[Section~3.3]{wint2004a},
where its inequivalent ``twisted'' and ``nontwisted'' subalgebras were listed, however this classification was not completed.
Moreover, the validity of these lists is questionable, since to construct them it is essential to have
the correct classification of subalgebras of~$\mathfrak{gl}_2(\mathbb R)$,
which in~\cite[eq.~(3.11)]{wint2004a} was presented with an error and a number of misprints.
This was an additional motivation for us to thoroughly and comprehensively classify the subalgebras of~$\mathfrak{aff}_2(\mathbb R)$.

To complete the classification for~$\mathfrak{sl}_3(\mathbb R)$, in Section~\ref{subsec:MergingSL3} we merged the lists of inequivalent subalgebras
of the maximal subalgebras of~$\mathfrak{sl}_3(\mathbb R)$ modulo the ${\rm SL}_3(\mathbb R)$-equivalence using Lemma~\ref{lem:SL3Decomp}.
The latter lemma applies a variant of the method outlined in Section~\ref{subsec:Simple}, namely Proposition~\ref{prop:ListMerging} and Remark~\ref{rem:ListMerging},
to this particular case of the algebra~$\mathfrak{sl}_3(\mathbb R)$.
The classification results are presented in Theorem~\ref{thm:SubalgebrasOfSL3}.
We have also discussed the differences between the lists in Theorem~\ref{thm:SubalgebrasOfSL3}
and those in~\cite[Table~1]{wint2004a} in Section~\ref{sec:ClassCompare} and Remark~\ref{rem:DiffBetwClassificationsR}.
We found out that the classification in~\cite[Table~1]{wint2004a} has two incorrect families of subalgebras,
one single superfluous subalgebra, one omitted subalgebra and two subalgebras with misprints.

Following the same method as discussed in two preceding paragraphs (or, more specifically, in Section~\ref{sec:sl3Classification}),
we have classified the subalgebras of the order-three complex special linear Lie algebra~$\mathfrak{sl}_3(\mathbb C)$ in Section~\ref{sec:sl3CClassification}
and, as a byproduct, obtained such classification for the rank-two complex affine Lie algebra~$\mathfrak{aff}_2(\mathbb C)$.
We have presented the list of inequivalent subalgebras of the Lie algebras~$\mathfrak{aff}_2(\mathbb C)$ and~$\mathfrak{sl}_3(\mathbb C)$
in Theorems~\ref{thm:SubalgebrasOfAffineAlgC} and~\ref{thm:SubalgebrasOfSL3C}, respectively.
In Section~\ref{sec:ClassCompare2} and Remark~\ref{rem:DiffBetwClassificationsC}, we have thoroughly compared the obtained list
of subalgebras of~$\mathfrak{sl}_3(\mathbb C)$ with those provided in~\cite{doug2016a}.
Our study demonstrates that the classification in~\cite[Table~1]{doug2016a} has one superfluous subalgebra and one subalgebra with a misprint.

Some of the applications of the classifications we obtained and the possible avenues for future research
have been discussed in Section~\ref{sec:Introduction}.

\medskip\par\noindent{\bf Acknowledgements.}
The authors are grateful to Roman Popovych, Dennis The, Vyacheslav Boyko, Maryna Nesterenko,
Dmytro Popovych, Oleksandra Vinnichenko and Alex Bihlo for valuable discussions, references and comments.
The authors extend their gratitude to the anonymous reviewer, whose insightful corrections
significantly improved the quality of the paper.
This research was supported in part by the NAS of Ukraine under the project 0116U003059 (YeYuCh and SDK),
the Simons Foundation (1290607, YeYuCh), the National Research Foundation of Ukraine (2025.07/0405, YeYuCh)
and the AARMS Graduate Scholarship (SDK).

\medskip\par\noindent{{\bf Code availability.}}
The {\sf Sage} notebook used to verify the computational results presented in this paper is publicly available.
A rendered version of the notebook can be viewed by following the link \href{https://github.com/zchapovsky/sage-notebooks/blob/main/notebooks/sl3_subalgebras_verification.ipynb}{sl3\_subalgebras\_verification}.
The complete source code is available in the \href{https://github.com/zchapovsky/sage-notebooks}{GitHub repository}.
The notebook can be executed interactively through Binder 
by following the \href{https://mybinder.org/v2/gh/zchapovsky/sage-notebooks/main}{link}
 which launches a cloud-based {\sf Sage} environment without requiring a local installation.

\appendix

\section{Classifications comparison: real case}\label{sec:ClassCompare}

We thoroughly compare the list of subalgebras of $\mathfrak{sl}_3(\mathbb R)$ provided by Theorem~\ref{thm:SubalgebrasOfSL3}
with that in~\cite[Table~1]{wint2004a}.
The results of this comparison are summarized in Table~\ref{tab:ListsCompare},
where by the bullet symbol~$\bullet$ we mark the subalgebras or subalgebra families listed in~\cite[Table~1]{wint2004a}
with misprints, incorrect constraints on the parameters or conjugate with other subalgebras.
We use the notation~$\rightsquigarrow$ in the Comments column of Table~\ref{tab:ListsCompare}
to indicate that the parameters on the left-hand side of~$\rightsquigarrow$
are incorrect and should be replaced by those on the right-hand side.
We use~$\sim$ to denote ${\rm SL}_3(\mathbb R)$-conjugacy.
The following matrices from~${\rm SL}_3(\mathbb R)$ will be useful in the course of comparison:
\begin{gather*}
Q(\varepsilon):=\exp\big(\varepsilon(E_1+E_3)\big),\quad
S:=\begin{pmatrix}
 0& -1& 0\\
-1&  0& 0\\
 0&  0&-1
\end{pmatrix}\quad\mbox{and}\quad
M_3(0,0):=
\begin{pmatrix}
0  & 0   & -1\\
0  & 1   &  0\\
1  & 0   & 0
\end{pmatrix},
\end{gather*}
where $\varepsilon$ is a real parameter
and the matrix $M_3(0,0)$ comes from Lemma~\ref{lem:SL3Decomp}.

\medskip\par\noindent
$\boldsymbol{1\rm D.}$
For the bases elements of the subalgebras $W_{1.4}=\langle E_3\rangle$ and $W_{1.5}=\langle-E_3+P_2\rangle$,
we have that $-E_3=J_2(0)\oplus J_1(0)$ and $-E_3+P_2=J_3(0)$,
where $J_n(\lambda)$ is the Jordan block of size~$n$ with the eigenvalue $\lambda$.
Thus,~$W_{1.4}$ and~$W_{1.5}$ are equivalent to~$\mathfrak f_{1.1}^0$ and~$\mathfrak f_{1.1}^1$, respectively.

The subalgebra family $\mathfrak f_{1.2}^{\kappa}$ with $\kappa\geqslant0$ corresponds to the subalgebra family $W_{1.2}^{(a)}=\langle\frac12(E_1+E_3)+a D\rangle$.
Acting on the subalgebra $W_{1.3}=\langle 6D-E_3\rangle$ by the matrix $\exp(-\ln(6)E_2)S$, we obtain the subalgebra $\mathfrak f_{1.4}$.

Acting on the subalgebra $\mathfrak f_{1.3}^1$ by the matrix $M_3(0,0)$, we obtain the subalgebra $W_{1.1}^{(\pi/2)}=\langle D\rangle$.
The subalgebra family $\mathfrak f_{1.3}^{\mu}$ with $\mu\in[0,1)$ correspond to the subalgebra family $W_{1.1}^{(\alpha)}=\langle E_2+9\tan\alpha D\rangle$,
where $\alpha\in[0,\arctan\frac19)$.
The parameter restrictions on~$\alpha$ in \cite[Table~1]{wint2004a} are incorrect.

\medskip\par\noindent
$\boldsymbol{2\rm D.}$
We have $\mathfrak f_{2.1}=W_{2.5}$, $\mathfrak f_{2.3}=W_{2.9}$,
$\mathfrak f_{2.6}=W_{2.2}$ and $\mathfrak f_{2.7}=W_{2.1}$.
By applying the matrix~$S$ to the subalgebras $W_{2.3}$, $W_{2.4}$, $W_{2.6}$, $W_{2.7}^{(a)}$, where $a\in\mathbb R$, and~$W_{2.10}$ 
we obtain the subalgebras $\mathfrak f_{2.3}$, $\mathfrak f_{2.2}^0$, $\mathfrak f_{2.2}^1$, $\mathfrak f_{2.8}^{-a}$
and $\mathfrak f_{2.9}$, respectively.
The product $M_3(0,0)Q(\pi)S$ maps the subalgebra $W_{2.8}=\langle E_2-D+P_1,E_3\rangle$
to the subalgebra $\langle E_1+2D,P_2\rangle$.
Further applying  $\exp(-\ln(2)E_2)$ to this subalgebra,
we gauge the multiplier of~$D$ and thus obtain the subalgebra $\mathfrak f_{2.4}$.

\medskip\par\noindent
$\boldsymbol{3\rm D.}$
We have the following equalities:
$W_{3.6}^{(\pi/2)}=\mathfrak f_{3.2}$,
$W_{3.12}=\mathfrak f_{3.11}$,
$W_{3.13}=\mathfrak f_{3.12}$,
$W_{3.14}=\mathfrak f_{3.13}$ and
$W_{3.3}=\mathfrak f_{3.3}^1$.
The family \smash{$W_{3.8}^{(a)}=\langle E_1+E_3+6aD,P_1,P_2\rangle$} with $a\geqslant0$ corresponds to~$\mathfrak f_{3.5}^{6a}$.
The family $\mathfrak f_{3.5}^{\gamma}$, where $\gamma<0$, is omitted in~\cite[Table~1]{wint2004a}.

The subalgebra family $W_{3.6}^{(\alpha)}=\langle E_2+\tan(\alpha)D,P_1,P_2\rangle$ with $\alpha\in[0,\pi/2)$
corresponds to the subalgebra family $\mathfrak f_{3.3}^{\kappa}$ with $\kappa\geqslant0$.

Conjugation by the matrix~$S$ maps the subalgebras $W_{3.1}$, $W_{3.4}$, $W_{3.7}$ and $W_{3.10}$ to
the subalgebras~$\mathfrak f_{3.10}$, $\mathfrak f_{3.1}$, $\mathfrak f_{3.9}$ and $\langle E_1-D,P_1,P_2\rangle$, respectively.
The algebra  $\langle E_1-D,P_1,P_2\rangle$ is equivalent to $\mathfrak f_{3.4}$.

Acting by $S$ on the subalgebra family $W_{3.5}^{(a)}=\langle E_2+6aD,E_3,P_1\rangle$,
where $-\frac12\leqslant a\leqslant\frac16$ and $a\ne-\frac16$,
yields the subalgebra family~$\mathfrak f_{3.7}^{\mu}$ with $\mu\in[-1,3]\setminus\{1\}$.

By acting on~$W_{3.2}$ with~$S$, we obtain the subalgebra $\langle E_1,D,P_2\rangle$,
which is equivalent to~$\mathfrak f_{3.7}^1$ under $M_3(0,0)\in{\rm SL}_3(\mathbb R)$.
Therefore, $W_{3.2}\sim\mathfrak f_{3.7}^1\sim W_{3.5}^{(-1/6)}$.

Conjugation by the matrix $\exp(2P_2)$ maps the subalgebra $W_{3.11}$ to the subalgebra
$\langle E_2+2D,E_3,P_1\rangle$, which is therefore equivalent to the subalgebra \smash{$\mathfrak f_{3.7}^{-1/3}$}
under the action of $M_3(0,0)S$.
The subalgebra \smash{$\mathfrak f_{3.7}^{-1/3}$} is equivalent to $W_{3.2}^{(1/18)}$.

Applying the matrix~$M_3(0,0)$ to the subalgebra family~$W_{3.9}^{(a)}=\langle P_2+R_1+6aD,E_3,P_1\rangle$, where $a\leqslant0$,
we obtain the family $\langle E_1+E_3-E_2+3aD,R_1,R_2\rangle$ with $a\leqslant0$,
which is equivalent to~$\mathfrak f_{3.6}^\gamma$ with $\gamma\geqslant0$.

\medskip\par\noindent
$\boldsymbol{4\rm D.}$
It is clear that $W_{4.1}=\mathfrak f_{4.7}\simeq\mathfrak{gl}_2(\mathbb R)$, $W_{4.3}=\mathfrak f_{4.2}$ and $W_{4.4}=\mathfrak f_{4.3}$.
The matrix~$S$ maps the subalgebra~$W_{4.2}$ to~$\mathfrak f_{4.5}$ and the matrix~$M_3(0,0)$ maps the subalgebra~$W_{4.5}$ to~$\mathfrak f_{4.6}$.

Acting by the matrix $Q(\pi/2)$ on the subalgebra family $W_{4.6}^{(\alpha)}$, we obtain that $W_{4.6}^{(\pi/2)}$
is conjugate with $\mathfrak f_{4.1}$ and the family \smash{$W_{4.6}^{(\alpha)}$}, where $\alpha\in[0,\pi)\setminus\{\pi/2\}$,
correspond to the family $\mathfrak f_{4.4}^\gamma$ with $\gamma\in\mathbb R$.
We also have $W_{4.6}^{(0)}=W_{4.6}^{(\pi)}$, thus the parameter restrictions on~$\alpha$
for the family $W_{4.6}^{(\alpha)}$ in the original text should be modified. 

\medskip\par\noindent
$\boldsymbol{5\rm D}\textbf{ and }\boldsymbol{6\rm D.}$
We have $\mathfrak f_{5.1}=W_{5.3}$, $\mathfrak f_{5.2}=W_{5.1}$ and $\mathfrak f_{6.1}=W_{6.1}$.
The linear spans~$W_{5.2}$ and~$W_{6.2}$ in~\cite[Table~1]{wint2004a} have misprints and
do not form subalgebras of~$\mathfrak {sl}_3(\mathbb R)$.

\section{Classifications comparison: complex case}\label{sec:ClassCompare2}

The classification of all subalgebras of the algebra~$\mathfrak{sl}_3(\mathbb C)$
was first obtained in~\cite{doug2016a}.
As a basis of~$\mathfrak{sl}_3(\mathbb C)$ the canonical Chevalley basis
$\{h_1,h_2,x_1,x_2,x_3,y_1,y_2,y_3\}$ was chosen,
which is related to the basis $\{E_1,E_2,E_3,D,P_1,P_2,R_1,R_2\}$
as follows:
\begin{gather*}
h_1=2E_2,\quad
h_2=3D-E_2,
\\
x_1=-E_3,\quad
x_2=P_2,\quad
x_3=-P_1,
\\
y_1=E_1,\quad
y_2=-R_1,\quad
y_3=-R_2.
\end{gather*}
The complete subalgebra classification was presented in~\cite[Table~1]{doug2016a}
and also spread within the paper in Theorems 5.1, 6.1--6.3, 6.5, 6.7 and 6.8.
We thoroughly compare the list given in Theorem~\ref{thm:SubalgebrasOfSL3C}
with those detailed in~\cite[Table~1]{doug2016a}.
The results of this comparison are summarized in Table~\ref{tab:ListsCompare2}.

\medskip\par\noindent
$\boldsymbol{1\rm D.}$
Acting on the subalgebras~$J^1$ and~$J^2$ from~\cite[Theorem~6.1]{doug2016a} by the matrix~$S$,
we obtain the subalgebras $\mathfrak f_{1.1}^1$ and $\mathfrak f_{1.1}^0$, respectively.
The matrix $\exp(\ln(6)E_2)S$, when applied to the subalgebra~$J^3$, yields the subalgebra~$\mathfrak f_{1.3}$.

The subalgebra family $J^{4,\alpha}=\langle(2-\alpha)E_2+3\alpha D\rangle$
coincides with the family $\mathfrak f_{1.2}^{\mu'}$ when $\alpha\ne2$.
Moreover, as it is stated in~\cite[Theorem~6.1]{doug2016a} $J^{4,\alpha}\simeq J^{4,\beta}$ if and only if
$\beta$ is an elements of the orbit of $\alpha$ under the action of the transformation group $G=\langle f_1,f_2\rangle$,
where $f_1\colon z\mapsto 1/z$, $f_2\colon z\mapsto 1-z$, $z\in\mathbb C$.
The canonical representatives of these orbits are precisely such $\alpha$
satisfying \[\frac{3\alpha}{(2-\alpha)}\in({\rm B}_{2}(-1)\cap\{z\in\mathbb C\mid \Re z\geqslant0\})\setminus\{{\rm i}a\mid a\in\mathbb R_{<0}\}.\]
In the case when $\alpha=2$, the subalgebra $J^{4,2}$ is equivalent under the action of $M_3(0,0)$ to the subalgebra $\mathfrak f_{1.2}^1$.
Nevertheless, this value is excluded from the parameters of the family.

\medskip\par\noindent
$\boldsymbol{2\rm D.}$
The two-dimensional subalgebras of~$\mathfrak{sl}_3(\mathbb C)$ are described in \cite[Theorems~6.2 and~6.3]{doug2016a}.
It is clear that $K_1^5=\mathfrak f_{2.6}$.
Acting on the subalgebras $K_1^1$, $K_1^2$, $K_1^3$, $K_2^1$ and $K_2^{4,\alpha}$
by the matrix~$S$, we obtain the subalgebras
$\mathfrak f_{2.2}^1$, $\mathfrak f_{2.5}$, $\mathfrak f_{2.2}^0$, $\mathfrak f_{2.8}$,
and  $\mathfrak f_{2.7}^{6\alpha+3}$, respectively.
The matrix $\exp\big(\pi/2(R_1+P_2)\big)$ maps the subalgebras $K_1^4$ and $K_2^3$ 
to the subalgebras $\mathfrak f_{2.1}$ and $\mathfrak f_{2.3}$, 
respectively.
The remaining subalgebra $K_2^2=\langle E_3,E_2-D+P_1\rangle\simeq\mathfrak f_{2.4}$
using the arguments from Section~\ref{sec:ClassCompare} case 2D.

\medskip\par\noindent
$\boldsymbol{3\rm D.}$
\cite[Theorem~6.5]{doug2016a} provides the classification of three-dimensional solvable subalgebras of~$\mathfrak{sl}_3(\mathbb C)$.
In notation of this theorem,
acting on the subalgebras $L_2^1$, \smash{$L_{3,-2/9}^1$}, $L_{3,0}^1$, $L_4^1$ and $L_5^1$ by the matrix $S$,
we obtain the subalgebras \smash{$\mathfrak f_{3.5}^{-1}$}, $\mathfrak f_{3.7}$, $\mathfrak f_{3.8}$,
\smash{$\mathfrak f_{3.5}^3$} and $\mathfrak f_{3.1}$, respectively.

The matrix~$S$, when applied to the subalgebra \smash{$L_{3,-1/4}^1$},
yields the subalgebra $\langle E_2-D-\frac13 P_1,E_1,P_2\rangle$,
which is equivalent to~$\mathfrak f_{3.6}$.

The matrix $\exp\big(-\ln(6)E_2\big)\exp\big(\pi/2(R_2-P_1)\big)$ maps the subalgebra \smash{$L_{3,-1/4}^2$}
to the subalgebra~$\mathfrak f_{3.4}$,
while $S\exp\big(\pi/2(R_1+P_2)\big)$ maps the subalgebra $L_2^2$ to the subalgebra $\mathfrak f_{3.2}$.

Acting by the matrix~$S$ on the subalgebra family \smash{$L_{3,\chi(\alpha)}^{1,\alpha}$},
where
\[
\chi(\alpha):=-\frac{(2\alpha-1)(\alpha-2)}{9(\alpha-1)^2},
\]
results in the family~$\langle(\alpha-2)E_2-3\alpha D,E_1,P_2\rangle$.
If $\alpha\ne2$, then $\langle(\alpha-2)E_2-3\alpha D,E_1,P_2\rangle=\mathfrak f_{3.5}^{-3\alpha/(\alpha-2)}$.
Moreover, it is stated in \cite[Theorem~6.5]{doug2016a} that
\[
L_{3,\chi(\alpha)}^{1,\alpha}\simeq L_{3,\chi(\beta)}^{1,\beta}
\quad\mbox{if and only if}\quad
\alpha=\beta\mbox{ or }\alpha=\beta^{-1}.
\]
These constraints agree with those applied to the parameter~$\mu$ in the subalgebra~$\mathfrak f_{3.5}^\mu$.
In the case when $\alpha=2$, the subalgebra $\langle D,E_1,P_2\rangle$
is equivalent under the action of the matrix $M_3(0,0)$ to the subalgebra~$\mathfrak f_{3.5}^1$.
Summing up, the subalgebra family $L_{3,\chi(\alpha)}^{1,\alpha}$ ($\alpha\ne\pm1$) corresponds to the family $\mathfrak f_{3.5}^\mu$
with $\mu\ne-1,3$.

Applying the matrix~$\exp\big(\pi/2(R_1+P_2)\big)$ to the subalgebra family~$L_{3,\chi(\alpha)}^{2,\alpha}$ ($\alpha\ne\pm1$)
gives us subalgebras $\langle-(\alpha+1)E_2+3(\alpha-1)D,P_1,P_2\rangle$
from the family~$\mathfrak f_{3.3}^{\rho,\varphi}$.
According to~\cite[Theorem~6.5]{doug2016a}, the subalgebras
$L_{3,\chi(\alpha)}^{2,\alpha}$ and $L_{3,\chi(\beta)}^{2,\beta}$
are equivalent if and only if $\alpha=\beta$ or $\alpha\beta=1$,
The latter is consistent with the constraints imposed on the parameters $\rho$ and $\varphi$ of $\mathfrak f_{3.3}^{\rho,\varphi}$
as these parameters are chosen to ensure the equivalence between $\langle E_2+\alpha D,P_1,P_2\rangle$
and $\langle E_2-\alpha D,P_1,P_2\rangle$.
Thus, the family~$L_{3,\chi(\alpha)}^{2,\alpha}$ ($\alpha\ne\pm1$) corresponds to the family~$\mathfrak f_{3.3}^{\rho,\varphi}$
with $\rho\ne0$.

Acting on the subalgebras~$A_1$ and~$L_4^2$ by the matrix~$\exp\big(\pi/2(R_1+P_2)\big)$,
we obtain the subalgebras~$\mathfrak f_{3.9}$ and $\mathfrak f_{3.3}^0$, respectively.


\medskip\par\noindent
$\boldsymbol{4\rm D.}$
In notation of \cite[Theorem~6.7]{doug2016a},
acting on the subalgebras
$M_8^1$, $M_{12}^1$, $M_{13,2}^2$ and $M_{14}^1$
by the matrix $S$ we obtain the subalgebras $\mathfrak f_{4.4}$, $\mathfrak f_{4.3}^{-3}$, $\mathfrak f_{4.3}^0$ and $\mathfrak f_{4.3}^1$,
respectively.

Applying the matrix $S$ to the subalgebra family $M_{13,\psi(\alpha)}$ ($\alpha\ne\pm1$),
\[
\psi(\alpha):=\frac{(2\alpha-1)(\alpha-2)}{(\alpha+1)^2},
\]
we obtain the family $\langle E_1,(2\alpha-1)E_2-3D,P_1,P_2\rangle$.
If $\alpha=1/2$, then this subalgebra coincides with $\mathfrak f_{4.1}$,
otherwise it coincides with the family~$\mathfrak f_{4.3}^\gamma$ with $\gamma\ne-3,0,1$.

The matrix $\exp\big(\pi/2(R_1+P_2)\big)$ maps the subalgebras $M_8^2$ and $A_1\oplus J$
to the subalgebras $\mathfrak f_{4.2}$ and $\mathfrak f_{4.5}$, respectively,
while the matrix $\exp\big(\pi/2(R_2-P_1)\big)$ maps $M_{13,0}^2$ to $\mathfrak f_{4.1}$.

It follows from these considerations that $M_{13,0}^2\simeq M_{13,\psi(1/2)}$,
thus one subalgebra among them should be excluded from the final list.

\medskip\par\noindent
$\boldsymbol{5\rm D}\textbf{ and }\boldsymbol{6\rm D.}$
Acting by the matrix $\exp\big(\pi/2(R_1+P_2)\big)$
on the subalgebras $B$, $(A_1\lsemioplus K_1)^1$, $(A_1\lsemioplus K_1)^2$, $(A_1\lsemioplus L_2)^1$ and $(A_1\lsemioplus L_2)^2$
we obtain the subalgebras $\mathfrak f_{5.1}$, $\mathfrak f_{5.2}$, $\mathfrak f_{5.3}$, $\mathfrak f_{6.1}$ and $\mathfrak f_{6.2}$,
respectively.

\begin{landscape}
\begin{table}[!ht]\footnotesize
\begin{center}
\caption{Comparison of the classifications lists: real case}
\label{tab:ListsCompare}${ }$\\[-1ex]
\renewcommand{\arraystretch}{1.6}
\begin{tabular}{|c|l|l|l|}
\hline
\hfil dim & \hfil List in Theorem~\ref{thm:SubalgebrasOfSL3}		    & \hfil List in \cite[Table~1]{wint2004a}      &\hfil Comments on \cite[Table~1]{wint2004a} \\
\hline
$1$      &$\mathfrak f_{1.1}^0=\langle E_1\rangle$              & $W_{1.4}=\langle E_3\rangle$         &             \\
         &$\mathfrak f_{1.1}^1=\langle E_1+P_1\rangle$          & $W_{1.5}=\langle E_3-P_2\rangle$     &             \\
         &$\mathfrak f_{1.2}^\kappa=\langle E_1+E_3+\gamma D\rangle$, $\kappa\geqslant0$ 
            &$ W_{1.2}^{(a)}=\langle\frac12(E_1+E_3)+a D\rangle$, $a\geqslant0$                    &   \\
         &$\mathfrak f_{1.3}^{\mu'}=\langle E_2+\mu' D\rangle$, $\mu'\in[0,1]$ 
            & $\bullet W_{1.1}^{(\alpha)}=\langle\cos(\alpha)E_2+9\sin(\alpha)D\rangle$, $\alpha\in[0,\pi)$ 
                                                                               & $\alpha\in[0,\pi)\rightsquigarrow \alpha\in[0,\arctan\frac19]$  \\
         &$\mathfrak f_{1.4}=\langle E_1+D\rangle$              & $W_{1.3}=\langle D-E_3\rangle$       &             \\
\hline

$2$      &$\mathfrak f_{2.1}  =\langle P_1,P_2\rangle$               & $W_{2.5}=\langle P_1,P_2\rangle$        &               \\
         &$\mathfrak f_{2.2}^0=\langle E_1,P_2\rangle$               & $W_{2.4}=\langle E_3,P_1\rangle$        &              \\
         &$\mathfrak f_{2.2}^1=\langle E_1+P_1,P_2\rangle$           & $W_{2.6}=\langle E_3-P_2,P_1\rangle$    &              \\
         &$\mathfrak f_{2.3}  =\langle E_2+D+P_2,P_1\rangle$         & $W_{2.9}=\langle E_2+D+P_2,P_1\rangle$ &    \\
         &$\mathfrak f_{2.4}  =\langle E_1+D,P_2\rangle$             & $W_{2.8}=\langle E_2-D+P_1,E_3\rangle$         &               \\
         &$\mathfrak f_{2.5}  =\langle E_1,D\rangle$                 & $W_{2.3}=\langle E_3,D\rangle$         &               \\
         &$\mathfrak f_{2.6}  =\langle E_2,D\rangle$                 & $W_{2.2}=\langle E_2,D\rangle$          &               \\
         &$\mathfrak f_{2.7}  =\langle E_1+E_3,D\rangle$             & $W_{2.1}=\langle E_1+E_3,D\rangle$ &              \\
         &$\mathfrak f_{2.8}^\gamma=\langle E_2+\gamma D,E_1\rangle$, $\gamma\in\mathbb R$ 
                                                                     & $W_{2.7}^{(a)}=\langle E_2+a D,E_3\rangle$, $a\in\mathbb R$ &  \\
         &$\mathfrak f_{2.9}=\langle E_2-3D, E_1+P_1\rangle$     & $W_{2.10}=\langle E_2+3D,E_3-P_2\rangle$  & \\

\hline
$3$      &$\mathfrak f_{3.1}=\langle E_1,P_1,P_2\rangle$             & $W_{3.4}=\langle E_3,P_1,P_2\rangle$              & \\
         &$\mathfrak f_{3.2}=\langle D,P_1,P_2\rangle$               & $W_{3.6}^{(\pi/2)}=\langle D,P_1,P_2\rangle$      & \\
         &$\mathfrak f_{3.3}^\kappa=\langle E_2+\kappa D,P_1,P_2\rangle$, $\kappa\in\mathbb R_{\geqslant0}$
             & $\bullet W_{3.6}^{(\alpha)}=\langle\cos(\alpha)E_2+\sin(\alpha)D,P_1,P_2\rangle$, $\alpha\in[0,\pi/2)\setminus\{\pi/4\}$, 
                                                                      &  $W_{3.6}^{(\alpha)}$ can be united with $W_{3.3}$  \\
         &  & $\bullet W_{3.3}=\langle E_2+D,P_1,P_2\rangle=W_{3.6}^{(\pi/4)}$                                            & \\        
         &$\mathfrak f_{3.4}=\langle E_1+D,P_1,P_2\rangle$         & $W_{3.10}=\langle E_3+D,P_1,P_2\rangle$        &  \\  
         &$\mathfrak f_{3.5}^\kappa=\langle E_1\!+\!E_3\!+\!\gamma D,P_1,P_2\rangle$, $\kappa\geqslant0$
                            & $W_{3.8}^{(a)}=\langle E_1+E_3+2a D,P_1,P_2\rangle$, $a\leqslant0$            
                                                       &             \\
         &$\mathfrak f_{3.6}^\kappa=\langle E_1\!+\!E_3\!+\!\gamma D,R_1,R_2\rangle$, $\kappa\geqslant0$
               & $W_{3.9}^{(a)}=\langle P_2\!+\!R_1\!+\!a D,E_3,P_1\rangle$, $a\leqslant0$ 
                                                       &               \\
\end{tabular}
\end{center}
\end{table}
\end{landscape}

\newpage

\begin{landscape}

\centerline{\small Table 1. Comparison of the classifications lists: real case (continuation)}

\vspace{-2mm}

\begin{table}[!ht]\footnotesize
\begin{center}
\renewcommand{\arraystretch}{1.6}
\begin{tabular}{|c|l|l|l|}
\hline
\hfil dim &\hfil List in Theorem~\ref{thm:SubalgebrasOfSL3}		    &\hfil List in \cite[Table~1]{wint2004a}  &\hfil Comments on \cite[Table~1]{wint2004a}  \\
\hline
$3$          &$\mathfrak f_{3.7}^\mu=\langle E_2\!+\!\mu D,E_1,P_2\rangle$, $\mu\in[-1,3]$
             &$W_{3.5}^{(a)}=\langle E_2+6aD,E_3,P_1\rangle$, $-\frac12\leqslant a\leqslant\frac16$, $a\ne-\frac16$  
             & \\
         &            & $W_{3.2}=\langle D,P_1,E_3\rangle$  
                      & $W_{3.2}\sim \mathfrak f_{3.7}^{1}\sim W_{3.5}^{(-1/6)}$ can be united with family $W_{3.5}^{(a)}$ \\
         &            & $\bullet W_{3.11}=\langle E_2+2D+P_2,E_3,P_1\rangle$  
                      & $W_{3.11}\sim\mathfrak f_{3.7}^{-1/3}\sim W_{3.5}^{(1/18)}$  \\
         &$\mathfrak f_{3.8}=\langle E_2-D+P_1,E_1,P_2\rangle$ &  missed  &  \\      
         &$\mathfrak f_{3.9}=\langle E_2-3D, E_1+P_1,P_2\rangle$ & $W_{3.7}=\langle E_2+3D,E_3+P_2,P_1\rangle$  &  \\
         &$\mathfrak f_{3.10}=\langle E_1,E_2,D\rangle$               & $W_{3.1}=\langle E_2,E_3,D\rangle$                & \\
         &$\mathfrak f_{3.11}=\langle E_1,E_2,E_3\rangle$             & $W_{3.12}=\langle E_1,E_2,E_3\rangle$             & \\
         &$\mathfrak f_{3.12}=\langle E_1+E_3,P_1-R_2,P_2+R_1\rangle$& $W_{3.13}=\langle E_1+E_3,P_1-R_2,P_2+R_1\rangle$ & \\
         &$\mathfrak f_{3.13}=\langle E_1+E_3,P_1+R_2,P_2-R_1\rangle$& $W_{3.14}=\langle E_1+E_3,P_1+R_2,P_2-R_1\rangle$ & \\

\hline
$4$      &$\mathfrak f_{4.1}=\langle E_1,D,P_1,P_2\rangle$                   & $W_{4.6}^{(\pi/2)}=\langle D,E_3,P_1,P_2\rangle$      &      \\
         &$\mathfrak f_{4.2}=\langle E_2,D,P_1,P_2\rangle$                   & $W_{4.3}=\langle E_2,D,P_1,P_2\rangle$                &      \\
         &$\mathfrak f_{4.3}=\langle E_1+E_3,D,P_1,P_2\rangle$               & $W_{4.4}=\langle E_1+E_3,D,P_1,P_2\rangle$            &      \\
         &$\mathfrak f_{4.4}^\gamma=\langle E_2+\gamma D,E_1,P_1,P_2\rangle$, $\gamma\in\mathbb R$ 
              & $\bullet W_{4.6}^{(\alpha)}=\langle\cos(\alpha)E_2+\sin(\alpha)D,E_3,P_1,P_2\rangle$, $\alpha\in[0,\pi]\setminus\{\pi/2\}$      
              & $\alpha\in[0,\pi]\setminus\{\pi/2\}\rightsquigarrow\alpha\in[0,\pi)\setminus\{\pi/2\}$             \\
         &$\mathfrak f_{4.5}=\langle E_1,E_2,D,P_2\rangle$                   & $W_{4.2}=\langle E_2,E_3,D,P_1\rangle$                &       \\         
         &$\mathfrak f_{4.6}=\langle E_1+E_3,D,R_1,R_2\rangle$               & $W_{4.5}=\langle P_2+R_1,E_2+D,E_3,P_1\rangle$ &      \\         
         &$\mathfrak f_{4.7}=\langle E_1,E_2,E_3,D\rangle$                   & $W_{4.1}=\langle E_1,E_2,E_3,D\rangle$                &       \\ 

\hline
$5$      &$\mathfrak f_{5.1}=\langle E_1,E_2,D,P_1,P_2\rangle$               & $W_{5.3}=\langle E_2,D,E_3,P_1,P_2\rangle$            &\\
         &$\mathfrak f_{5.2}=\langle E_1,E_2,E_3,P_1,P_2\rangle$             & $W_{5.1}=\langle E_1,E_2,E_3,P_1,P_2\rangle$          &\\
         &$\mathfrak f_{5.3}=\langle E_1,E_2,E_3,R_1,R_2\rangle$             & $\bullet W_{5.2}=\langle E_2-\frac13D,E_3,P_2,R_2,P_2\rangle$   & $W_{5.2}$ has a misprint\\

\hline
$6$      &$\mathfrak f_{6.1}=\langle E_1,E_2,E_3,D,P_1,P_2\rangle$           & $W_{6.1}=\langle E_1,E_2,E_3,D,P_1,P_2\rangle$       &\\
         &$\mathfrak f_{6.2}=\langle E_1,E_2,E_3,D,R_1,R_2\rangle$           & $\bullet W_{6.2}=\langle E_2,E_3,P_2,D,R_1,P_2\rangle$  &$W_{6.2}$ has a misprint\\
\hline
\end{tabular}
\end{center}

Here the bullet symbol~$\bullet$ indicates an incorrect subalgebra or subalgebra family from~\cite[Table~1]{wint2004a},
$\rightsquigarrow$ means that the parameters on the left-hand side should be replaced by those on the right-hand side,
$\sim$ denotes ${\rm SL}_3(\mathbb R)$-conjugacy.
\end{table}
\end{landscape}

\begin{landscape}
\begin{table}[!ht]\footnotesize
\begin{center}
\caption{Comparison of the classifications lists: complex case}
\label{tab:ListsCompare2}${ }$\\[-1ex]
\renewcommand{\arraystretch}{1.6}
\begin{tabular}{|c|l|l|l|}
\hline
\hfil dim & \hfil List in Theorem~\ref{thm:SubalgebrasOfSL3C}		    & \hfil List in \cite[Table~1]{doug2016a}      &\hfil Comments on \cite[Table~1]{doug2016a} \\
\hline
$1$      &$\mathfrak f_{1.1}^0=\langle E_1\rangle$               & $J^2=\langle E_3\rangle$                                   &    \\
         &$\mathfrak f_{1.1}^1=\langle E_1+P_1\rangle$           & $J^1=\langle P_2-E_3\rangle$                               &    \\
         &$\mathfrak f_{1.2}^{\mu'}  =\langle E_2+\mu'D\rangle$, & $J^{4,\alpha}=\langle2E_2+\alpha(3D-E_2)\rangle$   &    \\
         &$\mathfrak f_{1.3}         =\langle E_1+D\rangle$,     & $J^3=\langle 6D-E_3\rangle$                                &    \\
\hline

$2$      &$\mathfrak f_{2.1}  =\langle P_1,P_2\rangle$               & $K_1^4=\langle E_3,R_1\rangle$        &               \\
         &$\mathfrak f_{2.2}^0=\langle E_1,P_2\rangle$               & $K_1^3=\langle E_3,P_1\rangle$        &              \\
         &$\mathfrak f_{2.2}^1=\langle E_1+P_1,P_2\rangle$           & $K_1^1=\langle P_2-E_3,P_1\rangle$    &              \\
         &$\mathfrak f_{2.3}  =\langle E_2+D+P_2,P_1\rangle$         & $K_2^3=\langle E_3,E_2+D+R_1\rangle$  &              \\
         &$\mathfrak f_{2.4}  =\langle E_1+D,P_2\rangle$             & $\bullet K_2^2=\langle E_3,D-E_2-P_1\rangle$  
                &  corrected misprint:~$K_2^2=\langle x_1,-\frac13 h_1+\frac13 h_2+x_3\rangle$              \\
         &$\mathfrak f_{2.5}  =\langle E_1,D\rangle$                 & $K_1^2=\langle E_3,D\rangle$          &               \\
         &$\mathfrak f_{2.6}  =\langle E_2,D\rangle$                 & $K_1^5=\langle E_2,D\rangle$                 &               \\
         &$\mathfrak f_{2.7}^\gamma=\langle E_2+\gamma D,E_1\rangle$ & $K_2^{4,\alpha}=\langle E_3,E_2-(6\alpha+3)D\rangle$     &              \\
         &$\mathfrak f_{2.8}=\langle E_2-3D, E_1+P_1\rangle$     & $K_2^1=\langle E_3-P_2,E_2+3D\rangle$  & \\

\hline
$3$      &$\mathfrak f_{3.1}=\langle E_1,P_1,P_2\rangle$                & $L_5^1=\langle E_3,P_1,P_2\rangle$              & \\
         &$\mathfrak f_{3.2}=\langle D,P_1,P_2\rangle$                  & $L_2^2=\langle E_3,R_1,E_2-D\rangle$     & \\
         &$\mathfrak f_{3.3}^{\rho,\varphi}=\langle E_2+\rho{\rm e}^{\rm i\varphi} D,P_1,P_2\rangle$
             & $L_{3,\chi(\alpha)}^{2,\alpha}=\langle E_3, R_1,(\alpha-2)E_2-3\alpha D\rangle$, $\alpha\ne\pm1$ &    \\
         &  & $L_4^2=\langle E_3,R_1,E_2+3D\rangle\simeq\mathfrak f_{3.3}^0$                                             & \\ 
         &$\mathfrak f_{3.4}=\langle E_1+D,P_1,P_2\rangle$              & $L_{3,-1/4}^2=\langle E_1,R_2,E_2+D+\frac13P_2\rangle$               &  \\
         &$\mathfrak f_{3.5}^\mu=\langle E_2+\mu D,E_1,P_2\rangle$      
                  & $L_{3,\chi(\alpha)}^{1,\alpha}=\langle E_3,P_1,(\alpha-2)E_2+3\alpha D\rangle$, $\alpha\ne\pm1$ &  \\
         &  & $L_2^1=\langle E_3,P_1,E_2+D\rangle\simeq\mathfrak f_{3.5}^{-1}$                                         & \\
         &  & $L_4^1=\langle E_3,P_1,E_2-3D\rangle\simeq\mathfrak f_{3.5}^3$                                             & \\
         &$\mathfrak f_{3.6}=\langle E_2-D+P_1,E_1,P_2\rangle$
                            & $L_{3,-1/4}^1=\langle E_3,P_1,E_2+D+\frac13P_2\rangle$               &               \\
\end{tabular}
\end{center}
\end{table}
\end{landscape}

\newpage

\begin{landscape}

\centerline{\small Table 2. Comparison of the classifications lists: complex case (continuation)}

\vspace{-2mm}

\begin{table}[!ht]\footnotesize
\begin{center}
\renewcommand{\arraystretch}{1.6}
\begin{tabular}{|c|l|l|l|}
\hline
\hfil dim &\hfil List in Theorem~\ref{thm:SubalgebrasOfSL3C}		  &\hfil List in \cite[Table~1]{doug2016a}        &\hfil Comments on \cite[Table~1]{doug2016a}  \\
\hline
$3$      &$\mathfrak f_{3.7}=\langle E_2-3D, E_1+P_1,P_2\rangle$  &$L_{3,-2/9}^1=\langle E_3-P_2, P_1,E_2+3D\rangle$   &  \\
         &$\mathfrak f_{3.8}=\langle E_1,E_2,D\rangle$                &$L_{3,0}^1=\langle E_3,E_2,D\rangle$                &  \\
         &$\mathfrak f_{3.9}       =\langle E_1,E_2,E_3\rangle$       &$A_1=\langle P_1,R_2,E_2+3D\rangle$                 &  \\ 
         &$\mathfrak f_{3.10}=\langle E_1+E_3,P_1-R_2,P_2+R_1\rangle$ &$A^2_1=\langle E_3-P_2,E_1-R_1,E_2+3D\rangle$               &\\   
\hline
$4$      &$\mathfrak f_{4.1}=\langle E_1,D,P_1,P_2\rangle$            & $M_{13,\psi(1/2)}=\langle E_3,P_2,P_1,D\rangle$   
                     &   \\
         &                                                          & $\bullet M_{13,0}=\langle E_1,E_2+D,P_2,R_2\rangle$   
                     & $M^2_{13,0}\simeq M_{13,\psi(1/2)}$  \\
         &$\mathfrak f_{4.2}=\langle E_2,D,P_1,P_2\rangle$            & $M_8^2=\langle E_3,R_1,E_2,D\rangle$                &  \\
         &$\mathfrak f_{4.3}^\gamma=\langle E_2+\gamma D,E_1,P_1,P_2\rangle$ 
                   & $M_{13,\psi(\alpha)}=\langle E_3,P_2,P_1,(2\alpha-1)E_2+3D\rangle$, $\alpha\ne1/2$             &  \\
         &$\mathfrak f_{4.4}=\langle E_1,E_2,D,P_2\rangle$             & $M_8^1=\langle E_3,P_1,E_2,D\rangle$               &  \\
         &$\mathfrak f_{4.5}=\langle E_1,E_2,E_3,D\rangle$             & $A_1\oplus J=\langle P_1,R_2,E_2,D\rangle$         &  \\         
\hline
$5$      &$\mathfrak f_{5.1}=\langle E_1,E_2,D,P_1,P_2\rangle$               & $B=\langle E_3,P_2,P_1,E_2,D\rangle$                             & \\
         &$\mathfrak f_{5.2}=\langle E_1,E_2,E_3,P_1,P_2\rangle$             & $(A_1\lsemioplus K_1)^1=\langle P_1,R_2,E_3,R_1,E_2+3D\rangle$   & \\
         &$\mathfrak f_{5.3}=\langle E_1,E_2,E_3,R_1,R_2\rangle$             & $(A_1\lsemioplus K_1)^2=\langle P_1,R_2,P_2,E_1,E_2+3D\rangle$   & \\

\hline
$6$      &$\mathfrak f_{6.1}=\langle E_1,E_2,E_3,D,P_1,P_2\rangle$           & $(A_1\lsemioplus L_2)^1=\langle P_1,R_2,E_3,R_1,E_2,D\rangle$    &\\
         &$\mathfrak f_{6.2}=\langle E_1,E_2,E_3,D,R_1,R_2\rangle$           & $(A_1\lsemioplus L_2)^2=\langle P_1,R_2,P_2,E_1,E_2,D\rangle$    &\\
\hline
\end{tabular}

Here the bullet symbol~$\bullet$ indicates an incorrect subalgebra from~\cite[Table~1]{doug2016a},
$\sim$ denotes ${\rm SL}_3(\mathbb C)$-conjugacy.
\end{center}
\end{table}
\end{landscape}


\begin{thebibliography}{10}\footnotesize\itemsep=0.ex
	
\bibitem{blum2010A}
Bluman G.W., Cheviakov A.F. and Anco S.C.,
{\it Applications of symmetry methods to partial differential equations}, Springer, New York, 2010.

\bibitem{bour1975LiePart1}
Bourbaki N., 
{\it Lie Groups and Lie Algebras. Chapters 1--3},
Springer, Berlin, 1989. 

\bibitem{boyk2021a}
Boyko V.M., Lokaziuk O.V. and Popovych R.O.,
Admissible transformations and Lie symmetries of linear systems of second-order ordinary differential equations,
{\it J. Math. Anal. Appl.} {\bf 539} (2024), 128543, 50~pp.,
arXiv:2105.05139.

\bibitem{burd1978a}
Burdet G., Patera J., Perrin M. and Winternitz P.,
The optical group and its subgroups,
{\it J.~Math. Phys.} {\bf 19} (1978), 1758--1780.

\bibitem{camp2023a}
Campoamor-Stursberg R., Latini D., Marquette I. and Zhang Y.-Z.,
Polynomial algebras from commutants: Classical and Quantum aspects of $\mathcal A_3$,
{\it J.~Phys. Conf. Ser.} {\bf 2667} (2023), 012037.

\bibitem{doug2016a}
Douglas A. and Repka J.,
The subalgebras of $A_2$.
{\it J.~Pure Appl. Algebra} {\bf 220} (2016), 2389--2413.

\bibitem{hriv2006a}
Hrivn\'ak J., Novotn\'y P., Patera J. and Tolar J.,
Graded contractions of the Pauli graded $\mathfrak{sl}_3(\mathbb C)$,
{\it Linear Algebra Appl.} {\bf 418} (2006), 498--550.

\bibitem{hydo2000A}
Hydon P.E.,
{\it Symmetry methods for differential equations. A beginner's guide},
Cambridge University Press, Cambridge, 2000.

\bibitem{inon1953a}
In\"on\"u E. and Wigner E.P.,
On the contraction of groups and their representations,
{\it Proc. Nat. Acad. Sci. U.S.A.} {\bf 39} (1953), 510--524.

\bibitem{inon1954a}
In\"on\"u E. and Wigner E.P.,
On a particular type of convergence to a singular matrix,
{\it Proc. Nat. Acad. Sci. USA} {\bf 40} (1954), 119--121.

\bibitem{jaco1972A}
Jacobson N.,
{\it Lie algebras},
Dover Publications, Inc., New York, 1979.

\bibitem{khab1990a}
Khabirov S.V.,
Nonisotropic one-dimensional gas motions constructed by means of the contact group of the nonhomogeneous Monge--Amp\'ere equation,
{\it Math. USSR Sb.} {\bf 71} (1992), 447--462. 

\bibitem{laza2021a}
Lazarovich N. and Leitner A.,
Local limits of connected subgroups of~${\rm SL}_3(\mathbb R)$,
{\it C.~R.~Math. Acad. Sci. Paris} {\bf 359} (2021), 363--376.

\bibitem{popo2024a}
Maltseva D.S. and Popovych R.O.,
Point-symmetry pseudogroup, Lie reductions and exact solutions of Boiti--Leon--Pempinelli system,
{\it Phys.~D} {\bf 460} (2024), 134081.

\bibitem{marq2023a}
Marquette I., Zhang J. and Zhang Y.-Z.,
Algebraic structures and Hamiltonians from the equivalence classes of 2D conformal algebras,
{\it Ann. Physics} {\bf 477} (2025), 169998, arXiv:2309.11030.

\bibitem{mura1952a}
Murakami S.,
On the automorphisms of a real semisimple Lie algebra,
{\it J.~Math. Soc. Japan} {\bf 4} (1952), 103--133.

\bibitem{nest2006a}
Nesterenko M. and Popovych R.O.,
Contractions of low-dimensional Lie algebras,
{\it J.~Math. Phys} {\bf 47} (2006), 123515, arXiv:math-ph/0608018.

\bibitem{niki2001a}
Nikitin A.G. and Wiltshire J.,
Systems of reaction diffusion equations and their symmetry properties,
{\it J.~Math. Phys.} {\bf 43} (2001), 1666--1688.

\bibitem{olve1993A}
Olver P.J.,
{\it Application of Lie groups to differential equations},
Springer, New York, 1993.

\bibitem{pate1976b}
Patera J., Sharp R.T., Winternitz P. and Zassenhaus H.,
Subgroups of the Poincar\'e group and their invariants,
{\it J.~Math. Phys.} {\bf 17} (1976), 977--985.

\bibitem{pate1976c}
Patera J., Sharp R.T., Winternitz P. and Zassenhaus H.,
Invariants of real low dimension Lie algebras,
{\it J.~Math. Phys.} {\bf 17} (1976), 986--994.

\bibitem{pate1977b}
Patera J., Sharp R.T., Winternitz P. and Zassenhaus H.,
Continuous subgroups of the fundamental groups of physics. III. The de Sitter groups,
{\it J.~Math. Phys.} {\bf 18} (1977), 2259--2288.

\bibitem{pate1977a}
Patera J. and Winternitz P.,
Subalgebras of real three and four-dimensional Lie algebras,
{\it J.~Math. Phys.} {\bf 18} (1977), 1449--1455.

\bibitem{pate1976a}
Patera J., Winternitz P., Sharp R.T. and Zassenhaus H.,
Subgroups of the similitude group of three-dimensional Minkowski space,
{\it Canad. J.~Phys.} {\bf 54} (1976), 950--961.

\bibitem{pate1975a}
Patera J., Winternitz P. and Zassenhaus H.,
Continuous subgroups of the fundamental groups of physics. I. General method and the Poincar\'e group,
{\it J.~Math. Phys.} {\bf 16} (1975), 1597--1614.

\bibitem{pate1975b}
Patera J., Winternitz P. and Zassenhaus H.,
Continuous subgroups of the fundamental groups of physics. II. The similitude group,
{\it J.~Math. Phys.} {\bf 16} (1975), 1615--1624.
	
\bibitem{poch2017a}
Pocheketa O. and Popovych R.O.,
Extended symmetry analysis of generalized Burgers equations,
{\it J.~Math. Phys.} {\bf 58} (2017), 101501.

\bibitem{popo2012a}
Popovych R.O. and Bihlo A.,
Symmetry preserving parameterization schemes,
{\it J.~Math. Phys.} {\bf 53} (2012), 073102, arXiv:1010.3010.

\bibitem{popo2003a}
Popovych R.O., Boyko V.M., Nesterenko M.O. and Lutfullin M.W.,
Realizations of real low-dimensional Lie algebras,
arXiv:math-ph/0301029v7 (2005) (extended version of {\it J.~Phys.~A} {\bf 36} (2003), 7337--7360).

\bibitem{snob2014A}
\v Snobl, L. and Winternitz P.,
{\it Classification and identification of Lie algebras},
CRM Monogr. Ser.,~33
American Mathematical Society, Providence, RI, 2014.

\bibitem{udri1999a}
Udri\c{s}te C. and B\^{i}l\u{a} N.,
Symmetry group of \c{T}i\c{t}eica surfaces PDE,
{\it Balkan J.~Geom. Appl.} {\bf 4} (1999), 123--140.  

\bibitem{wint2003a}
Winternitz P.,
Lie groups, singularities and solutions of nonlinear partial differential equations.
{\it Direct and inverse methods in nonlinear evolution equations},
{\it Lecture Notes in Phys.} {\bf 632}
Springer, Berlin, 2003, pp.~223--273.

\bibitem{wint2004a}
Winternitz P., 
Subalgebras of Lie algebras. Example of ${\rm sl}(3,{\mathbb R})$,
{\it Symmetry in physics}, {\it CRM Proc. Lecture Notes} {\bf 34}, American Mathematical Society, Providence, RI, 2004, pp.~215--227.



\end{thebibliography}
\end{document}